\documentclass[acmsmall]{acmart}

\usepackage{calc}

\usepackage{amsthm,amsmath,stmaryrd}
\usepackage[aboveskip=1pt,belowskip=2pt]{subcaption}
\usepackage{thmtools, thm-restate}
\usepackage[conf={no link to proof,restate,text proof translated=}]{proof-at-the-end}

\usepackage{tikz}
\usepackage{pgfplots}
\usetikzlibrary{calc,matrix,arrows,automata,positioning,chains,cd,arrows.meta,shapes.geometric,decorations.pathreplacing}

\theoremstyle{definition}
\newtheorem{definition}{Definition}
\newtheorem{example}{Example}[section]
\newtheorem{lemma}{Lemma}

\newtheorem{proposition}{Proposition}
\newtheorem{corollary}{Corollary}
\newtheorem{theorem}{Theorem}
\newtheorem*{theorem*}{Theorem}

\newtheorem{observation}{Observation}

\newcommand{\tuple}[1]{\left\langle #1 \right\rangle}     % n-tuples
\newcommand{\sem}[1]{\left\llbracket #1 \right\rrbracket} % Semantic brackets
\newcommand{\set}[1]{\left\lbrace #1 \right\rbrace}       % sets

\newcommand{\false}{\textit{false}}
\newcommand{\defeq}{\triangleq}

\definecolor{black}{RGB}{0,0,0}
\definecolor{green}{RGB}{25,160,80}
\definecolor{red}{RGB}{192,57,43}
\definecolor{blue}{RGB}{41,128,185}
\definecolor{orange}{RGB}{255,99,0}
\definecolor{gray}{RGB}{75,75,75}
\definecolor{lightgrey}{RGB}{240,240,240}
\definecolor{purple}{RGB}{155, 89, 182}
\definecolor{darkgrey}{RGB}{52, 73, 94}

\usepackage{algorithm2e}
\SetKwInOut{Input}{Input}
\SetKwInOut{Output}{Output}
\SetKwRepeat{Do}{do}{while}

\SetCommentSty{mycommfont}
\SetKwProg{Fn}{Subroutine}{ begin}{end}
\SetKwComment{Inv}{\rm\textbf{invariant:} }{}

\usepackage{listings}
\lstdefinestyle{base}{
  language=C,
  emptylines=1,
  breaklines=true,
  basicstyle=\color{black}\fontfamily{pzc}\selectfont\tt,
  commentstyle=\color{gray}\rm\itshape,
  keywordstyle=\rm\bfseries,
  identifierstyle=\rm\itshape,
  escapechar=|,
  morekeywords=[1]{then,do},
  morekeywords=[2]{assert},
  morekeywords=[3]{requires,ensures},
  keywordstyle	= [2]\color{red}\bf,
  keywordstyle	= [3]\bf\color{gray},
  numberstyle=\footnotesize\color{gray},
  moredelim=**[is][\bf\color{green}]{@!}{!@},
  moredelim=**[is][\bf\color{red}]{@?}{?@},
  literate=
    {<=}{{$\leq$}}1
    {>=}{{$\geq$}}1
    {!}{{$\neg$}}1
    {!=}{{$\neq$}}1
    {||}{{$\lor$}}1
    {&&}{{$\land$}}1
    {->}{{$\rightarrow$}}1
    {_1}{$_{1}$}2
    {_2}{$_{2}$}2
    {_3}{$_{3}$}2
}
\newcommand{\cinline}[1]{\mbox{\lstinline[style=base,mathescape]{#1}}}

\usepackage{mathpartir} % Inference rules
\usepackage{booktabs}   % Tables
\usepackage{hyperref}   % hyperlinks
\newcommand{\src}{\textit{src}}
\newcommand{\dst}{\textit{dst}}

\newcommand{\seq}{\cdot}
\newcommand{\Ob}[1]{\textit{Ob}(\mathbf{#1})}
\newcommand{\proj}[1]{\phi_{\mathbf{#1}}}
\newcommand{\invimg}[2]{#1^{-1}(#2)}

\newcommand{\cat}[1]{\mathbf{#1}}

\newcommand{\functor}[1]{{#1}}
\newcommand{\lifted}[1]{\overline{#1}}

\DeclareFontFamily{U}  {MnSymbolC}{}
\DeclareFontShape{U}{MnSymbolC}{m}{n}{
    <-6>  MnSymbolC5
   <6-7>  MnSymbolC6
   <7-8>  MnSymbolC7
   <8-9>  MnSymbolC8
   <9-10> MnSymbolC9
  <10-12> MnSymbolC10
  <12->   MnSymbolC12}{}
\DeclareSymbolFont{MnSyC}{U}{MnSymbolC}{m}{n}
\DeclareMathSymbol{\righthalfcup}{\mathrel}{MnSyC}{184}
\NewCommandCopy\oldexample\example
\NewCommandCopy\endoldexample\endexample
\renewenvironment{example}[1][]{%
  \if\relax\detokenize{#1}\relax%
    \oldexample%
  \else%
    \oldexample[#1]%
  \fi%
  \pushQED{\qed}%
 }{%
   \popQED%
 \endoldexample%
}

\newcommand{\dom}{\text{dom}}

\newcommand{\tr}[3]{#2 \rightarrow_{#1} #3}

\renewcommand{\vec}[1]{\mathbf{#1}}

\newcommand{\TF}{\mathbf{TF}}

\newcommand{\zak}[1]{}
\newcommand{\zsw}[1]{}

\newcommand{\barr}[3]{#1 \mathrel{\smash{\Vdash\!\!\xrightarrow{\raisebox{-0.25ex}[0ex][0ex]{$\scriptstyle#2$}}}} #3}

\newcommand{\bcarr}[3]{#1 \mathrel{\smash{\Vdash\!\!\xrightarrow{\raisebox{-0.4ex}[0ex][0ex]{$\scriptstyle#2$}}\mathrel{\mkern-14mu}\rightarrow}} #3 }

\acmJournal{PACMPL}
\acmVolume{1}
\acmNumber{CONF} % CONF = POPL or ICFP or OOPSLA
\acmArticle{1}
\acmYear{2018}
\acmMonth{1}
\acmDOI{} % \acmDOI{10.1145/nnnnnnn.nnnnnnn}
\startPage{1}
\setcopyright{none}
\acmConference[Conference acronym 'XX]{Make sure to enter the correct
  conference title from your rights confirmation emai}{June 03--05,
  2018}{Woodstock, NY}
\acmPrice{15.00}
\acmISBN{978-1-4503-XXXX-X/18/06}

\ccsdesc[500]{Theory of computation~Program analysis}
\ccsdesc[500]{Software and its engineering~Automated static analysis}
\keywords{Program analysis, Program transformation, Robustness, Category theory}

\begin{document}

\title{A Categorical Basis for Robust Program Analysis}

\author{Zachary Kincaid}
\email{zkincaid@cs.princeton.edu}
\orcid{0000-0002-7294-9165}
\affiliation{%
  \institution{Princeton University}
  \city{Princeton}
  \state{New Jersey}
  \country{USA}
}

\author{Shaowei Zhu}
\email{shaoweiz@cs.princeton.edu}
\orcid{0000-0002-0335-1151}
\affiliation{%
  \institution{Princeton University}
  \city{Princeton}
  \state{New Jersey}
  \country{USA}
}

\begin{abstract}
Users of program analyses expect that results change predictably in response to changes in their programs, but many analyses fail to provide such robustness.
This paper introduces a theoretical framework that provides a unified language to articulate robustness properties. By modeling programs and their properties as objects in a category, diverse notions of robustness—from variable renaming to semantic refinement and structural transformation—can be characterized as structure-preserving functors.

Beyond formulating the meaning of robustness, this paper provides methods for achieving it.  The first is a general recipe for designing robust analyses, by lifting a sound and robust analysis from a restricted (sub-Turing) model of computation to a sound and robust analysis for general programs.
This recipe demystifies the design of several existing loop summarization and termination analyses by showing they are instantiations of this general recipe, and furthermore elucidates their robustness properties.  The second is a characterization of a sense in which an algebraic program analysis is robust, provided that it is comprised of robust operators.  In particular, we show that such analyses behave predictably under common refactoring patterns, such as variable renaming and loop unrolling. 
\end{abstract}

% Tighten float spacing to fit 22-page limit
\setlength{\abovecaptionskip}{2pt}
\setlength{\belowcaptionskip}{0pt}
\setlength{\textfloatsep}{6pt plus 2pt minus 2pt}
\setlength{\floatsep}{6pt plus 2pt minus 2pt}
\setlength{\intextsep}{6pt plus 2pt minus 2pt}

\maketitle

\section{Introduction} \label{sec:introduction}

The goal of program analysis is to predict how a program's state evolves over time.  However, programs are not static artifacts---they too change.  This paper is motivated by the question of how we can predict how a \textit{program analysis} will behave in response to such changes.

Abstractly, we might think of a program analysis as a function mapping a program $P$ to some approximation $A(P)$ of $P$'s behavior (for instance, $A(P)$ might be a mapping from the loops of $P$ to associated invariants, or a yes/no decision about whether $P$ terminates).
Suppose that a given program $P$ is transformed in some particular way to yield a program $P'$.  What can be said about the relationship between $A(P)$ and $A(P')$?  Reasonable expectations might be:
\begin{itemize}
\item \textit{(Invariance under renaming)}: If  $P$ and $P'$ are equivalent modulo renaming of variables, $A(P)$ and $A(P')$ are also equivalent modulo renaming.
\item \textit{(Locality)}: $A(P)$ and $A(P')$ differ only in the places that $P$ and $P'$ do (for instance, one might expect that changing one function does not affect the information computed for another unrelated function).
\item \textit{(Monotonicity)}: If $P'$ is a refinement of $P$ (e.g., $P'$ is obtained from $P$ by inserting assumed invariants) then $A(P')$ should be at least as precise as $A(P)$.
\end{itemize}

Unfortunately, many program analysis tools do not provide guarantees on ``robustness'', which can make them difficult to use and pose challenges in their deployments in production environments.
For instance, an experience report on the commercial static analyzer Coverity notes that users expect consistent results when their code changes, even across tool versions, and discusses how Coverity eschews techniques like SMT solving, randomization, and use of time-outs to avoid unstable bug reports \cite{Coverity}.  Several reports lament failure of locality, or \textit{butterfly effects}---``a minor modification in
one part of the program source causes changes in the outcome of the verification
in other, unchanged and unrelated parts of the program'' \cite{CAV:LP2016}.  As \citet{UV:LM2010} put it succinctly: ``tools can understand our programs, but we cannot understand our tools.''

The main contributions of this paper are:
\begin{itemize}
\item (Section~\ref{sec:robustness}) We develop a theoretical framework for understanding robustness.  The high-level idea is that a program analysis $A$ should be considered to be robust with respect to a class of relationships $R$ (e.g., equivalence modulo renaming variables) exactly when, should some relationship $r \in R$ hold between two programs $P$ and $P'$, then $r$ also holds between $A(P)$ and $A(P')$.  We formalize this intuitive notion using the language of category theory: we think of both the space of programs and analysis results as forming categories (in which the arrows represent a class of relationships of interest), and define an analysis to be robust exactly when it preserves this categorical structure.

\item (Section~\ref{sec:lifting}) We present a general recipe for designing robust program analyses. 
 The
idea is to leverage weak (that is, not Turing-complete) models of computation
that have tractable dynamics.  To analyze a program we (1) extract a
\textit{model} of the program of some restricted form, (2) analyze the 
behavior of the model, and then (3) translate the  model analysis to an
\textit{approximate} analysis for the original program. We show that, supposing certain properties of the model extraction and result translation process,
this recipe preserves robustness properties of the model analysis.  Furthermore, we identify instances of this recipe in the literature, and thereby show that they are robust.
\item (Section~\ref{sec:apa}) We investigate robustness in the context of algebraic program analysis, driven by the question of
how robustness guarantees of the individual operators of an algebra translate to high-level guarantees of the whole analysis.  In particular, we
show that program analyses following the recipe of Section~\ref{sec:lifting} are robust with respect to \textit{loop-preserving stuttering simulations}.  This class of relationships accounts for both non-structural transformations (e.g., renaming a variable) and structural transformations (e.g., unrolling a loop).
\end{itemize}

%%% Local Variables:
%%% TeX-master: "main"
%%% End:

\subsection{Related Work}

\paragraph{Program analysis frameworks}

Abstract interpretation provides a uniform framework for \textit{sound} program analysis \cite{POPL:CC1977}.
It defines what it means for a program analysis to be sound in terms of an approximation relationship between a ``concrete'' semantics defining the program behavior and an ``abstract'' semantics defining the program analysis.  It also provides a general recipe for designing sound program analyses---the job of the analysis designer is to develop a lattice of program properties, transfer functions that ``lift'' program commands to act on these properties, and a widening operator to extrapolate the limit of a sequence of properties.

The goal of our paper is to do for \textit{robustness} what abstract interpretation does for \textit{soundness}.
While partial orders are sufficient to relate possible invariants for a single program ($X \sqsubseteq Y$ means that $Y$ approximates $X$), we require richer structure to relate possible invariants for different programs.
In our framework, the ``abstract domain'' forms a category.  The arrows of a category can be thought of as a generalization of the approximation order in abstract interpretation:
an arrow $f : X \rightarrow Y$ means that $Y$ approximates $X$ \textit{modulo the transformation $f$}.    For instance, if $f$ is a transformation that replaces a variable $x$ with $a$ and $y$ with $b$, we might write
$f : (0 \leq x \leq y \leq 10) \rightarrow (0 \leq a \land b \leq 100)$, meaning that the property on the left of the arrow under the transformation $f$ (that is, $0 \leq a \leq b \leq 10$) is stronger than the property on the right.
In this sense, one can think of an order-theoretic abstract domain as a category in which the only transformations of interest are identity transformations.  Table~\ref{tab:ai-glossary} provides a correspondence between some category-theoretic terminology used in this paper and their order-theoretic counterparts to provide intuition for how our work connects with abstract interpretation.

Prior categorical treatments of abstract interpretation~\cite{RAIRO:SJM1992,MFPS:KRD2023} use categories in a different way than ours: they use enriched categories in which objects are types, arrows are programs, and homsets carry a poset structure; analyses are lax functors, encoding the idea that the analysis of a composite program is more precise than the separate analysis of its components.  In contrast, our categorical construction uses objects for programs and arrows for \emph{relationships between programs}, where analyses are functors from programs to results, encoding the idea that robust analyses preserve relationships between programs.

\begin{table}[t]
\caption{Correspondence between order-theoretic terminology in abstract interpretation and category-theoretic terminology in this paper. \label{tab:ai-glossary}}
\begin{tabular}{cc}
\toprule
\textbf{Abstract interpretation} & \textbf{Robust program analysis}\\ \midrule
Lattice & Concrete category\\
Monotone function & Concrete functor\\
Galois connection & Weak adjoint\\
\bottomrule
\end{tabular}
\end{table}

Algebraic program analysis is a framework for compositional program analysis (that is, satisfying the \textit{locality} principle) \cite{CAV:KRC2021}.  In algebraic program analysis, the typical goal is to compute a summary that over-approximates the dynamics of a given program.  This summary is computed in ``bottom-up'' fashion: starting with single instructions, we build summaries for ever-larger  fragments of the program by combining summaries using operations that mirror those of regular expressions: sequencing, choice, and iteration.

A recent line of work has developed \textit{monotone} program analyses, meaning that the iteration operator satisfies the property that if one summary $F$ approximates another $G$, then $F^*$ must approximate $G^*$ (and each other operator is similarly robust) \cite{CAV:SK2019,PLDI:ZK2021,CAV:ZK2021,POPL:CK2024,CAV:ZK2024,OOPSLA:PK2024}.
These analyses offer better predictability guarantees while matching (or even exceeding) state-of-the-art reasoning capabilities.
Monotonicity can be seen as a special case of robustness: it concerns
the relationship between the analyses of two programs when one refines
the other.  Our framework generalizes this by considering richer classes
of relationships between programs (e.g., variable renamings, linear
changes of variables), and in this way both unifies the design principles
behind existing monotone analyses and extends their guarantees to
structural program transformations such as loop unrolling.
The analysis design strategy presented in Section~\ref{sec:lifting} is
inspired by the one from \cite{SAS:Kincaid2018}.  However, our
strategy is more general (applying, for instance, to termination
analysis as well as invariant generation) and more flexible in the
kinds of relationships it preserves.  Moreover, we show that our
strategy also preserves certain algebraic laws
(Section~\ref{sec:algebraic-laws-main}).

\paragraph{Program transformation and analysis}

Control-flow refinement is a family of techniques that aim to improve analysis precision by performing loop transformations (e.g., unrolling a loop, or splitting a loop into two or more phases).  While these techniques have been shown to improve analysis precision experimentally, there are also cases where these transformations lead to results that are incomparable or even less precise.  An exception is \cite{POPL:CBKR2019}, which gives a transformation that is guaranteed to improve precision, subject to certain conditions of the analysis.  In this paper, we give a recipe for designing analyses that meet these conditions.

\citet{SAS:APS2010} and \citet{SAS:RZ2018} observe that numerical program analyses are often not robust under invertible linear transformations---i.e., if one applies a linear change of variables $f$ to a program $P$ (substituting each variable $x$ with a linear combination of fresh variables) to obtain an isomorphic program $f[P]$, the computed invariants are generally not isomorphic.  The authors exploit this as a way of improving analysis precision: analyze both the programs, and take the conjunction of the invariants computed for $P$ and $f[P]$ (under the inverse transformation).  \citet{ENTCS:ML2012} make a similar observation about non-robustness with respect to projecting out variables, and proposes an analogous technique to analyze it.   In this paper, we show that some algebraic analyses that have previously been shown to be \emph{monotone} are in fact \textit{robust under linear transformations} (a generalization of the original notion of monotonicity which includes not only logical entailment of the results but also invariance under renaming of variables).
As a result, these robust analyses do not benefit from the aforementioned techniques, but the designer of the robust analyses need not be
concerned with \emph{which} linear transformation or projection to use in the first place.

\citet{CC:LF2008} observe that translating a program from a high-level language to a low-level language (e.g., bytecode) negatively impact the precision program analysis, and introduced the notion of \textit{relative completeness} to describe the situation in which precision loss does not occur.  \citet{PLDI:LL2024} develop a strategy for combining non-relational analyses with online SSA conversion to achieve analyses that are relatively complete in this sense.   Our work provides a general framework in which to understand program analyses ``preserving'' relationships between programs; relative completeness is an example of a robustness property in which the relationship is a transformation that introduces new variables to represent intermediate computations.

%%% Local Variables:
%%% TeX-master: "main"
%%% End:

\section{Overview} \label{sec:overview}

This section illustrates examples of non-robustness of program analysis, as well as an analysis that achieves robustness by following the recipe presented in Section~\ref{sec:lifting}.
First, we consider an example consisting of two programs $P_1$ and $P_2$, pictured in Figure~\ref{fig:overview-ex} (derived from \cite{ENTCS:ML2012}).
$P_1$ can be obtained from $P_2$ by (1) renaming the $x$ variable to be $i$, and (2) projecting out the $y$ variable.  Thus, one might expect that a loop invariant generation algorithm would produce an invariant for $P_2$ that is at least as precise as $P_1$ (modulo renaming $x$ to be $i$).
However, \citet{ENTCS:ML2012} observed that polyhedral analysis violates this expectation---the invariant computed for $P_1$ is $1 \leq i \leq 5$, while the invariant for $P_2$ is $x \leq 5$.
In the vocabulary of this paper, this example illustrates that polyhedral analysis is not robust with respect to linear transformations.
The essential reason is that polyhedral analysis uses a widening operator to extrapolate the limit of a sequence of successively weaker candidate invariants---the extrapolation step is essentially an educated guess, and is sensitive to the particular way that a polyhedron is represented.

We now consider an alternative method to analyze this loop, which makes use of polyhedra in a different way, and which \textit{is} robust with respect to linear transformations.  Instead of approximating the reachable states of a machine, our goal is to summarize its dynamics.  This summary can be represented by a \textit{transition formula}---a logical formula over a set of program variables and their ``primed'' copies, representing the program state before and after a computation. For instance, the instruction \cinline{x++} in the program $P_2$ can be represented by the transition formula $x' = x + 1 \land y' = y$.

\textit{Algebraic program analysis} computes such summaries in ``bottom-up'' fashion: it begins by summarizing individual statements, then builds up summaries for larger and larger fragments of the program until it obtains a summary for the whole program.  A central problem of interest in algebraic program analysis is the design of \textit{iteration operators}, which given a summary for the body of a loop, computes a summary that represents any number of iterations of that loop body formula.

We consider two particular ways to use the domain of convex polyhedra to define an iteration operator, inspired by the analyses of \citet{ENTCS:ACI2010} and \citet{FMCAD:FK2015}.  For any linear integer/rational arithmetic formula $F$, we use $\textit{conv}(F)$ to denote the closed convex hull of $F$; i.e.,
a formula of the form $\bigwedge_{i=0}^n t_i \geq b_i$ where each $t_i$ is a linear term (over the free variables of $F$) and $b_i$ is a rational constant, such that
$F$ entails $\textit{conv}(F)$ and $\textit{conv}(F)$ entails any linear inequality that is entailed by $F$.
Suppose that $G$ is a transition formula over the variables $X$ and $X'$; our goal is to compute a formula $G^\star$ that over-approximates the reflexive transitive closure of $G$.
\begin{itemize}
\item \textit{(Polyhedral guard analysis)}  Define $\textit{Pre}(G) \defeq \textit{conv}(\exists X'. G)$ and $\textit{Post}(G) \defeq \textit{conv}(\exists X. G)$, which represent the pre-condition and post-condition of the transition $G$, respectively.  As long as at least one iteration of $G$ is taken both $\textit{Pre}(G)$ and $\textit{Post}(G)$ must hold, and so
\[ G^{\star_{\text{PGA}}} \defeq X' = X \lor (\textit{Pre}(G) \land \textit{Post}(G))
\]
over-approximates the reflexive transitive closure of $G$.
\item \textit{(Linear recurrence analysis)} A \textit{linear recurrence} is a linear inequality of the form $t' \leq t + b$ where $t$ is a linear term over the pre-state variables $X$, $t'$ is $t$ under the substitution that replaces each unprimed variable with its primed counter-part, and $b$ is a constant.  If a loop body summary $G$ entails a linear recurrence $t' \leq t + b$, then $t$ may increase by at most $b$ in one iteration of the loop (and thus may increase by at most $kb$ after $k$ iterations of the loop).
We can compute linear recurrences from a loop body formula $G$ as follows.  First, we introduce a
set of variables
$\Delta_X \defeq \set{ \delta_x : x \in X}$, where each $\delta_x$ represents the change in $x$, and define  \[ \Delta(G) \defeq \exists X,X'. G \land \bigwedge_{x \in X} \delta_x = x' - x\ .\]
Then $\textit{conv}(\Delta(G))$ takes the form $\bigwedge_{i=1}^n t_i \leq b_i$, where each $t_i$ is a linear term over $\Delta_X$.  Since $\Delta(G)$ entails each $t_i \leq b_i$, we have that $G$ entails $t_i[\Delta_X \mapsto X'] \leq t_i[\Delta_X \mapsto X] + b_i$ (where $t[\Delta_X \mapsto X']$ denotes the parallel substitution that replaces each $\delta_x$ with $x'$ in the linear term $t$)---i.e., each constraint in the constraint representation of the convex hull of $\Delta(G)$ corresponds to a linear recurrence inequality entailed by $G$.  Thus,
\[ G^{\star_{LRA}} \defeq \exists k \in \mathbb{N}. k \geq 0 \land \bigwedge_{i=1}^n t_i[\Delta_X \mapsto X'] \leq t_i[\Delta_X \mapsto X] + kb_i \]
over-approximates the reflexive transitive closure of $G$.
\end{itemize}
Naturally, these two ideas can be combined into one iteration operator, which we will refer to as the \textit{polyhedral iteration operator}:
\newcommand{\pstar}{{\tikz[baseline=(char.base)]{\node[shape=regular polygon,regular polygon sides=5,draw,inner sep=-0.5pt] (char) {\scriptsize $\star$};}}}

\[ G^{\pstar} \defeq \exists k \in \mathbb{N}. \left( \begin{array}{l@{}l}&\bigwedge_{i=1}^n t_i[\Delta_X \mapsto X'] \leq t_i[\Delta_X \mapsto X] + kb_i\\ \land & ((k = 0 \land X' = X) \lor (k \geq 1 \land \textit{Pre}(G) \land \textit{Post}(G)))
\end{array}\right)
\]

\begin{figure}
\begin{subfigure}{0.3\textwidth}
\begin{lstlisting}[style=base]
i = 1;
while(i < 5) {
  i++;
}\end{lstlisting}
\caption{$P_1$ \label{fig:overview-ex1}}
\end{subfigure}
\begin{subfigure}{0.3\textwidth}
\begin{lstlisting}[style=base]
x = 1; y = 0;
while(x < 5) {
  x++;
  y += x;
}\end{lstlisting}
\caption{$P_2$ \label{fig:overview-ex2}}
\end{subfigure}
\begin{subfigure}{0.3\textwidth}
\begin{lstlisting}[style=base]
x = 1; y = 0;
while(x < 5) {
  x++; z = x;
  while(z-- > 0) y++;
}\end{lstlisting}
\caption{$P_3$ \label{fig:overview-ex3}}
\end{subfigure}
\caption{Three related programs $P_1$, $P_2$, and $P_3$.  $P_1$ is an abstraction of $P_2$ and $P_3$, and $P_2$ and $P_3$ are weakly bisimilar. \label{fig:overview-ex}}
\end{figure}
\begin{table}
\caption{Loop summary computation for the programs $P_1$ and $P_2$ from Figure~\ref{fig:overview-ex}.  Observe that the same linear transformation that relates $P_1$ and $P_2$ also relates the summary for $P_1$ and $P_2$. \label{tab:summaries}}
\small
\begin{tabular}{l|l|l}
\toprule
 & \multicolumn{1}{c|}{$P_1$} & \multicolumn{1}{c}{$P_2$} \\ \midrule
 Loop body $G$ & $G_1 \defeq i < 5 \land i' = i + 1$ & $G_2 \defeq x < 5 \land x' = x + 1 \land y' = y + x + 1$\\
 $\Delta(G)$ & $\exists i,i'.\, G_1 \land \delta_i = i'-i$ & $\exists x,x',y,y'.\, G_2 \land \delta_x = x'\!-\!x \land \delta_y = y'\!-\!y$\\
 $\textit{conv}(\Delta(G))$ & $-\delta_i \leq -1 \land \delta_i \leq 1$ & $-\delta_x \leq -1 \land \delta_x \leq 1 \land \delta_y \leq 5$\\
 Recurrences & $i' = i + 1$ & $x' = x + 1\land y' \leq y + 5$\\
 Precondition & $i < 5$ & $x < 5$\\
 Postcondition & $i' \leq 5$ & $x' \leq 5$\\
 Loop summary $G^{\pstar}$ & $\exists k. \begin{array}{l@{}l}&k \geq 0 \land i' = i + k\\ \land &(k \geq 1 \Rightarrow (i < 5 \land i' \leq 5))\end{array}$ & $\exists k. \begin{array}{l@{}l} &k \geq 0 \land x' = x + k\\ \land & y' \leq y + 5k\\ \land & (k \geq 1 \Rightarrow (x < 5 \land x' \leq 5))\end{array}$\\
 \bottomrule
\end{tabular}
\end{table}

The process of computing loop summaries using the polyhedral iteration operator for the programs $P_1$ and $P_2$ in Figure~\ref{fig:overview-ex} is depicted in Table~\ref{tab:summaries}.  Observe that the relationship between $P_1$ and $P_2$ also holds between the summary for the loops in $P_1$ and $P_2$: if we take the loop summary for $P_2$, project out the variable $y$ and $y'$ by existential quantification, and rename $x$ and $x'$ to $i$ and $i'$, we get the loop summary for $P_1$.
In fact, this holds for any pair of programs that are related by a linear transformation.  Thus, we say that this analysis is \textit{robust with respect to linear simulations}.

The programs $P_3$ and $P_2$ are also semantically related---the inner loop of $P_3$ corresponds exactly to the instruction \cinline{y += x} in $P_2$.  It happens to be the case that $G^{\pstar}$ preserves this relationship between $P_2$ and $P_3$; however, the theory of robustness developed in this paper does not explain why.  While it would certainly be desirable for a program analysis to be robust in the sense that it produces semantically equivalent results for semantically equivalent programs, this is impossible due to Rice's theorem.

%%% Local Variables:
%%% TeX-master: "main"
%%% End:

\section{Background and Notation} \label{sec:background}

\subsection{Transition Systems and Transition Formulas}

A \textbf{transition system} $T$ consists of a 
set of states $S_T$ and a transition relation ${\rightarrow_T} \subseteq S_T \times S_T$.
We use the notation $\tr{T}{x}{x'}$ to denote
that the pair $\tuple{x,x'}$ belongs to the transition relation of $T$.
Let $T$ and $U$ be transition systems.    A \textbf{simulation function from $T$ to $U$} is a function
$s : S_T \rightarrow S_U$ such that for all $x,x' \in S_T$ such that
$\tr{T}{x}{x'}$, we have $\tr{U}{s(x)}{s(x')}$. 

For transition systems $T$ and $U$ over the same state space, define their sequential composition
\[ T \seq U \defeq \set{\tuple{s,s''} : \exists s' \in S_T. \tr{T}{s}{s'} \land \tr{U}{s'}{s''}\ } . \]
We use $T^n$ to denote the
$n$-fold composition of $T$ with itself, and $T^*$ to denote the
reflexive transitive closure of $T$.  We use $T^\omega$ to denote the non-terminating states of $T$; i.e., the set of all states $s \in S_T$ such that there exists an infinite sequence
$s_0, s_1,\dots \in S_T$ such that $s_0 = s$ and
$\tr{T}{s_{i}}{s_{i+1}}$ for all $i \in \mathbb{N}$.

Let $X$ be a set of variable symbols.  A \textbf{transition formula} over $X$ is a first-order formula whose free variables range over $X$ and a set of ``primed copies'' $X' \defeq \set{x' : x \in X}$; these variables designate the state of a program before and after a transition, respectively.
For the sake of concreteness, we suppose that transition formulas are expressed in the language of linear integer/rational arithmetic.
We let $\TF(X)$ denote the set of all transition formulas over $X$.
Similarly, $\mathbf{SF}(X)$ denotes the state formulas over $X$---formulas whose free variables range over $X$.

A transition formula $F \in \TF(X)$ defines a transition system where $S_F = \mathbb{Q}^X$ (i.e., its states are maps from $X$ to $\mathbb{Q}$) and
$\tr{F}{s}{s'}$ iff the formula $F$ is satisfied by a model in which the variables in $X$ are interpreted according to $s$ and the variables in $X'$ are interpreted according to $s'$.  Note that transition formulas are closed under sequential composition, in the sense that for any transition formulas $F$ and $G$, the transition system $F \seq G$ is represented by the transition formula $\exists X''. F[X'\mapsto X''] \land G[X \mapsto X'']$.  However, it is not generally the case that
$F^*$ or $F^\omega$ are representable by a formula.

If $F \in \TF(X)$ and $G \in \TF(Y)$ are transition formulas, a \textbf{linear simulation} from $F$ to $G$ is a linear map $f : \mathbb{Q}^X \rightarrow \mathbb{Q}^Y$ that is also a simulation (if $\tr{F}{s}{s'}$, then $\tr{G}{f(s)}{f(s')}$).  Since formulas are \textit{constraints} on transitions, it is convenient to think of a linear simulation in \textit{transposed form}---that is, a substitution $\sigma$ that maps the variables in $Y$ to linear terms over the variables in $X$. Here the substitution $\sigma$ can be understood as the linear map that sends $s \in \mathbb{Q}^X$ to $\lambda y. \sem{\sigma(y)}(s)$, where $\sem{t}(s)$ denotes the evaluation of the term $t$ over $X$ in the state $s$.
We also introduce a convenient notation for representing simulations using term substitutions as follows.
For any substitution $\sigma$ mapping a set of variables $Y$ to terms over a set of variables $X$, we let $\sigma'$ denote ``primed'' variant of $\sigma$, such that $y' \in Y'$ is mapped to $\sigma(y)[X \mapsto X']$.  Then we have that
$\sigma$ is a simulation from $F$ to $G$ iff
$F \models G[\sigma, \sigma']$, where $G[\sigma,\sigma']$ denotes the result of applying both substitutions $\sigma$ and $\sigma'$ to $G$.  That is, we may think of $G[\sigma,\sigma']$ as the \textit{weakest} among the transition formulas $H$ such that
$\sigma$ is a simulation from $H$ to $G$.

\begin{example} \label{ex:linear-sim-p1-p2}
Consider the programs $P_1$ and $P_2$ from Figure~\ref{fig:overview-ex}. The loop body of $P_1$ is the transition formula $G_1 \defeq i < 5 \land i' = i + 1$ over $\{i\}$, and the loop body of $P_2$ is $G_2 \defeq x < 5 \land x' = x + 1 \land y' = y + x + 1$ over $\{x, y\}$.
The projection that maps $(x,y) \mapsto x$, which corresponds to the substitution $\sigma = [i \mapsto x]$, is a linear simulation from $G_2$ to $G_1$ (one may verify that $G_2 \models G_1[\sigma, \sigma'] \equiv (x < 5 \land x' = x + 1)$).
Additional of transition formulas and linear simulations appear in
Figure~\ref{fig:dats}.
%Figure~\ref{fig:cp} applies this idea to programs related by a linear simulation---in that case an invertible one (i.e., an isomorphism), and the inverse is a simulation in the other direction.
\end{example}

\subsection{Algebraic Program Analysis} \label{sec:apa-background}

Algebraic program analyses are a family of program analyses based on
Tarjan's path expression algorithm
\cite{JACM:Tarjan1981,JACM:Tarjan1981b}.  These analyses work
compositionally, in the same ``bottom-up'' style demonstrated in
Section~\ref{sec:overview}.  However, rather than defining the
analysis by recursion on the syntactic structure of the program, it is
defined by recursion on a regular expression recognizing the paths
through the program.  This allows algebraic analyses to be applied to
programs with arbitrary control structure (e.g. \cinline{goto}s) and
to be generalized to an interprocedural setting
\cite{FMCAD:FK2015,PLDI:KBFR17}.

A \textbf{Pre-Kleene algebra}\footnote{This definition of Pre-Kleene algebras first appeared in \cite{POPL:CBKR2019}, which observes that the laws of Kleene algebra are often not satisfied by algebraic program analyses.  For instance, the algebra of transition formulas cannot be extended with a $*$ operator to form a Kleene algebra, owing to the fact that least fixed points are not first-order definable.} (PKA) is an algebraic structure $\tuple{A,0,1,+,\cdot,(-)^*}$ equipped with two distinguished elements $0$ and $1$, two binary operations $+$ and $\cdot$, and a unary operation $(-)^{*}$
that satisfies the laws of idempotent semirings as well as the following laws governing the iteration operator: 
\noindent \begin{minipage}{0.4\textwidth}
\begin{itemize}
\item \textit{(Reflexivity)} $1 \leq a^*$
\item \textit{(Extensivity)} $a \leq a^*$
\item \textit{(Transitivity)} $a^*a^* = a^*$
\end{itemize}
\end{minipage}
\begin{minipage}{0.5\textwidth}
\begin{itemize}
\item \textit{(Monotonicity)} if $a \leq b$ then $a^* \leq b^*$
\item \textit{(Unrolling)} $(a^n)^* \leq a^*$ for all $n \in \mathbb{N}$
\end{itemize}
\end{minipage}

\noindent where $\leq$ is the order defined by the $+$ operator ($a \leq b$ iff $a + b = b$).
An example of a PKA is the algebra $\TF(X)$ in which the universe is the set of
 transition formulas over $X$, $0^{\textbf{TF}}$ is $\false$, $1$
 is the identity relation $\bigwedge_{x \in X} x' = x$, $+$ is
 disjunction, $\cdot$ is relational composition $\seq$, and
 $(-)^{*}$ is the over-approximate transitive closure operator
 $(-)^{\pstar}$.

Let $A$ be a PKA.  A \textbf{flow graph} over $A$
consists of a finite set of vertices $V_G$, a finite set of
directed edges $E_G \subseteq V_G \times V_G$, a distinguished root vertex
$r_G \in G$, and a weight function $w_G : E \rightarrow A$ assigning each edge a non-zero weight.  We extend $w_G$
to a function $E^* \rightarrow A$ as $w_G(e_1e_2 \dots e_n) \defeq w_G(e_1)
\cdot w_G(e_2) \cdot \dots \cdot w_G(e_n)$.

\subsection{Category Theory}

First, we recall some of the basic definitions of category theory.  A \textbf{category} $\mathbf{C}$ consists of
\begin{itemize}
\item A collection of objects $\Ob{\mathbf{C}}$
\item For each pair of objects $X,Y$, a collection $\mathbf{C}(X,Y)$
  of arrows
\item For each triple of objects $X,Y,Z$, a composition function $\circ : \mathbf{C}(Y,Z) \times \mathbf{C}(X,Y) \rightarrow \mathbf{C}(X,Z)$
\item For each object $X$, an arrow $1_X \in \mathbf{C}(X,X)$
\end{itemize}
such that composition is associative, and for each $X,Y$ and each
$f \in \mathbf{C}(X,Y)$ we have $f \circ 1_X = f = 1_Y \circ f$.
Examples of categories include $\textbf{Set}$, in which the objects are sets and arrows are functions; $\textbf{Vect}_{\mathbb{Q}}$, in which the objects are rational vector spaces and arrows are linear transformations; and $\textbf{TS}$, in which the objects are transition systems and arrows are simulations.

Let $\mathbf{C}$ and $\mathbf{D}$ be categories.  A \textbf{functor}
is a map $F$ sending each object $X$ of $\mathbf{C}$ to an object
$F(X)$ of $\mathbf{D}$ and each arrow $f \in \mathbf{C}(X,Y)$ to an
arrow $F(f) \in \mathbf{D}(F(X),F(Y))$ such that: 
\begin{itemize}
\item For any $f \in \mathbf{C}(Y,Z)$ and $g \in \mathbf{C}(X,Y)$, we have $F(f \circ g) = F(f) \circ F(g)$
\item For each object $X$ of $\mathbf{C}$, we have $F(1_X) = 1_{F(X)}$
\end{itemize}
A functor $F$ is \textbf{faithful} if it is injective on arrows; i.e., for any objects $X$ and $Y$ of $\mathbf{C}$ and any arrows $f,g \in \mathbf{C}(X,Y)$, if $F(f) = F(g)$, then $f=g$.

Suppose that $F$ and $G$ are functors from $\cat{C}$ to $\cat{D}$.  A \textbf{natural transformation} $\eta : F \Rightarrow G$ associates each object of $X \in \Ob{\cat{C}}$
with an arrow $\eta_X \in \cat{D}(F(X),G(X))$ such that for any $f \in \cat{C}(X,Y)$, we have $\eta_Y \circ F(f) = G(f) \circ \eta_X$.

A \textbf{subcategory} of a category $\cat{C}$ is obtained by selecting a sub-collection of objects and arrows that includes all identity arrows of the selected objects and is closed under composition. There is an obvious faithful inclusion functor from a subcategory to the original category that maps each object and arrow in the subcategory to themselves.

%%% Local Variables:
%%% TeX-master: "main"
%%% End:

\section{A Category-Theoretic Foundation of Robust and Sound Analysis} \label{sec:robustness}

\textsc{Informally, } we say that a program analysis $A : \textit{Program} \rightarrow \textit{Property}$ is robust with respect to a class of relationships $R$ if for all programs $P$ and $P'$, if $P$ and $P'$ are related by some $r \in R$, then $A(P)$ and $A(P')$ are also related by $r$.  We formalize this by considering \textit{Program} and \textit{Property} to be
\textit{concrete categories} over a common base category which defines the class of relationships that we are interested in preserving, and defining an operation to be robust when it is a \textit{concrete functor}.

Let $\mathbf{B}$ be a category (the ``base'' category).  A \textbf{concrete category over $\mathbf{B}$} consists of a category $\mathbf{C}$ along with a faithful functor $\proj{C} : \mathbf{C} \rightarrow \mathbf{B}$ (see \cite{Book:AHS2009} for an introduction to concrete category theory).
For any objects $A, B \in \Ob{C}$ and any arrow $f \in \cat{B}(\proj{C}(A),\proj{C}(B))$, we use
$\barr{A}{f}{B}$ to denote that there exists some arrow $g \in \cat{C}(A,B)$ such that $\proj{C}(g) = f$ (note that the choice of $g$ is unique, since $\proj{C}$ is faithful).  Thus, the functor $\proj{C}$
allows us to understand the arrows of $\cat{C}$ as being drawn from $\cat{B}$, in the
sense that for any objects $A,B \in \Ob{C}$,
$\cat{B}(\proj{C}(A),\proj{C}(B))$ defines a space of ``possible'' relationships that may hold between $A$ and $B$, and
$\barr{A}{r}{B}$ is the proposition that the relationship $r$ does hold between $A$ and $B$.  The fact that $\mathbf{C}$ is a category means that these relationships compose:
\begin{itemize}
\item \textit{(Identity)}: for any object $A$ we have
$\barr{A}{1_{\proj{C}(A)}}{A}$.  (Since $\proj{C}$ is a functor, we must have $\proj{C}(1_A) = 1_{\proj{C}(A)}$)
\item \textit{(Composition)}: if $\barr{A}{f}{\barr{B}{g}{C}}$, then
$\barr{A}{g \circ f}{C}$.  (Letting $f' \in \mathbf{C}(A,B)$ and $g' \in \mathbf{C}(B,C)$ be such that $\proj{C}(f') = f$ and $\proj{C}(g') = g$, we have $g' \circ f' \in \cat{C}(A,C)$ and $\proj{C}(g' \circ f') = \proj{C}(g') \circ \proj{C}(f') = g \circ f$).
\end{itemize}

\begin{example} \label{ex:ts}
  The category of transition systems $\mathbf{TS}$ can be understood as a concrete category over $\mathbf{Set}$, where $\proj{TS}$ sends any transition system $T$ to its state space $S_T$, and sends any simulation $f : T \rightarrow U$ to itself (essentially ``forgetting'' that $f$ is a simulation, and just looking at it as a function from $S_T$ to $S_U$).
  For a function $f : S_T \rightarrow S_U$, we write $\barr{T}{f}{U}$ exactly when $f$ is a simulation; i.e., $\barr{T}{f}{U}$ is a proposition meaning that for all $s$ and $s'$ such that $s \rightarrow_T s'$, we have
  $f(s) \rightarrow_U f(s')$.

  Observe that the transitive closure operation is robust in the sense that the state spaces of $T$ and $T^*$ coincide, and if there is a simulation $f : T \rightarrow U$ between two transition systems, $f$ is also a simulation $T^* \rightarrow U^*$.
  The $(-)^\omega$ operation is similarly robust.  Define the category $\mathbf{SubSet}$ where objects are pairs $\tuple{U,S}$ where $U$ is a set and $S$ is a subset of $U$, and where an arrow $f : \tuple{U,S} \rightarrow \tuple{U',S'}$ is a function $U \rightarrow U'$ such that for all $x \in S$, we have $f(x) \in S'$.  Observe that $\mathbf{SubSet}$ is a concrete category over $\mathbf{Set}$, where $\proj{SubSet}$ maps each object $\tuple{U,S}$ to $U$, and each
  arrow $f : \tuple{U,S} \rightarrow \tuple{U',S'}$ to itself.  Then $(-)^\omega : \mathbf{TS} \rightarrow \mathbf{SubSet}$ satisfies the property that if
  $\barr{T}{f}{U}$, then $\barr{T^\omega}{f}{U^\omega}$.
\end{example}

The above example motivates the formal definition of robustness:

\begin{definition}[Robustness]
  Let $\mathbf{B}$ be a category, and let $\mathbf{C}$ and $\mathbf{D}$ be concrete categories over $\mathbf{B}$.  A function $F : \Ob{C} \rightarrow \Ob{D}$ is \textbf{robust} (with respect to $\mathbf{B}$, or arrows in $\mathbf{B}$) if for every object $X \in \Ob{C}$ we have
  $\proj{C}(X) = \proj{D}(F(X))$ and
  for every pair of objects $X,Y \in \mathbf{C}$ and every arrow $f \in \cat{B}(\proj{C}(X),\proj{C}(Y))$ such that
  $\barr{X}{f}{Y}$, we have
  $\barr{F(X)}{f}{F(Y)}$.
\end{definition}

Robustness coincides with an existing notion in category theory called \textit{concrete functors}.  For concrete categories $\mathbf{C}$ and $\mathbf{D}$ over some common base category $\mathbf{B}$, a functor $F : \mathbf{C} \rightarrow \mathbf{D}$ is \textbf{concrete} if $\proj{C} = \proj{D} \circ F$. We show that robust functions are the same as concrete functors.

\begin{restatable}[Robust operations are concrete functors]{lemma}{robustImpliesConcreteFunctor}
\label{lem:robust-implies-concrete-functor}
Let $\mathbf{B}$ be a category, and let $\mathbf{C}$ and $\mathbf{D}$ be concrete categories over $\mathbf{B}$.  Suppose that $F : \Ob{C} \rightarrow \Ob{D}$ is robust.
For any
$f \in \mathbf{C}(X,Y)$, define $F(f)$ to be the arrow $g \in \mathbf{D}(F(X),F(Y))$ such that $\proj{C}(f) = \proj{D}(g)$---the choice of $g$ is unique since $\proj{D}$ is faithful by assumption.  Then $F$ is a concrete functor.
\end{restatable}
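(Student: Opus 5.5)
We have to check three things: that $F(f)$ is well defined, that $F$ preserves identities and composition, and that $\proj{C} = \proj{D} \circ F$. Each one reduces to faithfulness of $\proj{D}$ together with the object condition $\proj{C}(X) = \proj{D}(F(X))$ from the definition of robustness.

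\emph{Well-definedness.} Let $f \in \cat{C}(X,Y)$ and put $r \defeq \proj{C}(f) \in \cat{B}(\proj{C}(X),\proj{C}(Y))$. Then $f$ witnesses $\barr{X}{r}{Y}$. By robustness, $\barr{F(X)}{r}{F(Y)}$ holds, which makes sense because $\proj{D}(F(X)) = \proj{C}(X)$ and $\proj{D}(F(Y)) = \proj{C}(Y)$. So there is some $g \in \cat{D}(F(X),F(Y))$ with $\proj{D}(g) = r$. Faithfulness of $\proj{D}$ makes $g$ unique, so $F(f) \defeq g$ is well defined. By construction $\proj{D}(F(f)) = \proj{C}(f)$ for all arrows, and the object condition gives the same equation on objects. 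Hence $\proj{D} \circ F = \proj{C}$ holds on both objects and arrows.

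\emph{Functoriality.} For identities: $\proj{D}(1_{F(X)}) = 1_{\proj{D}(F(X))} = 1_{\proj{C}(X)} = \proj{C}(1_X)$. Thus $1_{F(X)}$ is an arrow $F(X) \to F(X)$ lying over $\proj{C}(1_X)$. By the uniqueness just established, $F(1_X) = 1_{F(X)}$. For composition, take $g \in \cat{C}(X,Y)$ and $f \in \cat{C}(Y,Z)$. Then $F(f) \circ F(g) \in \cat{D}(F(X),F(Z))$, and since $\proj{D}$ is a functor,
\[ \proj{D}(F(f) \circ F(g)) = \proj{D}(F(f)) \circ \proj{D}(F(g)) = \proj{C}(f) \circ \proj{C}(g) = \proj{C}(f \circ g) = \proj{D}(F(f \circ g)). \]
Faithfulness of $\proj{D}$ on the hom-set $\cat{D}(F(X),F(Z))$ then gives $F(f \circ g) = F(f) \circ F(g)$.

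None of these steps is a real obstacle. The one point needing care is that faithfulness only compares parallel arrows, so at each use we must confirm that both arrows share the same domain and codomain. This is guaranteed by the object condition $\proj{C}(X) = \proj{D}(F(X))$ and by typing $F(f)$ as an element of $\cat{D}(F(X),F(Y))$.
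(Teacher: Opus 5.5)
Your proof is correct and follows essentially the same route as the paper's: concreteness comes from the object condition of robustness, and preservation of identities and composition comes from comparing images under $\proj{D}$ and invoking faithfulness. You also make explicit that the existence of $F(f)$ comes from applying robustness to $\proj{C}(f)$, which the paper leaves implicit in the statement.
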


Example~\ref{ex:ts} illustrates that semantic operations like reflexive transitive closure are robust; however, they are typically not computable.  Our goal in program analysis is to develop approximate analyses that are computable, while retaining \textit{some} robustness properties.

\begin{example}[Robustness under linear transformations] \label{ex:tf-b-category}
Suppose that we are interested in an over-approximating transitive closure operation for transition formulas that is robust with respect to linear transformations.  We endow the set of transition formulas with the structure of a concrete category over $\mathbf{Vect}_{\mathbb{Q}}$ as follows.  Let $\TF$ be a category where the objects are transition formulas over any set of variables (note that we use $\TF$ for the category and $\TF(X)$ for the set of transition formulas over a specific variable set $X$).  If $F \in \TF(X)$ and $G \in \TF(Y)$ are transition formulas over $X$ and $Y$ respectively, then the set of arrows between $F$ and $G$ is the set of linear maps $\mathbb{Q}^X \rightarrow \mathbb{Q}^Y$ that are simulations between the transition systems defined by $F$ and $G$ (i.e., $x \rightarrow_F x'$ implies $f(x) \rightarrow_G f(x')$).
 Observe that $\TF$ forms a concrete category over $\mathbf{Vect}_{\mathbb{Q}}$, where the concrete functor $\proj{TF}$ sends each transition formula $F \in \TF(X)$ to its corresponding state space $\mathbb{Q}^X$, and each simulation $f : F \rightarrow G$ to itself.  $\TF$ can be understood as a subcategory of $\mathbf{TS}$, which contains only transition systems that are representable using transition formulas, and only linear simulations as arrows.
 As we will see in Section~\ref{sec:lifting}, the iteration operator $(-)^\pstar$ (Section~\ref{sec:overview}) is robust with respect to $\mathbf{Vect}_{\mathbb{Q}}$.
\end{example}

\begin{example}[Invariance under renaming] \label{ex:cp-not-robust}
Consider constant propagation, which is robust w.r.t. variable renaming.
Formally, define a category of flow graphs $\textbf{FG}_{\alpha}$: objects are flow graphs weighted with transition formulas, and an arrow from $G$ (over variables $X$) to a structurally isomorphic $H$ (over $Y$) is a bijection $\sigma : X \rightarrow Y$ such that $w_G(e)[\sigma] \equiv w_H(e)$ for each edge $e$.
Given a flow graph $G$ with weights in $\TF(X)$, the goal of constant propagation is to compute a function
$k \in \textit{Const}(V_G,X) \defeq V_G \rightarrow (\set{\bot} \cup (X \rightarrow (\mathbb{Z} \cup \set{\top})))$,
which assigns to each control location $v \in V_G$ either the value $\bot$ (indicating that $v$ is statically unreachable) or a \textit{constant environment} that assigns each variable either an integer (the variable must always be equal to that value at $v$) or $\top$ (the variable may take on more than one value).
Let $\textbf{Bij}$ denote the category whose objects are finite sets (of variables) and whose arrows are bijections between them.
Let $\textbf{Const}$ be the category whose objects are constant assignments (belonging to $\textit{Const}(V,X)$ for some set of vertices $V$ and set of variables $X$), and where an arrow from $k \in \textit{Const}(V,X)$ to $k' \in \textit{Const}(V,Y)$ is a bijection $\sigma : X \rightarrow Y$ such that
$k(v)(x) = k'(v)(\sigma(x))$ for all $v \in V$ and $x \in X$.
Both $\textbf{FG}_{\alpha}$ and $\textbf{Const}$ are concrete categories over $\textbf{Bij}$, and constant propagation is a concrete functor from $\textbf{FG}$ to $\textbf{Const}$.
If we are interested in robustness with respect to linear transformations rather than variable renaming, then we may analogously consider a base category of $\textbf{Vect}_{\mathbb{Q}}$.
However, constant propagation is \textit{not} robust with respect to linear transformations.
For instance, the programs $P$ and $P'$ in Figure~\ref{fig:cp} are isomorphic under the linear map $(x,y) \mapsto (a{+}b, a{-}b)$; constant propagation correctly determines $x = 1$ in $P$, but cannot recover the corresponding fact $a + b = 1$ in $P'$.\footnote{Affine relations analysis, on the other hand, \textit{is} robust; one might think of affine relations analysis as the minimal generalization of constant propagation that is robust with respect to linear transformations.}
\end{example}

\begin{example}[Monotonicity] \label{ex:interval-monotonicity}
    Interval analysis computes an interval bounding the possible values of each variable at each control location.
    Define a category $\textbf{FG}_{mon}$ in which objects are flow graphs and there is a unique arrow $G \rightarrow H$ between structurally isomorphic graphs iff $w_G(e) \models w_H(e)$ for all edges $e$ (i.e., arrow in $\textbf{FG}_{mon}$ represent refinement).
    Similarly, define a category $\textbf{Intv}$ of interval assignments, with a unique arrow from $I$ (over variables $X$) to $J$ (over $Y$) iff $X \subseteq Y$ and $I(v)(x) \subseteq J(v)(x)$ for all $v$ and $x \in X$---i.e., $I$ is at least as precise as $J$ on the common variables.
    Both categories are concrete over $\textbf{FinSub}$ (finite sets of variables where arrows represent inclusion).
    Robustness with respect to $\textbf{FinSub}$ then requires that whenever $G$'s formulas are stronger than $H$'s, the analysis of $G$ is at least as precise as that of $H$---this is \textit{monotonicity}.
    The iterative interval analysis \cite{Cousot1977} is \textit{not} robust in this sense since it uses widening \cite{VMCAI:Cousot2015}, which is inherently non-monotone; on the other hand, policy iteration~\cite{CAV:CGGMP2005} (which computes strongest interval invariants) \textit{is} robust.
\end{example}

\begin{figure}
\begin{subfigure}{0.45\textwidth}
\begin{lstlisting}[style=base]
if (y > 0) {
  y = 0;
  x = y + 1;
} else { x = 1; }
|Post: $\set{x \mapsto 1, y\mapsto \top}$|\end{lstlisting}
\caption{A program $P$}
\end{subfigure}
\begin{subfigure}{0.5\textwidth}
\begin{lstlisting}[style=base]
if (a-b > 0) {
  a = (a+b)/2; b = a;
  a = a-b+1/2; b = 1/2;
} else { a = (a-b+1)/2; b = 1-a; }
|Post: $\set{a \mapsto \top, b\mapsto \top}$|\end{lstlisting}
\caption{A program $P'$}
\end{subfigure}
\caption{An example illustrating that constant propagation is not robust under linear transformations: The programs $P$ and $P'$ are isomorphic under the linear substitution $[x \mapsto a+b, y\mapsto a-b]$; however the result of constant propagation analysis on $P$ is not isomorphic to that of $P'$.  \label{fig:cp}}
\end{figure}

\paragraph{Using the Framework} 
We envision two modes for using this framework.  In the first mode, an analysis designer has a class of transformations of interest, and uses the framework to inform the design of a robust analysis (perhaps facilitated by the recipe of Section~\ref{sec:lifting}).
We expect that for the most part, analysis designers will use a small number of choices for the base category $\mathbf{B}$: e.g., $\textbf{Vect}_{\mathbb{Q}}$ captures robustness under linear transformations (Example~\ref{ex:tf-b-category}), $\textbf{Bij}$ captures robustness under variable renaming, and $\textbf{FinSub}$ captures monotonicity.
In the second mode, an analysis designer has an analysis of interest, and uses the framework to characterize the class of transformations under which it is robust.  For instance, the analyses in Table~\ref{tbl:instances} were designed with monotonicity in mind, but in the next section we show that in fact they robust with respect to the richer class of linear simulations.

\subsection{Soundness of Robust Analysis}

The main novelty of this section is to define what it means for a program analysis to be robust.  To understand what it means to be a program analysis, we must also define the meaning of \emph{soundness}, and it is convenient to do so in the same category-theoretic vocabulary that we use for robustness.  This section demonstrates how any concrete category can be equipped with an ``approximation pre-order'' to formalize soundness.

Example~\ref{ex:tf-b-category} is suggestive of a general mechanism by which we arrive at the structure of a concrete category.   While we might think of an algebraic analysis as a function mapping a program into the algebra of transition formulas, there is in fact a family of such algebras, parameterized by the set of variables over which they are defined.\footnote{The same phenomenon appears in iterative abstract interpretation.  For example, although we speak of ``the'' lattice of polyhedra, it is really a family of lattices, one for each dimension.}  We obtain $\TF$ as the disjoint union of transition formulas over all parameters, and the functor $\proj{TF}$ sends each transition formula back to its parameter.  Thus, we can recover the algebra $\TF(X)$ as the \textit{fiber} of $\mathbb{Q}^X$,
 $\proj{TF}^{-1}(\mathbb{Q}^X) \defeq \set{ F : \proj{TF}(F) = \mathbb{Q}^X } = \TF(X)$.  We recover the entailment relation on transition formulas as
 $F \models G \iff $ there is some arrow $f$ in $\TF$ between $F$ and $G$ such that $\proj{TF}(f) = 1_{\proj{TF}(X)}$.

More generally, for any concrete category $\mathbf{C}$ over $\mathbf{B}$, the fibers of $\mathbf{C}$ can be equipped with an ``approximation pre-order.''  That is, for any object $B \in \Ob{B}$, and any $C,C' \in \proj{C}^{-1}(B)$ write $C \preceq_B C'$ if and only if $\barr{C}{1_{\proj{C}(C)}}{C'}$ (i.e., there is an arrow $f \in \mathbf{C}(C,C')$ such that $\proj{C}(f) = 1_{\proj{C}(C)}$).  We use $C \preceq C'$ to denote that $\proj{C}(C) = \proj{C}(C')$ and $C \preceq_{\proj{C}(C)} C'$.  Observe that $\preceq$ is a pre-order (where reflexivity and transitivity follow from the \textit{Identity} and \textit{Composition} properties of the $\barr{}{1_{\proj{C}(C)}}{}$ relation, respectively).

The basic principle espoused by the theory of abstract interpretation~\cite{POPL:CC1977} is that soundness is a connection between a ``concrete semantics'' (the fundamental matter of interest, which is typically not computable) and an ``abstract semantics'' (a computable \textit{approximation} of the concrete semantics).  In our category-theoretic vocabulary, we may think of a concrete semantics as an object of some concrete category $\mathbf{R}^\natural$, and an abstract semantics as an object of another concrete category $\mathbf{R}^\sharp$.  A concrete functor $\gamma : \mathbf{R}^\sharp \rightarrow \mathbf{R}^\natural$ (the \textit{concretization} functor) maps each abstract object to its concrete meaning (e.g., mapping a polyhedron to the set of states it contains); being a concrete functor means that this mapping is compatible with arrows.  We say that an abstract semantics $a \in \Ob{R^\sharp}$ approximates a concrete semantics $c \in \Ob{R^\natural}$ if $c \preceq \gamma(a)$ (where $\preceq$ is the approximation order on $\mathbf{R}^\natural$).  Finally, if we let $\mathbf{C}$ be a category of programs,
$F^\natural : \mathbf{C} \rightarrow \mathbf{R}^\natural$ be a functor interpreting programs as their concrete semantics
and $F^\sharp : \mathbf{C} \rightarrow \mathbf{R}^\sharp$ be a functor interpreting programs as their abstract semantics, then
$F^\sharp$ is sound with respect to $F^\natural$ when
$F^\sharp(P)$ approximates $F^\natural(P)$ for all programs $P$.
Formally,
\begin{definition}[Soundness] \label{def:soundness}
    Let $\mathbf{C}$, $\mathbf{R}^\sharp$, and $\mathbf{R}^\natural$ be concrete categories over a common base category $\mathbf{B}$, and let $\gamma : \mathbf{R}^\sharp \rightarrow \mathbf{R}^\natural$,
    $F^\sharp : \mathbf{C} \rightarrow \mathbf{R}^\sharp$,
    and
    $F^\natural : \mathbf{C} \rightarrow \mathbf{R}^\natural$
    be concrete functors.
    We say that $F^\sharp$ is \textbf{sound} with respect to $F^\natural$
    iff for all $P \in \Ob{C}$, we have $F^\natural(P) \preceq \gamma(F^\sharp(P))$.
\end{definition}

Again, this notion of soundness coincides with an existing notion in category theory, called \textit{concrete natural transformations}.  For concrete categories $\mathbf{C}$ and $\mathbf{D}$ and concrete functors $F, G : \cat{C} \rightarrow \cat{D}$, a natural transformation $\epsilon : F \Rightarrow G$ is called concrete if for each object $A$ of $\cat{C}$, $\proj{D}(\epsilon_A) = 1_{\proj{D}(F(A))}$.

\begin{restatable}{lemma}{soundnessfromnattransformation} \label{lem:soundness-from-nat-transformation}
  $F^\sharp$ is sound w.r.t. $F^\natural$ iff there is a concrete natural transformation $\epsilon : F^\natural \Rightarrow \gamma \circ F^\sharp$.
\end{restatable}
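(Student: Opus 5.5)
The plan is to prove the two directions separately, using in each case only the definitions of $\preceq$, of concrete functor, and the faithfulness of $\proj{R^\natural}$.

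\emph{($\Leftarrow$)} Suppose $\epsilon : F^\natural \Rightarrow \gamma \circ F^\sharp$ is a concrete natural transformation, and fix $P \in \Ob{C}$. Then $\epsilon_P \in \mathbf{R}^\natural(F^\natural(P), \gamma(F^\sharp(P)))$ and $\proj{R^\natural}(\epsilon_P) = 1_{\proj{R^\natural}(F^\natural(P))}$. Because $F^\natural$, $F^\sharp$, and $\gamma$ are concrete, we have
\[
\proj{R^\natural}(F^\natural(P)) = \proj{C}(P) = \proj{R^\sharp}(F^\sharp(P)) = \proj{R^\natural}(\gamma(F^\sharp(P))),
\]
so the two objects lie in the same fiber. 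The arrow $\epsilon_P$ then witnesses $\barr{F^\natural(P)}{1}{\gamma(F^\sharp(P))}$, which is exactly $F^\natural(P) \preceq \gamma(F^\sharp(P))$.

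\emph{($\Rightarrow$)} Suppose $F^\sharp$ is sound. For each $P$, the same fiber computation together with soundness gives an arrow $\epsilon_P : F^\natural(P) \to \gamma(F^\sharp(P))$ with $\proj{R^\natural}(\epsilon_P)$ an identity. This arrow is unique by faithfulness, so $\epsilon$ is well defined, and it is concrete by construction. It remains to check naturality: for $f \in \mathbf{C}(P,Q)$ we need
\[
\epsilon_Q \circ F^\natural(f) = \gamma(F^\sharp(f)) \circ \epsilon_P .
\]
Both sides lie in $\mathbf{R}^\natural(F^\natural(P), \gamma(F^\sharp(Q)))$, so by faithfulness of $\proj{R^\natural}$ it suffices to show they have the same image. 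Applying the functor $\proj{R^\natural}$ and using concreteness, the left side maps to $1 \circ \proj{C}(f) = \proj{C}(f)$. The right side maps to $\proj{R^\natural}(\gamma(F^\sharp(f))) \circ 1 = \proj{R^\sharp}(F^\sharp(f)) = \proj{C}(f)$. The images agree, so naturality holds.

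The only potentially subtle step is naturality, and faithfulness of the forgetful functor handles it. The remaining care is just bookkeeping: verifying that the identity arrows in the two directions are taken at the same base object, which is exactly what the fiber equality above provides.
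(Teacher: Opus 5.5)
Your proposal is correct and follows the paper's proof essentially step for step. The reverse direction is immediate from the definition of $\preceq$. For the forward direction you take $\epsilon_P$ to be the unique arrow over the identity, with existence from soundness and uniqueness from faithfulness of $\proj{R^\natural}$, and you get naturality by checking that both composites have image $\proj{C}(f)$ under the faithful functor $\proj{R^\natural}$. Your fiber computation $\proj{R^\natural}(F^\natural(P)) = \proj{C}(P) = \proj{R^\natural}(\gamma(F^\sharp(P)))$ spells out a step the paper leaves implicit.
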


As we will see in Section~\ref{sec:algebraic-laws-main}, concrete natural transformations are also a convenient mechanism with which to express algebraic laws.

\subsection{Best Abstractions } \label{sec:best-abstraction-weak-left-adj}
Another key element of abstract interpretation is the notion of ``best abstractions,'' which are formalized using Galois connections.  Intuitively, the ``best'' abstraction is the universal object among all abstractions, such that all other abstractions ``factor through'' it.  In this section, we formulate a notion of best abstractions for the purpose of program analysis in the language of concrete categories.

Best abstractions can be formalized as \textit{weak left adjoints} \cite{MZ:Kainen1971}.  We give a definition that is specialized towards our setting below, and provide a proof of equivalence to the standard definition in the appendix.

Let $\mathbf{C}^\natural$ and $\mathbf{C}^\sharp$ be concrete categories over some common base category $\cat{B}$, and let $\gamma : \mathbf{C}^\sharp \rightarrow \mathbf{C}^\natural$ be a concrete functor (one may think of $\mathbf{C}^\sharp$ as an abstraction of $\mathbf{C}^\natural$, and $\gamma$ as mapping each abstraction back to a representative in $\mathbf{C}^\natural$).
A \textbf{weak left adjoint} of $\gamma$ is a pair $\tuple{\alpha,\eta}$ where $\alpha$ associates every object $A \in \Ob{\mathbf{C}^\natural}$ with an object $\alpha(A) \in \Ob{\mathbf{C}^\sharp}$, and $\eta$ associates every object $A \in \Ob{\mathbf{C}^\natural}$ with an arrow $\eta_A \in \mathbf{B}(\proj{C^\natural}(A),\proj{C^\sharp}(\alpha(A)))$ such that the following properties hold.
First, we must have $\barr{A}{\eta_A}{\gamma(\alpha(A))}$---i.e., $\gamma(\alpha(A))$ approximates $A$ modulo the transformation $\eta_A$.  Second, $\alpha(A)$ is the closest approximation to $A$ within $\cat{C}^\sharp$, in the sense of the following universal property: for any object $B \in \cat{C}^\sharp$ that approximates $A$ modulo any transformation $f$, there is a way to factor the transformation $f$ as $\overline{f} \circ \eta_A$ such that $B$ approximates $\alpha(A)$ modulo $\overline{f}$.  More precisely:

\smallskip
\noindent\begin{minipage}{0.3\textwidth}
   \begin{tikzpicture}[thick]
     \matrix (m) [matrix of math nodes, row sep=2em, column sep=2.5em,
       text height=1.5ex, text depth=0.25ex] { & \proj{C^\sharp}(B)\\ \proj{C^\natural}(A) &
       \proj{C^\sharp}(\alpha(A))\\ }; \draw (m-2-1) edge[->] node[below]{$\eta_A$} (m-2-2); \draw (m-2-1) edge[->] node[above
       left]{$f$} (m-1-2); \draw (m-2-2) edge[->, dashed]
     node[right]{$\exists \overline{f}$} (m-1-2);
   \end{tikzpicture}
\end{minipage}%
\begin{minipage}{0.68\textwidth}
for any object $B \in \mathbf{C}^\sharp$ and any arrow $f \in \cat{B}(\proj{C^\natural}(A),\proj{C^\sharp}(B))$ such that $\barr{A}{f}{\gamma(B)}$, there is an arrow $\overline{f}$ such that $\barr{\alpha(A)}{\overline{f}}{B}$ and the diagram to the left commutes.
Note that $\alpha$ is not necessarily a functor, and $\overline{f}$ is not necessarily unique (if both hold, we recover the usual definition of an adjunction).
\end{minipage}
\smallskip

\begin{figure}[t]
\centering
  \begin{tikzpicture}
    \node (program) { \begin{minipage}{4.5cm}
  \begin{lstlisting}[style=base]
while (x >= 0 && y >= 0) {
  if (nondet()) {
     x -= z;
  } else {
     y -= z;
  }
  z++;
}
\end{lstlisting}
\end{minipage}};
    \node [right of=program, node distance=3.5cm,yshift=-0.5cm] (tf) { \(\begin{array}{l@{}l}
      &x \geq 0 \land y \geq 0\\
      \land & \left( \begin{array}{l@{}l}
       & (x' = x - z \land y' = y)\\
       \lor & (x' = x \land y' = y - z) \end{array} \right)\\
      \land & z' = z + 1
    \end{array}\)};
    \draw [decorate, decoration={brace, mirror, amplitude=4pt}]
      (tf.south west) -- (tf.south east) node [midway, below=4pt] {$F$};

  \node [right of=tf, node distance=6cm] (lds) { \( \begin{array}{l@{}l}
      &a \geq 0\\
      \land & a' = a - b\\
      \land & b' = b + 1
\end{array} \)};
    \draw [decorate, decoration={brace, mirror, amplitude=4pt}]
      (lds.south west) -- (lds.south east) node [midway, below=4pt] {$\alpha(F)$};

        \node [above of=lds,yshift=1cm] (dats2) { $q' = q + 2$ };

  \draw (tf) edge[->,thick] node[below]{$\underbrace{[a \mapsto x+y, b \mapsto z]}_{\eta_F} $} (lds);

    \draw (tf) edge[->,thick] node[above left]{$ [q \mapsto 2z] $} (dats2);
    \draw (lds) edge[->,thick,dashed] node[right]{$ [q \mapsto 2b] $} (dats2);

  \end{tikzpicture}
  \caption{Over-approximation of a loop by a deterministic affine transition systems. \label{fig:dats}}
\end{figure}
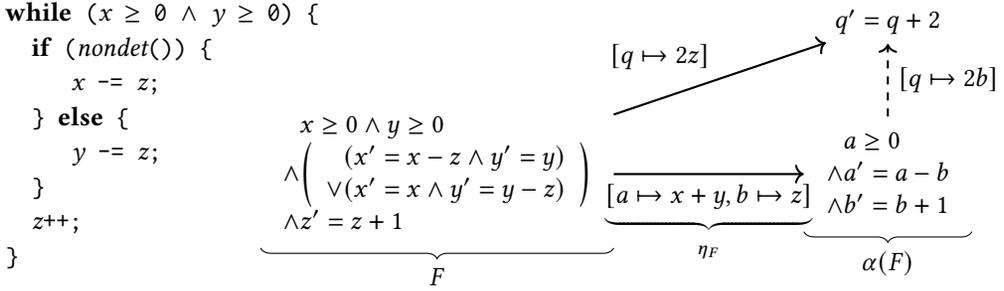

\begin{example} \label{ex:dats-adjoint}
\citet{CAV:ZK2021} present a termination analysis which is based on finding best abstractions of loops as deterministic affine transition systems (DATS).  Define a DATS over a set of variables $X$ to be a transition formula of the form $G \land T \in \TF(X)$ where $G$ is a guard (expressed over the pre-state variables $X$) and $T$ represents a linear transformation
$T = \bigwedge_{x \in X} x' = t_x + b_x$ (where $t_x$ is a linear term over $X$ and $b_x \in \mathbb{Q}$).
Let $\mathbf{DATS}$ denote the category of deterministic affine transition systems where arrows are linear simulations.
$\mathbf{DATS}$ is a subcategory of $\cat{TF}$, and so has a natural inclusion functor $\gamma : \cat{DATS} \rightarrow \cat{TF}$, and $\gamma$ has a weak left adjoint $\tuple{\alpha,\eta}$.
Figure~\ref{fig:dats} depicts an example of a transition formula $F$
along with its best abstraction, consisting of a DATS $\alpha(F)$ and a linear simulation $\eta_F$.  Observe that the DATS $\alpha(F)$ does not over-approximate $F$ (in the sense that $F$ does not entail $\alpha(F)$)--rather $\alpha(F)$ over-approximates $F$ \textit{modulo the transformation $\eta_F$} (i.e., $F$ entails $\alpha(F)[\eta_F,\eta_F']$).
\citet{CAV:ZK2021}'s termination analysis proves that the loop is terminating via asymptotic analysis of $\alpha(F)$.

The figure also depicts a second abstraction of $F$, consisting of a DATS $q' = q + 2$ and simulation $[q \mapsto 2x]$.
Intuitively, $(\alpha(F),\eta_F)$ is a closer approximation to $F$ than this second abstraction, and this fact is witnessed by the simulation $[q \mapsto 2b]$ from $\alpha(F)$ to $q' = q+2$.   One might thus think of weak left adjoints as Galois connections \textit{modulo transformations}.
\end{example}

%%% Local Variables:
%%% TeX-master: "main"
%%% End:

\section{A Lifting Strategy for Robust Analyses}
\label{sec:lifting}
This section presents a general strategy for obtaining robust  analyses.  The high-level idea is that if we have two concrete categories $\cat{C}$ and $\cat{D}$, and a concrete functor $\gamma : \cat{D} \rightarrow \cat{C}$ with a weak left adjoint, then a robust operation that is defined on $\mathbf{D}$ can be ``lifted'' to a robust operation that is defined on $\mathbf{C}$ (under certain technical assumptions).  For instance, one might think of $\mathbf{C}$ as a category of programs and $\cat{D}$ as a subcategory of programs of a restricted form, for which a (robust) operation $F$ is computable (for instance, programs in $\mathbb{D}$ have decidable reachability problems).  Our recipe ``lifts'' $F$ to an \textit{over-approximate} operation $\overline{F}$ that is defined for all programs (rather than just those in $\mathbf{D}$).  The analysis designer is responsible for the creative work of defining $\mathbf{D}$ and $F$ and showing that the functor $\gamma : \cat{D} \rightarrow \cat{C}$ has a (computable) weak left adjoint, but needn't prove soundness or robustness (since they follow from Theorems~\ref{thm:lift-soundness} and~\ref{thm:lift-robustness}).

We set the stage by considering the motivation behind a unified construction of several robust loop analyses (Section~\ref{sec:lt}).
We then describe the general strategy (Section~\ref{sec:lifted-op}) and formulate a soundness theorem (Section~\ref{sec:lifted-soundness}).  Finally, we show that lifted operations can be combined, and the lifting preserves algebraic laws (Section~\ref{sec:algebraic-laws-main}).

\subsection{Examples of Lifted Analyses} \label{sec:lt}

We begin by showing how the analyses introduced in Section~\ref{sec:overview} arise as instances of a common ``lifting'' pattern.  In the following, $\TF$ refers to the category in which the objects are transition formulas expressed in linear arithmetic and arrows are linear simulations (Example~\ref{ex:tf-b-category}). We think of $\textbf{TF}$ as the category that we wish to lift an operator to; the category that we lift from, and the operation we wish to lift, will vary.

\begin{paragraph}{Polyhedral Guard Analysis}

Define $\cat{PolyCart}$ (polyhedral cartesian transition formulas)
to be the subcategory of $\TF$
consisting of formulas of the form
$(\bigwedge_{i=1}^n s_i \geq a_i) \land (\bigwedge_{j=1}^m t_j \geq b_j)$ where each $s_i$ is a linear term over unprimed variables and each $t_i$ is a linear term over primed variables.
Observe that every polyhedral cartesian transition formula is transitively closed, so we can define a reflexive transitive closure operator $(-)^\star : \cat{PolyCart} \rightarrow \TF$ as $ F^\star \defeq X' = X \lor F$
for any $F$ (over the variables $X$).
Let $\gamma : \cat{PolyCart} \rightarrow \TF$ be the inclusion functor.
Observe that $\tuple{\alpha,\eta}$ is a weak left adjoint to $\gamma$ (see Corollary~\ref{cor:polyhedral-guard-analysis-sound-robust} in Appendix~\ref{sec:proofs}), where
$\alpha(G) \defeq \textit{Pre}(G) \land \textit{Post}(G)$, and $\eta$ is the identity (since $\alpha(G)$ lives in the same variable space as $G$, the substitution $[\eta_G,\eta_G']$ has no effect).  Polyhedral guard analysis can be understood as the operator
$G^{\star_{\text{PGA}}} = (\alpha(G)^\star)[\eta_G,\eta_G']$.
\end{paragraph}

\begin{paragraph}{Linear Recurrence Analysis}
Define $\mathbf{LT}_\bot$ (lossy translations) to be the subcategory of $\mathbf{TF}$ whose objects are formulas of the form $\bigwedge_{x \in X} x' \leq x + a_x$ or $\false$.
Computing the reflexive transitive closure of a formula in $\mathbf{LT}_\bot$ is straightforward, and gives rise to a concrete functor $(-)^\star : \mathbf{LT}_\bot \rightarrow \mathbf{TF}$, where for a formula $F \in \mathbf{LT}_\bot$ over variables $X$:
\[
\false^\star \defeq \bigwedge_{x \in X} x' = x \hspace*{2cm}
\left(\bigwedge_{x \in X} x' \leq x + a_x\right)^\star \defeq \exists k \in \mathbb{N}. k \geq 0 \land \bigwedge_{x \in X} x' \leq x + k a_x
\]

Let $\gamma : \mathbf{LT}_\bot \rightarrow \mathbf{TF}$ be the inclusion functor.
Observe that $\tuple{\alpha,\eta}$ is a weak left adjoint of $\gamma$ (see Corollary~\ref{cor:linear-rec-analysis-sound-robust} in Appendix~\ref{sec:proofs}), where
for any transition formula $G$ with $\textit{conv}(\Delta(G)) = \bigwedge_{i=1}^n t_i \leq b_i$,
$\alpha(G)$ is $\bigwedge_{i=1}^n y_i' \leq y_i + b_i$ (over fresh variables $y_1,\dots,y_n$) and
$\eta_G$ is the linear map corresponding to the substitution $\set{ y_i \mapsto \hat{t}_i }_{i=1}^n$ where $\hat{t}_i \defeq t_i[\Delta_X \mapsto X]$.
Intuitively, $\alpha(G)$ introduces a fresh variable for each recurrence relation, and
the $\eta_G$ represents the correspondence between these fresh variables and the linear terms they represent.
Linear recurrence analysis can be understood as the operator
$G^{\star_{\text{LRA}}} = (\alpha(G)^\star)[\eta_G,\eta_G']$.  (Note the right-hand side of this equation is syntactically identical to that of $(-)^{\star_{\text{PGA}}}$ above, but $\alpha$, $\eta$, and $(-)^\star$ are defined with respect to the category $\cat{LT}_\bot$ instead of $\cat{PolyCart}$).
\end{paragraph}

\begin{example} \label{ex:pga-lra-trace}
Table~\ref{tab:summaries} illustrates the steps of the above analyses.
Consider the loop body $G_2$ of $P_2$.
For \textit{LRA}, we compute $\Delta(G_2)$ and its convex hull, which gives $-\delta_x \leq -1 \land \delta_x \leq 1 \land \delta_y \leq 5$.  Each inequality $t_i \leq b_i$ yields a fresh variable $y_i$, giving $\alpha(G_2) = (y_1' \leq y_1 - 1 \land y_2' \leq y_2 + 1 \land y_3' \leq y_3 + 5)$ with $\eta = [y_1 \mapsto -x,\, y_2 \mapsto x,\, y_3 \mapsto y]$.  Applying the star and substituting via $[\eta_{G_2},\eta_{G_2}']$ yields the recurrence $\exists k \geq 0.\, x' = x + k \land y' \leq y + 5k$.
Observe that the linear simulation $\sigma = [i \mapsto x]$ from $G_2$ to $G_1$ (Example~\ref{ex:linear-sim-p1-p2}) is preserved: $\sigma$ is also a simulation from $G_2^{\star_{LRA}}$ to $G_1^{\star_{LRA}}$ (a consequence of Theorem~\ref{thm:lift-robustness} in the following).
\end{example}

\begin{paragraph}{Polyhedral Iteration Operator}
The polyhedral iteration operator defined in Section~\ref{sec:overview} combines polyhedral guard analysis and linear recurrence analysis, resulting in an analysis that produces more precise invariants than both combined.  This can be understood as an iteration operator induced by the product category $\textbf{PolyCart} \times \textbf{LT}_\bot$ (see Section~\ref{sec:algebraic-laws-main} and Example~\ref{ex:polyhedral-iteration-operator-through-combination} in the Appendix).
\end{paragraph}

\begin{paragraph}{Termination analysis via DATS Abstraction}

\citet{CAV:ZK2021}'s termination analysis (Example~\ref{ex:dats-adjoint}) lifts a termination analysis $(-)^\omega : \cat{DATS} \rightarrow \cat{SF}$ for deterministic affine transition systems to general transition formulas
$(-)^{\overline{\omega}} : \cat{TF} \rightarrow \cat{SF}$, by defining
$F^{\overline{\omega}} \defeq (\alpha(F))^\omega[\eta_F]$, where $\tuple{\alpha,\eta}$ is weak left adjoint to the inclusion functor $\gamma : \cat{DATS} \rightarrow \cat{TF}$.
\end{paragraph}

\subsection{Lifted Operations and Robustness} \label{sec:lifted-op}

Table~\ref{tbl:instances} lists more examples of ``lifting'' an analysis from a sub-Turing model of computation.  This section formalizes this pattern, providing a unified understanding of these analyses and an account of why they are sound and robust.

Suppose that $\cat{D}$ is a subcategory of the category of transition formulas $\TF$ (Example~\ref{ex:tf-b-category}), such that the inclusion functor $\gamma : \cat{D} \rightarrow \TF$ has a weak left adjoint $\tuple{\alpha,\eta}$.
The previous section gave examples in which a loop summarization analysis was lifted from an operator $(-)^\star : \cat{D} \rightarrow \TF$ by defining
$F^{\lifted{\star}} \defeq (\alpha(F)^\star)[\eta_F,\eta_F']$, and termination analysis was lifted from an operator $(-)^\omega : \cat{D} \rightarrow \textbf{SF}$ by defining $F^{\overline{\omega}} \defeq (\alpha(F)^\omega)[\eta_F]$.  Naturally, this pattern can be generalized to arbitrary operations and concrete categories.  The picture below depicts the general pattern of ``lifted'' operations (center) along with these two particular cases of loop summarization (left) and termination (right).
\begin{center}
  \begin{tikzpicture}[thick]
    \matrix (m) [matrix of math nodes, row sep=1.8em, column sep=2em,
      text height=1.5ex, text depth=0.25ex] { \mathbf{D} & \mathbf{TF}\\
                                              \mathbf{TF}  & \mathbf{Vect}_{\mathbb{Q}}\\ };
    \draw (m-2-2) edge[<-] node[right]{$\phi_{\mathbf{TF}}$} (m-1-2);
    \draw (m-1-1) edge[->] node[above]{$\star$} (m-1-2);
    \draw (m-2-1) edge[<-] node[left]{$\gamma$} (m-1-1);
    \draw (m-2-1) edge[->] node[below]{$\phi_{\mathbf{TF}}$} (m-2-2);
    \draw (m-2-1) edge[->,dashed] node[above left]{$\overline{\star}$} (m-1-2);
  \end{tikzpicture}
  \hspace*{0.5cm}
  \begin{tikzpicture}[thick]
  \matrix (m) [matrix of math nodes, row sep=1.8em, column sep=2em,
      text height=1.5ex, text depth=0.25ex] { \mathbf{D} & \mathbf{R}\\
                                              \mathbf{C}  & \mathbf{B}\\ };
    \draw (m-2-2) edge[<-] node[right]{$\proj{R}$} (m-1-2);
    \draw (m-1-1) edge[->] node[above]{$F$} (m-1-2);
    \draw (m-2-1) edge[<-] node[left]{$\gamma$} (m-1-1);
    \draw (m-2-1) edge[->,dashed] node[above left]{$\overline{F}$} (m-1-2);
    \draw (m-2-1) edge[->,dashed] node[below]{$\proj{C}$} (m-2-2);
    \draw (m-1-1) edge[->] node[above]{$F$} (m-1-2);
  \end{tikzpicture}
  \hspace*{0.5cm}
  \begin{tikzpicture}[thick]
  \matrix (m) [matrix of math nodes, row sep=1.8em, column sep=2em,
      text height=1.5ex, text depth=0.25ex] { \mathbf{D} & \mathbf{SF}\\
                                              \mathbf{TF}  & \mathbf{Vect}_{\mathbb{Q}}\\ };
    \draw (m-2-2) edge[<-] node[right]{$\phi_{\mathbf{SF}}$} (m-1-2);
    \draw (m-1-1) edge[->] node[above]{$\omega$} (m-1-2);
    \draw (m-2-1) edge[<-] node[left]{$\gamma$} (m-1-1);
    \draw (m-2-1) edge[->] node[below]{$\phi_{\mathbf{TF}}$} (m-2-2);
    \draw (m-2-1) edge[->,dashed] node[above left]{$\overline{\omega}$} (m-1-2);
  \end{tikzpicture}
  \end{center}

\begin{table}\footnotesize
\caption{Summary of robust analyses that can be understood as ``lifted'' from a sub-Turing model.  As a consequence of Theorem~\ref{thm:lift-robustness}, all listed analyses are robust with respect to linear simulations. \label{tbl:instances}}
\begin{tabular}{l|l|l}\toprule
Source & Model & ``Lifted'' Analysis\\ \midrule
& Lossy translation & Reflexive transitive closure\\
& Cartesian relations & Reflexive transitive closure\\
\cite{CAV:ZK2021} & Deterministic affine transition systems & (Eventual) non-termination analysis\\
\cite{CAV:SK2019} & $\begin{array}{@{}l@{}}\text{Rational Vector Addition System}\\ \hspace*{5pt}\text{with Resets ($\mathbb{Q}$-VASR)}\end{array}$ & Reflexive transitive closure\\
\cite{OOPSLA:PK2024} & Labeled $\mathbb{Q}$-VASR & Context-free reachability\\
\cite{POPL:CK2024} & Solvable transition ideals & Polynomial invariant generation
\\\bottomrule
\end{tabular}
\end{table}

The last prerequisite for formalizing ``lifting'' is a category-theoretic reformulation of the translation step ($[\eta,\eta']$ for loop summarization, $[\eta]$ for termination analysis).
Intuitively, given a concrete category $\mathbf{R}$ over $\mathbf{B}$, an object $B$ of $\mathbf{R}$, and an arrow $f \in \mathbf{B}(X,\proj{R}(B))$, we may think of the inverse image of $B$ under $f$ as the universal object $A \in \proj{R}^{-1}(X)$ such that there is some $g \in \mathbf{R}(A,B)$ with $\proj{R}(g) = f$ (i.e., $A$ is the greatest object that looks like an ``inverse'' of $X$ with respect to approximation pre-order in $\cat{R}$).  This intuition is formalized using \textit{cartesian morphisms} and \textit{fibred categories}
(due to \citet{SGA1}; see \citet{Book:BW1995} for an introduction---we specialize to the language of concrete categories below).

Let $\mathbf{C}$ be a concrete category over $\mathbf{B}$.
Let $B$ be an object of $\mathbf{C}$, and $f : X \rightarrow \proj{C}(B)$ be an arrow of $\mathbf{B}$.  A $\cat{C}$-arrow $u : A \rightarrow B$ is \textbf{cartesian} (for $f$ and $B$) if $\proj{C}(A) = X$, $\proj{C}(u) = f$, and for any object $Z \in \mathbf{C}$ and arrow $h : \proj{C}(Z) \rightarrow \proj{C}(B)$ such that
$\barr{Z}{h \circ f}{B}$, we have $\barr{Z}{h}{A}$.
We use $\bcarr{A}{f}{B}$ to denote that there exists a cartesian arrow $u \in \cat{C}(A,B)$ for $f$ and $B$.
A particular case that will be used in this paper (obtained by specializing $h$ to the identity morphism) is:
\begin{itemize}
\item[($\dagger$)] \label{cartesian} If $\bcarr{A}{f}{B}$ and $\barr{Z}{f}{B}$, then $Z \preceq A$.
\end{itemize}

Say that a concrete category $\mathbf{C}$ is a \textbf{fibred} (over its base category $\mathbf{B}$) if (1) every object
$B$ of $\mathbf{C}$ and arrow $f : A \rightarrow \proj{C}(B)$ of $\mathbf{B}$ has a cartesian arrow, and (2) the composition of cartesian arrows is cartesian.
In the following, we suppose that there is a canonical choice of cartesian arrow for any $f$ and $B$, and denote its domain by $\invimg{f}{B}$.  Then we observe that
$\bcarr{f^{-1}(B)}{f}{B}$, and that if
$\bcarr{A}{f}{B}$ and $\bcarr{B}{g}{C}$, then
$\bcarr{A}{g\circ f}{C}$.

\begin{example}
  $\mathbf{TF}$ is fibred over $\mathbf{Vect}_{\mathbb{Q}}$.
   For any transition formula $F \in \TF(X)$ and any
  linear map $\sigma : \mathbb{Q}^Y \rightarrow \mathbb{Q}^X$, we have that $\sigma \in \textbf{TF}(F[\sigma,\sigma'], F)$ is cartesian.

  Similarly, $\mathbf{SF}$ is fibred over $\mathbf{Vect}_{\mathbb{Q}}$.
   For any state formula $G \in \textbf{SF}(X)$ and any
  linear map $\sigma : \mathbb{Q}^Y \rightarrow \mathbb{Q}^X$, we have that $\sigma \in \textbf{SF}(G[\sigma], G)$ is cartesian.
\end{example}

\begin{definition} \label{def:lifting}
 Let $\mathbf{B}$ be a category, let
 $\mathbf{C}$, $\mathbf{D}$, $\mathbf{R}$ be concrete categories over $\mathbf{B}$, and let
 $\gamma : \mathbf{D} \rightarrow \mathbf{C}$ and
 $F : \mathbf{D} \rightarrow \mathbf{R}$ be concrete functors.
 Suppose that $\gamma$ has a weak left adjoint $\tuple{\alpha,\eta}$ and that $\mathbf{R}$ is fibred.  Define the \textit{lifting of $F$} to be $ \overline{F}(A) \defeq \eta_A^{-1}(F(\alpha(A)))\ . $
\end{definition}

\begin{restatable}[Robustness of lifted operations]{theorem}{liftRobustness}
\label{thm:lift-robustness}
  We have that $\overline{F}$ is a concrete functor from $\mathbf{C}$ to $\mathbf{R}$.
\end{restatable}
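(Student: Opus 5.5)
By Lemma~\ref{lem:robust-implies-concrete-functor}, it suffices to show that $\overline{F}$ is robust, viewed as a function $\Ob{C} \rightarrow \Ob{R}$. Robustness has two parts: $\overline{F}$ must preserve the base object, and it must preserve every arrow $f$ witnessed by $\barr{A}{f}{A'}$ in $\mathbf{C}$.

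The first part is immediate. By definition, $\overline{F}(A) = \eta_A^{-1}(F(\alpha(A)))$ is the domain of the canonical cartesian arrow for $\eta_A : \proj{C}(A) \rightarrow \proj{R}(F(\alpha(A)))$. Note that $\proj{R}(F(\alpha(A))) = \proj{D}(\alpha(A))$ because $F$ is concrete, so $\eta_A$ has the right codomain. The definition of a cartesian arrow then gives $\proj{R}(\overline{F}(A)) = \proj{C}(A)$.

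For the second part, suppose $\barr{A}{f}{A'}$ with $f \in \mathbf{B}(\proj{C}(A),\proj{C}(A'))$. The plan is to factor $f$ through the weak left adjoint and then use the universal property of the cartesian arrow for $A'$.

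First, since $\barr{A'}{\eta_{A'}}{\gamma(\alpha(A'))}$, the Composition property gives $\barr{A}{\eta_{A'} \circ f}{\gamma(\alpha(A'))}$.

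Second, apply the universal property of $\tuple{\alpha,\eta}$ at $A$, taking $B = \alpha(A')$ and the arrow $\eta_{A'} \circ f$. This yields some $\overline{g}$ with $\barr{\alpha(A)}{\overline{g}}{\alpha(A')}$ and $\overline{g} \circ \eta_A = \eta_{A'} \circ f$.

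Third, since $F$ is a concrete functor, $\barr{F(\alpha(A))}{\overline{g}}{F(\alpha(A'))}$.

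Fourth, we have $\barr{\overline{F}(A)}{\eta_A}{F(\alpha(A))}$ from the canonical cartesian arrow. Composing with the arrow from the third step gives
\[
\barr{\overline{F}(A)}{\overline{g}\circ\eta_A}{F(\alpha(A'))},
\]
that is, $\barr{\overline{F}(A)}{\eta_{A'}\circ f}{F(\alpha(A'))}$.

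Finally, $\bcarr{\overline{F}(A')}{\eta_{A'}}{F(\alpha(A'))}$ holds by construction. We now instantiate the definition of a cartesian arrow with $Z = \overline{F}(A)$ and $h = f$. We may do so because $\proj{R}(\overline{F}(A)) = \proj{C}(A)$, which is the domain of $f$. The definition then yields $\barr{\overline{F}(A)}{f}{\overline{F}(A')}$, which is exactly the robustness condition.

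The main point requiring care is the last step. Cartesianness is stated for arrows $h$ whose composite with the cartesian base arrow is realized, so we need the commuting square $\overline{g}\circ\eta_A = \eta_{A'}\circ f$ supplied by the weak adjoint to convert the realized arrow $\overline{g}\circ\eta_A$ into $\eta_{A'}\circ f$ before cartesianness applies. We also need to check that each base arrow is used with its correct domain and codomain, using concreteness of $F$ and $\gamma$ to identify base objects such as $\proj{C}(\gamma(\alpha(A'))) = \proj{D}(\alpha(A'))$.

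Nothing here depends on the choice of $\overline{g}$, which need not be unique: only the existence of the realized arrow is used, and Lemma~\ref{lem:robust-implies-concrete-functor} then defines the arrow action of $\overline{F}$ uniquely via faithfulness. Identity and composition preservation therefore come for free from that lemma.
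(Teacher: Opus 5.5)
Your proof is correct. Everything up to the last step matches the paper's argument exactly: composing with $\eta_{A'}$, factoring $\eta_{A'}\circ f$ through $\eta_A$ via the weak left adjoint, transporting $\overline{g}$ along $F$, and precomposing with the cartesian arrow over $\eta_A$.

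The difference is in how you use the cartesian arrow over $\eta_{A'}$. You apply its full universal property with $h = f$. The paper only ever uses the identity special case $(\dagger)$, so it takes a detour:
\begin{itemize}
\item it forms $f^{-1}(\overline{F}(A'))$;
\item it uses the fibration axiom that cartesian arrows compose, obtaining $\bcarr{f^{-1}(\overline{F}(A'))}{\eta_{A'}\circ f}{F(\alpha(A'))}$;
\item it concludes $\overline{F}(A) \preceq f^{-1}(\overline{F}(A'))$ by $(\dagger)$;
\item it then composes with the cartesian arrow over $f$.
\end{itemize}

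Your version is shorter and needs less of the fibred structure. It uses only the cartesian lifts along the $\eta$'s, which are required anyway to define $\overline{F}$. It needs neither the lift along $f$ nor closure of cartesian arrows under composition.

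It does depend on reading the paper's definition of a cartesian arrow in its standard form. In your instantiation, that means $h : \proj{R}(Z) \rightarrow \proj{C}(A')$ with hypothesis $\barr{Z}{\eta_{A'}\circ h}{F(\alpha(A'))}$. As printed, the type of $h$ and the order of composition in that definition are garbled, and your reading is the intended one.

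The paper's detour, in turn, would still work if ``cartesian'' meant only the weaker property $(\dagger)$. That is exactly why closure under composition is built into its definition of a fibred category.
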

\begin{proof}
We must show that for any $\barr{A}{f}{B}$ in $\mathbf{C}$, we have $\barr{\overline{F}(A)}{f}{\overline{F}(B)}$.
Since $\barr{A}{f}{B}$ and $\barr{B}{\eta_B}{\gamma(\alpha(B))}$, composing gives $\barr{A}{\eta_B \circ f}{\gamma(\alpha(B))}$.  The universal property of the weak left adjoint yields $\overline{f}$ with $\barr{\alpha(A)}{\overline{f}}{\alpha(B)}$ and $\overline{f} \circ \eta_A = \eta_B \circ f$.

\smallskip\noindent\begin{minipage}{0.3\textwidth}
  \begin{tikzpicture}[thick]
    \matrix (m) [matrix of math nodes, row sep=1.8em, column sep=2em,
      text height=1.5ex, text depth=0.25ex] {
      \proj{C}(A) & \proj{D}(\alpha(A))\\
    \proj{C}(B)  & \proj{D}(\alpha(B))\\ };
    \draw (m-2-2) edge[<-,dashed] node[right]{$\overline{f}$} (m-1-2);
    \draw (m-1-1) edge[->] node[above]{$\eta_A$} (m-1-2);
    \draw (m-2-1) edge[<-] node[left]{$f$} (m-1-1);
    \draw (m-2-1) edge[->] node[below]{$\eta_B$} (m-2-2);
  \end{tikzpicture}
\end{minipage}%
\begin{minipage}{0.68\textwidth}
Since $F$ is a concrete functor, we have $\barr{F(\alpha(A))}{\overline{f}}{F(\alpha(B))}$.  Composing with $\bcarr{\overline{F}(A)}{\eta_A}{F(\alpha(A))}$ and using $\overline{f} \circ \eta_A = \eta_B \circ f$, we get $\barr{\overline{F}(A)}{\eta_B \circ f}{F(\alpha(B))}$.
Meanwhile, $\bcarr{\overline{F}(B)}{\eta_B}{F(\alpha(B))}$ and $\bcarr{f^{-1}(\overline{F}(B))}{f}{\overline{F}(B)}$; composing cartesian arrows gives $\bcarr{f^{-1}(\overline{F}(B))}{\eta_B \circ f}{F(\alpha(B))}$.  We have two arrows to $F(\alpha(B))$ over $\eta_B \circ f$, the latter cartesian; by ($\dagger$), $\overline{F}(A) \preceq f^{-1}(\overline{F}(B))$, and thus $\barr{\overline{F}(A)}{f}{\overline{F}(B)}$.\qedhere
\end{minipage}
\end{proof}

\subsection{Soundness of Lifted Operations} \label{sec:lifted-soundness}

This section shows that lifting preserves soundness: a sound operation on $\mathbf{D}$ lifts to a sound operation on $\mathbf{C}$.

Say that a functor $F$ is \textbf{cartesian} if for every cartesian arrow $f$, $F(f)$ is also cartesian. If we think of a concretization functor $\gamma_R : \mathbf{R}^\sharp \rightarrow \mathbf{R}^\natural$, then $\gamma_R$ being cartesian intuitively means that inverse images in $\mathbf{R}^\sharp$ are sound with respect to $\mathbf{R}^\natural$. In particular, the concretization functor from  $\TF$ to $\cat{TS}$ is cartesian, since $\TF$ is closed under inverse images of linear maps. Similarly, $\gamma_R: \mathbf{SF} \rightarrow \mathbf{SubSet}$ is cartesian.

\begin{theorem}[Soundness of lifted operations]
\label{thm:lift-soundness}
  Suppose that we have the following (not commuting) diagram of concrete categories over $\mathbf{B}$:

 \begin{minipage}{0.2\textwidth}
  \begin{tikzpicture}[thick]
    \matrix (m) [matrix of math nodes, row sep=1.8em, column sep=2em,
      text height=1.5ex, text depth=0.25ex] { \mathbf{D} & \mathbf{R}^\sharp \\
      \mathbf{C}  & \mathbf{R}^\natural \\ };
    \draw (m-2-2) edge[<-] node[right]{$\gamma_R$} (m-1-2);
    \draw (m-1-1) edge[->] node[above]{$F^\sharp$} (m-1-2);
    \draw (m-1-1) edge[->] node[left]{$\gamma_D$} (m-2-1);
    \draw (m-2-1) edge[->] node[below]{$F^\natural$} (m-2-2);
    \draw (m-2-1) edge[->,dashed] node[xshift=-3pt,yshift=6pt]{$\overline{F}^\sharp$} (m-1-2);
  \end{tikzpicture}
  \end{minipage}
  \begin{minipage}{0.75\textwidth}
  where:
  \begin{itemize}
      \item $\mathbf{C}$, $\mathbf{D}$, $\mathbf{R}^\natural$, and $\mathbf{R}^\sharp$ are concrete categories over $\mathbf{B}$;
      \item $\gamma_D: \mathbf{D} \to \mathbf{C}$ is a concrete functor that has a weak left adjoint $\langle \alpha, \eta \rangle$;
      \item $F^\natural : \mathbf{C} \rightarrow \mathbf{R}^\natural$, $F^\sharp : \mathbf{D} \rightarrow \mathbf{R}^\sharp$; and $\gamma_R : \mathbf{R}^\sharp \rightarrow \mathbf{R}^\natural$ are concrete functors;
      \item $F^\sharp$ is sound with respect to $F^\natural \circ \gamma_D$.
  \end{itemize}
  \end{minipage}

   Supposing that $\mathbf{R}^\sharp$ is fibred (so that the lifted operation $\overline{F}^\sharp : \mathbf{C} \rightarrow \mathbf{R}^\sharp$ is well-defined) and $\gamma_R$ is cartesian, then $\overline{F}^\sharp$ is sound with respect to $F^\natural$.
\end{theorem}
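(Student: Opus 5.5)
Fix $A \in \Ob{C}$. Our goal is to show $F^\natural(A) \preceq \gamma_R(\overline{F}^\sharp(A))$, where $\overline{F}^\sharp(A) = \eta_A^{-1}(F^\sharp(\alpha(A)))$. The plan is to produce an arrow over $\eta_A$ from $F^\natural(A)$ into $\gamma_R(F^\sharp(\alpha(A)))$. Then we use that $\gamma_R$ sends the canonical cartesian arrow $\bcarr{\overline{F}^\sharp(A)}{\eta_A}{F^\sharp(\alpha(A))}$ to a cartesian arrow, and conclude with ($\dagger$).

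\emph{Step 1 (push $A$ forward along $\eta_A$).} By the definition of the weak left adjoint, $\barr{A}{\eta_A}{\gamma_D(\alpha(A))}$ holds in $\mathbf{C}$. Since $F^\natural$ is a concrete functor, it preserves this arrow, so $\barr{F^\natural(A)}{\eta_A}{F^\natural(\gamma_D(\alpha(A)))}$.

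\emph{Step 2 (soundness on $\mathbf{D}$).} By hypothesis, $F^\sharp$ is sound with respect to $F^\natural \circ \gamma_D$. Applying this to the object $\alpha(A)$ of $\mathbf{D}$ gives $F^\natural(\gamma_D(\alpha(A))) \preceq \gamma_R(F^\sharp(\alpha(A)))$, which is an arrow over the identity. Composing it with Step 1, and using the \textit{Composition} property together with $1 \circ \eta_A = \eta_A$, we get $\barr{F^\natural(A)}{\eta_A}{\gamma_R(F^\sharp(\alpha(A)))}$.

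\emph{Step 3 (cartesian transport).} The canonical arrow $u : \overline{F}^\sharp(A) \to F^\sharp(\alpha(A))$ is cartesian for $\eta_A$. Since $\gamma_R$ is a cartesian concrete functor, $\gamma_R(u)$ is cartesian in $\mathbf{R}^\natural$ and lies over $\eta_A$. This gives $\bcarr{\gamma_R(\overline{F}^\sharp(A))}{\eta_A}{\gamma_R(F^\sharp(\alpha(A)))}$. We also need $\proj{R^\natural}(\gamma_R(\overline{F}^\sharp(A))) = \proj{C}(A) = \proj{R^\natural}(F^\natural(A))$; this follows from concreteness of $\gamma_R$ and $F^\natural$ and from the fact that $\overline{F}^\sharp(A)$ lies in the fiber over the domain of $\eta_A$. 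Applying ($\dagger$) to the arrow from Step 2 and this cartesian arrow then yields $F^\natural(A) \preceq \gamma_R(\overline{F}^\sharp(A))$, which is soundness in the sense of Definition~\ref{def:soundness}. If we want the result phrased as a concrete natural transformation, Lemma~\ref{lem:soundness-from-nat-transformation} applies, given that $\overline{F}^\sharp$ is a concrete functor by Theorem~\ref{thm:lift-robustness}.

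The main obstacle is Step 3. Everything else is bookkeeping with the $\barr{}{}{}$ relation, but this step must use the definition of a cartesian functor correctly. The point to check is that ``$F(f)$ is cartesian'' for a $\mathbf{R}^\sharp$-arrow $f$ means cartesian with respect to $\proj{R^\natural}(\gamma_R(f)) = \proj{R^\sharp}(f) = \eta_A$ and target $\gamma_R(F^\sharp(\alpha(A)))$; this holds because $\gamma_R$ is concrete. Once that is confirmed, ($\dagger$) applies directly with $Z = F^\natural(A)$ and $f = \eta_A$.
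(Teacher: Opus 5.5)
Your proposal is correct and matches the paper's proof step for step: you push $A$ along $\eta_A$ via $F^\natural$, compose with the soundness of $F^\sharp$ at $\alpha(A)$, transport the canonical cartesian arrow through $\gamma_R$, and conclude with ($\dagger$). Your extra checks are harmless additions that the paper leaves implicit: that both objects lie in the same fibre, and the optional rephrasing via Lemma~\ref{lem:soundness-from-nat-transformation}.
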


\begin{proof}
    For any object $A$ of $\mathbf{C}$, we must show $F^\natural(A) \preceq \gamma_R(\overline{F}^\sharp(A))$.
Since $F^\natural$ is a concrete functor and $\barr{A}{\eta_A}{\gamma_D(\alpha(A))}$, we have $\barr{F^\natural(A)}{\eta_A}{F^\natural(\gamma_D(\alpha(A)))}$.  Soundness of $F^\sharp$ on $\alpha(A)$ gives $F^\natural(\gamma_D(\alpha(A))) \preceq \gamma_R(F^\sharp(\alpha(A)))$.  Composing the above we get $\barr{F^\natural(A)}{\eta_A}{\gamma_R(F^\sharp(\alpha(A)))}. $
By definition, $\overline{F}^\sharp(A) = \eta_A^{-1}(F^\sharp(\alpha(A)))$, giving a cartesian arrow $\bcarr{\overline{F}^\sharp(A)}{\eta_A}{F^\sharp(\alpha(A))}$.  Since $\gamma_R$ is a cartesian functor, $\bcarr{\gamma_R(\overline{F}^\sharp(A))}{\eta_A}{\gamma_R(F^\sharp(\alpha(A)))}$.  We now have two arrows to $\gamma_R(F^\sharp(\alpha(A)))$ over $\eta_A$, the latter cartesian; by ($\dagger$), $F^\natural(A) \preceq \gamma_R(\overline{F}^\sharp(A))$.
\end{proof}

\subsection{Combinations and Algebraic Properties of Lifted Operations} \label{sec:algebraic-laws-main}

This section shows that (in a certain sense) if an operation $F : \cat{D} \rightarrow \cat{R}$ satisfies a certain kind of (in)equational law, then that law is also satisfied by its lifting $\overline{F} : \cat{C} \rightarrow \cat{R}$ (obtained by Definition~\ref{def:lifting}).  Such laws amount to another kind of robustness guarantee (further explored in Section~\ref{sec:apa}).

As an example, suppose that $\cat{D}$ is a subcategory of $\cat{TF}$ equipped with a reflexive transitive closure operator $(-)^\star : \cat{D} \rightarrow \cat{TF}$, and let
$(-)^{\overline{\star}} : \cat{TF} \rightarrow \cat{TF}$ denote its lifting.  Since for any transition formula $F$ in $\cat{D}$, $F^\star$ is transitively closed, we have the law $\forall F \in \Ob{\cat{D}}. F^\star \seq F^\star \leq F^\star$, or equivalently
$\forall F \in \Ob{\cat{D}}. (F^\star)^2 \leq F^\star$
(where $(-)^2$ is a functor mapping each transition formula $G$ to $G \seq G$).  As a consequence of Theorem~\ref{thm:generalized_extension} below, we can raise this inequation to one for the lifted operator $(-)^{\overline{\star}}$, yielding the law
$\forall F \in \Ob{\cat{TF}}. (F^{\overline{\star}})^2 \leq F^{\overline{\star}}$.

In the technical statement of the lifting theorem (below), $G : \cat{D} \rightarrow \cat{R}$ is an operation that is to be lifted to $\overline{G} : \cat{C} \rightarrow \cat{R}$.  The left-hand-side and right-hand-sides of an inequational law is represented as the composition of functors $F_1GH$ and $F_2G$, respectively.  To instantiate this theorem to lift the transitivity property above, we
set $F_1$ to $(-)^2$, $G$ to $(-)^\star$, and $H$ and $F_2$ to identity functors.  In Appendix~\ref{sec:algebraic-laws}, we show how several more laws can be obtained by instantiating these functors appropriately, including:
\begin{itemize}
\item \textit{Transitivity:} $F^{\overline{\star}} \seq F^{\overline{\star}} = F^{\overline{\star}}$ (Corollary~\ref{cor:law-transitivity})
\item \textit{Unrolling:} if $(F^n)^{\overline{\star}} \leq F^{\overline{\star}}$ (Corollary~\ref{cor:law-unrolling})
\end{itemize}

\begin{restatable}[Lifting]{theorem}{generalizedExtension} \label{thm:generalized_extension}
Let $\mathbf{B}$ be a category, and let $\mathbf{C}$, $\mathbf{D}$, $\mathbf{R}$, and $\mathbf{R}'$ be concrete categories over $\mathbf{B}$. Let $\gamma: \mathbf{D} \to \mathbf{C}$ be a concrete functor that has a weak left adjoint $\tuple{\alpha, \eta}$.
Let $\functor{H}: \mathbf{D} \to \mathbf{D}$ and $\widetilde{\functor{H}}: \mathbf{C} \to \mathbf{C}$ be concrete functors such that $\widetilde{\functor{H}}\gamma = \gamma\functor{H}$.
Let $\functor{G}: \mathbf{D} \to \mathbf{R}$, $\functor{F}_1: \mathbf{R} \to \mathbf{R}'$, and $\functor{F}_2: \mathbf{R} \to \mathbf{R}'$ be concrete functors.
Suppose that $\mathbf{R}$ is fibred, and that $\functor{F}_2$ is a cartesian functor. Let $\lifted{\functor{G}}$ be the lifting of $\functor{G}$.
If $\functor{F}_1(\functor{G}(\functor{H}(d))) \preceq \functor{F}_2(\functor{G}(d))$ for all objects $d$ of $\cat{D}$, then $\functor{F}_1(\lifted{\functor{G}}(\widetilde{\functor{H}}(c))) \preceq \functor{F}_2(\lifted{\functor{G}}(c))$ for all objects $c$ of $\cat{C}$.
\end{restatable}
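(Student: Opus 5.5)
The plan is to imitate the proofs of Theorems~\ref{thm:lift-robustness} and~\ref{thm:lift-soundness}. Fix an object $c$ of $\mathbf{C}$. Since $\widetilde{H}$, $G$, $F_1$, $F_2$ are concrete, both sides of the desired inequation live over $\proj{C}(c)$. We have $\lifted{G}(c) = \eta_c^{-1}(G(\alpha(c)))$ with a cartesian arrow $\bcarr{\lifted{G}(c)}{\eta_c}{G(\alpha(c))}$. Because $F_2$ is a cartesian functor, this yields $\bcarr{F_2(\lifted{G}(c))}{\eta_c}{F_2(G(\alpha(c)))}$. By ($\dagger$), it then suffices to exhibit an arrow $\barr{F_1(\lifted{G}(\widetilde{H}(c)))}{\eta_c}{F_2(G(\alpha(c)))}$.

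To construct this arrow, I would first relate $\alpha(\widetilde{H}(c))$ to $H(\alpha(c))$.
\begin{itemize}
\item Start from $\barr{c}{\eta_c}{\gamma(\alpha(c))}$ and apply the concrete functor $\widetilde{H}$. This gives $\barr{\widetilde{H}(c)}{\eta_c}{\widetilde{H}(\gamma(\alpha(c)))}$, and the right-hand side equals $\gamma(H(\alpha(c)))$ by the hypothesis $\widetilde{H}\gamma = \gamma H$. Concreteness of $\widetilde{H}$ ensures $\proj{C}(\widetilde{H}(c)) = \proj{C}(c)$, so $\eta_c$ is a legitimate base arrow here.
\item Apply the universal property of the weak left adjoint at the object $\widetilde{H}(c)$, with $B = H(\alpha(c))$ and $f = \eta_c$. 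This produces an arrow $k$ with $\barr{\alpha(\widetilde{H}(c))}{k}{H(\alpha(c))}$ and $k \circ \eta_{\widetilde{H}(c)} = \eta_c$.
\item Apply $G$ to obtain $\barr{G(\alpha(\widetilde{H}(c)))}{k}{G(H(\alpha(c)))}$.
\item Precompose with the cartesian arrow $\bcarr{\lifted{G}(\widetilde{H}(c))}{\eta_{\widetilde{H}(c)}}{G(\alpha(\widetilde{H}(c)))}$. This gives $\barr{\lifted{G}(\widetilde{H}(c))}{\eta_c}{G(H(\alpha(c)))}$.
\end{itemize}

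Next I would apply $F_1$, giving $\barr{F_1(\lifted{G}(\widetilde{H}(c)))}{\eta_c}{F_1(G(H(\alpha(c))))}$. The hypothesis instantiated at $d = \alpha(c)$ gives $F_1(G(H(\alpha(c)))) \preceq F_2(G(\alpha(c)))$, which is an arrow over the identity. Composing this arrow (the Composition property) gives the required arrow $\barr{F_1(\lifted{G}(\widetilde{H}(c)))}{\eta_c}{F_2(G(\alpha(c)))}$. Then ($\dagger$), applied against the cartesian arrow from the first paragraph, yields $F_1(\lifted{G}(\widetilde{H}(c))) \preceq F_2(\lifted{G}(c))$.

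The main obstacle is the weak-adjoint step. The naive guess would compare $\alpha(\widetilde{H}(c))$ with $H(\alpha(c))$ via some functoriality of $\alpha$, but $\alpha$ is not a functor. Instead, one must invoke the universal property at the object $\widetilde{H}(c)$, using the intertwining hypothesis $\widetilde{H}\gamma = \gamma H$ to make $H(\alpha(c))$ a valid target. The equation $k \circ \eta_{\widetilde{H}(c)} = \eta_c$ is exactly what makes the base arrows line up with the cartesian arrow for $\lifted{G}(c)$. Once that factorization is in hand, the remaining steps are routine bookkeeping with Composition and ($\dagger$).
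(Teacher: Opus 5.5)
Your proposal is correct and follows the paper's proof essentially step for step: you get the comparison arrow $\barr{\alpha(\widetilde{H}(c))}{k}{H(\alpha(c))}$ with $k \circ \eta_{\widetilde{H}(c)} = \eta_c$ from the weak-adjoint universal property at $\widetilde{H}(c)$ (this $k$ is the paper's $\overline{\eta_A}$). You then chain it with the cartesian arrow for $\lifted{G}(\widetilde{H}(c))$ and the hypothesis at $d=\alpha(c)$, and conclude by ($\dagger$) against the cartesian arrow that $F_2$ produces for $\lifted{G}(c)$. The only cosmetic difference is that you compose in $\mathbf{R}$ before applying $F_1$, whereas the paper applies $F_1$ to each link separately and composes in $\mathbf{R}'$.
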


Finally, we observe that lifted operations can be combined.
If two subcategories $\mathbf{D}$ and $\mathbf{E}$ each admit best abstractions for $\cat{C}$, then so does the product $\mathbf{D} \times \mathbf{E}$ (Appendix~\ref{sec:domain-combinations}), yielding a more precise analysis that inherits all robustness and soundness properties. For example, products of domains like $\mathbb{Q}\text{-VASR} \times \mathbf{LT}_\bot$ (both from Table~\ref{tbl:instances}) can yield new robust analyses for computing reflexive transitive closure.

%%% Local Variables:
%%% TeX-master: "main"
%%% End:

\section{Robust Algebraic Program Analysis} \label{sec:apa}

The previous section demonstrated a recipe for designing robust program analysis operations,
which generalizes a pattern seen in several algebraic program analyses (Table~\ref{tbl:instances}).  In this section, we investigate the robustness properties of such analyses.  That is, assuming that each individual operation of an algebraic program analysis is robust, in what sense is the overall analysis robust?

An algebraic program analysis operates by (1) parsing the program to produce a flow graph $G$ (over some algebra $A$) that represents it, and then (2) using a \textit{summarization} algorithm (such as \citet{JACM:Tarjan1981}'s) to produce a function $s : V_G \rightarrow A$ that maps each vertex $v$ to a summary $s(v)$ that over-approximates the semantics of all program paths from the entry $r_G$ to $v$.  To formalize robustness of an algebraic analysis, we restrict our attention to step (2)---that is, we consider transformations in flow graphs, and the sense in which (some) summarization algorithms are robust (assuming that each algebraic operation is robust).

To motivate our discussion, consider the flow graphs in Figure~\ref{fig:overview-flowgraphs}.
$G_1$ simulates $G_2$ in the sense that there is a substitution
$f = [i \mapsto x]$ mapping the variables of $G_2$ to linear terms over the variables of $G_1$, and a function $h : V_{G_1} \rightarrow V_{G_2}$ from the vertices of $G_1$ to the vertices of $G_2$ such that for each edge $\tuple{u,v}$ in $G_2$, there is a corresponding path in $G_1$ from $h(u)$ to $h(v)$ that simulates it (modulo the transformation $f$).\footnote{For $\tuple{1,2}$, $\tuple{3,4}$, and $\tuple{4,5}$, the simulating path is simply  the corresponding edge in $G_1$; edges $\tuple{2,3}$ and $\tuple{5,3}$ do not change the value of $x$, and the simulating path is the empty path from B to B in $G_1$.}  When an algebraic program analysis computes a summary assignment $s_1 : V_{G_1} \rightarrow \mathbf{TF}(\set{i})$ for $G_1$ and $s_2 : V_{G_2} \rightarrow \mathbf{TF}(\set{x,y})$, they are similarly related: for each vertex $v$, $s_1(f(v))$ simulates $s_2(v)$ (modulo the transformation $f$).   This is the case for all of the analyses derived using the pattern in Section~\ref{sec:lifting}.

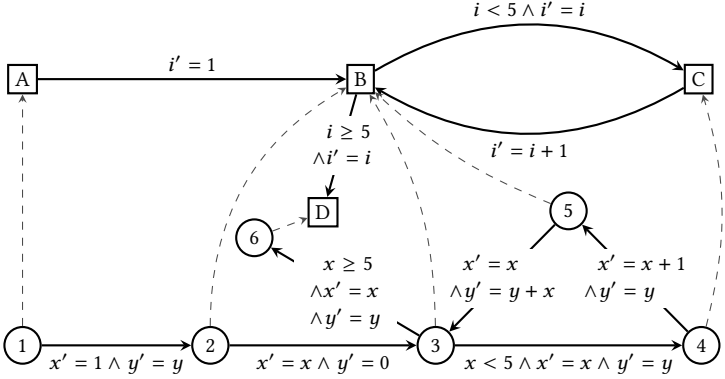
\begin{figure}
\footnotesize
\begin{tikzpicture}[>=stealth,thick]
  \node [rectangle,draw](A) {A};
  \node [rectangle,draw,right of=A, node distance=4.5cm] (B) {B};
  \node [rectangle,draw,right of=B, node distance=4.5cm] (C) {C};
  \node [rectangle,draw,below of=B, node distance=1.75cm,xshift=-0.5cm] (D) {D};
  \draw (A) edge[->] node[above]{$i' = 1$} (B);
  \draw (B) edge[->,bend left] node[above]{$i < 5 \land i' = i$} (C);
  \draw (C) edge[->,bend left] node[below]{$i' = i + 1$} (B);
  \draw (B) edge[->] node[fill=white,inner sep=1pt]{$\begin{array}{l@{}l}&i \geq 5\\ \land& i' = i\end{array}$} (D);

  \node [circle,draw,below of=A, node distance=3.5cm] (1) {1};
  \node [circle,draw,right of=1, node distance=2.5cm] (2) {2};
  \node [circle,draw,right of=2, node distance=3cm] (3) {3};
  \node [circle,draw,above right of=3, node distance=2.5cm] (5) {5};
  \node [circle,draw,below right of=5, node distance=2.5cm] (4) {4};

  \node [circle,draw,above left of=3, node distance=2cm,xshift=-1cm] (6) {6};
  \draw (1) edge[->] node[below]{$x' = 1 \land y' = y$} (2);
  \draw (2) edge[->] node[below]{$x' = x \land y' = 0$} (3);
\draw (3) edge[->] node[below]{$x < 5 \land x' = x \land y' = y$} (4);
\draw (4) edge[->] node[fill=white,inner sep=1pt]{$\begin{array}{l@{}l}&x' = x + 1\\\land& y' = y\end{array}$} (5);
\draw (5) edge[->] node[fill=white,inner sep=1pt]{$\begin{array}{l@{}l}&x' = x\\\land& y' = y + x\end{array}$} (3);

\draw (3) edge[->] node[fill=white,inner sep=1pt]{$\begin{array}{l@{}l}&x \geq 5\\ \land & x' = x\\ \land& y' = y\end{array}$} (6);

\draw (1) edge[->,dashed,gray,thin] (A);
\draw (2) edge[->,dashed,gray, bend left,thin] (B);
\draw (3) edge[->,dashed,gray,thin,bend right=15] (B);
\draw (4) edge[->,dashed,gray,thin, bend right=15] (C);
\draw (5) edge[->,dashed,gray,bend left=10,thin] (B);
\draw (6) edge[->,dashed,gray,thin] (D);

\end{tikzpicture}
\caption{Two flow graphs, corresponding to the programs $G_1$ from Figure~\ref{fig:overview-ex1} (above) and $G_2$ from Figure~\ref{fig:overview-ex2} (below).  The dashed lines indicate a stuttering simulation $\tuple{h, f}$ (where $h$ maps $1 \mapsto \text{A}$, $\{2,3,5\} \mapsto \text{B}$, $4 \mapsto \text{C}$, $6 \mapsto \text{D}$, and $f = [i \mapsto x]$) that preserves loops from the flow graph below to the one above.  The summaries at the loop headers B and 3 are the loop summaries of Table~\ref{tab:summaries}.  \label{fig:overview-flowgraphs}}
\end{figure}

The input to an algebraic program analysis is a flow graph $G$ with weights drawn from some algebra $A$, and the output is a summary assignment $s : V_G \rightarrow A$.  Following the framework introduced in Section~\ref{sec:robustness}, we wish to define concrete categories for flow graphs and summary assignments (over some common base category) such that summarization is robust (under the assumption that each algebraic operation is robust).  The remainder of this section formalizes robust algebraic program analysis and summary assignments (Section~\ref{sec:summary-assignemnt}), vertex elimination (Section~\ref{sec:robust-vertex-elimination}), and the class of transformations under which it is robust (Section~\ref{sec:continuous-maps}).

\subsection{Robust Algebraic Program Analysis}
\label{sec:summary-assignemnt}

This section defines the structure of a robust algebraic program analysis.  \textsc{Informally, } an algebraic program analysis is robust (with respect to some base category $\cat{B}$) if all of its operators are robust and satisfy the Pre-Kleene algebra laws.  To formalize this, we must generalize Pre-Kleene algebras to a setting wherein the the universe is a concrete category $\cat{C}$ over $\cat{B}$ rather than a set, and operators are concrete functors rather than functions.  An $n$-ary operator on $\cat{C}$ corresponds to functor from the $n$-fold product $\mathbf{C} \times_{\mathbf{B}} \cdots \times_{\mathbf{B}} \mathbf{C}$ to $\mathbf{C}$, where the product operator is defined as follows (and noting that ``$0$ times'' $\mathbf{C}$ is $\mathbf{B}$).

\begin{definition} \label{def:pullbacks}
    For any concrete categories $\mathbf{M}$ and $\mathbf{N}$ over a category $\mathbf{B}$, define $\mathbf{M} \times_{\mathbf{B}} \mathbf{N}$ to be the pullback of $\proj{M}$ and $\proj{N}$; that is, $\mathbf{M} \times_{\mathbf{B}} \mathbf{N}$ is the category in which the objects are pairs $\tuple{M, N}$ consisting of an object $M$ of $\mathbf{M}$ and an object $N$ of $\mathbf{N}$ such that $\proj{M}(M) = \proj{N}(N)$, and an arrow $(A,B) \rightarrow (A',B')$ is a pair $\tuple{f,g}$ such that $f \in \mathbf{M}(A,A')$ and $g \in \mathbf{N}(B,B')$ and $\proj{M}(f) = \proj{N}(g)$.  $\mathbf{M} \times_{\mathbf{B}} \mathbf{N}$ naturally forms a concrete category over $\cat{B}$.
\end{definition}

For instance, consider the category of transition formulas $\TF$.  The sequential composition operator is a \textit{partial} function on $\TF$, which is defined for any pair of transition formulas $F$ and $G$ that operate on the same set of variables (formalized in three equivalently ways: (1) $\phi_{\cat{TF}}(F) = \phi_{\cat{TF}}(G)$; (2) $(F,G)$ is an object of $\TF \times_{\cat{Vect}_{\mathbb{Q}}} \TF$; (3) $F$ and $G$ belong to the same fiber of $\phi_{\cat{TF}}$).   Sequential composition is robust in the sense that 
 for any transition formulas $F, F' \in \TF(X)$, and $G, G' \in \TF(Y)$ and any linear transformation $f : \mathbb{Q}^X \rightarrow \mathbb{Q}^Y$ such that $f$ is a simulation from $F$ to $G$ and $F'$ to $G'$, we have that $f$ is a simulation from $F \seq F'$ to $G \seq G'$---this corresponds precisely to the fact that sequential composition is a concrete functor $\TF \times_{\cat{Vect}_{\mathbb{Q}}} \TF \rightarrow \TF$.  
The $+$, $0$, and $1$ operators can similarly be understood as concrete functors, as any ``lifted'' iteration operator following the recipe of Section~\ref{sec:lifting}).
Generalizing this structure, we have the following definition.
\begin{definition} \label{def:robust_apa}
Let $\mathbf{B}$ be a category.
A \textbf{robust algebraic analysis} (over $\mathbf{B}$) is a concrete category $\mathbf{A}$ over $\mathbf{B}$ equipped with
concrete functors $0 : \mathbf{B} \rightarrow \mathbf{A}$, $1 : \mathbf{B} \rightarrow \mathbf{A}$,
$(-) \cdot (-) : \mathbf{A} \times_{\mathbf{B}} \mathbf{A} \rightarrow \mathbf{A}$,
$(-) + (-) : \mathbf{A} \times_{\mathbf{B}} \mathbf{A} \rightarrow \mathbf{A}$, and
$(-)^* : \mathbf{A} \rightarrow \mathbf{A}$ such that
\begin{enumerate}
\item Each fiber of $\mathbf{A}$ forms a Pre-Kleene algebra;
\item For any objects $a$ and $b$, we have $a \preceq b$ iff $a \leq b$ (i.e., the approximation order derived from the concrete category structure coincides with the natural order from the PKA structure).
\end{enumerate}
\end{definition}

Recall from Section~\ref{sec:algebraic-laws-main} that ``lifted'' operators (derived from a base operator using the recipe of Section~\ref{sec:lifting}) satisfy many of the same algebraic laws as their base.   As a consequence, $\TF$ forms a robust algebraic analysis (over ${\cat{Vect}_{\mathbb{Q}}}$) where the $(-)^*$ operator is instantiated to the polyhedral iteration operator from Section~\ref{sec:overview}, or the operator of \cite{POPL:CK2024}.

For any arrow $f$ of $\mathbf{B}$ and any $A,B$ in $\Ob{A}$ such that $\barr{A}{f}{B}$, we say that
\textbf{$B$ simulates $A$ modulo $f$}.
Observe that
the $\barr{}{f}{}$ relation is compatible with all of the algebraic operations and the natural order; that is,

\begin{observation}[Compatibility]
For all $A_1, A_2, B_1, B_2$ such that $\barr{A_1}{f}{B_1}$ and $\barr{A_2}{f}{B_2}$:
\begin{center}
\begin{minipage}{0.3\textwidth}
\begin{itemize}
\item $\barr{A_1 + A_2}{f}{B_1 + B_2}$
\item $\barr{A_1 \cdot A_2}{f}{B_1 \cdot B_2}$
\item $\barr{A_1^*}{f}{B_1^*}$
\end{itemize}
\end{minipage}
\begin{minipage}{0.6\textwidth}
\begin{itemize}
\item $\barr{0}{f}{0}$
\item $\barr{1}{f}{1}$
\item If $A_1 \leq A_2$, $\barr{A_2}{f}{B_1}$, and $B_1 \leq B_2$, then $\barr{A_1}{f}{B_2}$
\end{itemize}
\end{minipage}
\end{center}
\end{observation}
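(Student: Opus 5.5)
The plan is to derive every item directly from the definition of a robust algebraic analysis (Definition~\ref{def:robust_apa}). The two tools are the meaning of $\barr{A}{f}{B}$ as ``there is an arrow over $f$'' and the fact that the operators are concrete functors.

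\textbf{The binary operators.} For $+$ and $\cdot$, first observe that $\barr{A_1}{f}{B_1}$ and $\barr{A_2}{f}{B_2}$ provide arrows $g_1 \in \mathbf{A}(A_1,B_1)$ and $g_2 \in \mathbf{A}(A_2,B_2)$ with $\proj{A}(g_1) = f = \proj{A}(g_2)$. Here $A_1,A_2$ are implicitly in a common fiber (and likewise $B_1,B_2$), since otherwise the sums are undefined. By Definition~\ref{def:pullbacks}, $\tuple{g_1,g_2}$ is then an arrow $\tuple{A_1,A_2} \rightarrow \tuple{B_1,B_2}$ of $\mathbf{A} \times_{\mathbf{B}} \mathbf{A}$, and it lies over $f$. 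Applying the concrete functor $+$ (resp.\ $\cdot$) yields an arrow $A_1 + A_2 \rightarrow B_1 + B_2$ whose image under $\proj{A}$ equals the image of $\tuple{g_1,g_2}$ in $\mathbf{B}$, namely $f$. The same argument handles $(-)^*$, applied directly to $g_1$.

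\textbf{The constants.} For $0$ and $1$, note that these are concrete functors from $\mathbf{B}$ (viewed as concrete over itself via the identity). The arrow $f$ itself, as an arrow of $\mathbf{B}$, is therefore mapped to an arrow $0(\dom f) \rightarrow 0(\mathrm{cod}\, f)$ lying over $f$. I would read ``$0$'' in the statement as $0$ at the appropriate fiber, and likewise for $1$.

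\textbf{The order condition.} For the last item, I use clause (2) of Definition~\ref{def:robust_apa} to convert $A_1 \leq A_2$ into $A_1 \preceq A_2$, which is an arrow over an identity. Likewise, $B_1 \leq B_2$ gives an arrow over an identity. The Composition property of concrete categories then combines $\barr{A_1}{1}{A_2}$, $\barr{A_2}{f}{B_1}$ and $\barr{B_1}{1}{B_2}$ into $\barr{A_1}{1 \circ f \circ 1}{B_2}$, which is $\barr{A_1}{f}{B_2}$.

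\textbf{Expected obstacle.} The only delicate point is bookkeeping rather than mathematics. One must make sure the pullback pair is well-formed, and one must interpret $0$ and $1$ as functors out of $\mathbf{B}$ so that their action on $f$ lies over $f$. Beyond that, each item is a one-line consequence of functoriality and concreteness.
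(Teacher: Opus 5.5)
Your proposal is correct and follows the same reasoning the paper intends. The paper states this as an observation without proof: each item is an instance of the operators being concrete functors on $\mathbf{A}\times_{\mathbf{B}}\mathbf{A}$, $\mathbf{A}$, or $\mathbf{B}$, plus clause (2) of Definition~\ref{def:robust_apa} and composing arrows over $1$, $f$, $1$. One small sharpening: the common-fiber condition for $\tuple{A_1,A_2}$ and $\tuple{B_1,B_2}$ need not be presupposed, since it is forced by both arrows lying over the same $f$, giving $\proj{A}(A_1)=\proj{A}(A_2)=\dom f$ and likewise $\proj{A}(B_1)=\proj{A}(B_2)=\mathrm{cod}\,f$.
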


Finally we may define the category $\mathbf{S}_{\mathbf{A}}$ of summary assignments for a robust algebraic analysis $\mathbf{A}$.  The objects of $\mathbf{S}_{\mathbf{A}}$ are summary assignments---i.e., functions $s: U \rightarrow \proj{A}^{-1}(X)$ for some set $U$ and some $X \in \Ob{B}$.  An arrow in $\mathbf{S}_{\mathbf{A}}$ from a summary assignment $s : U \rightarrow \proj{A}^{-1}(X)$ to $t : V \rightarrow \proj{A}^{-1}(Y)$ consists of a pair $\tuple{h,f}$, comprising a ``structure transformation'' $h : U \rightarrow V$ and ``weight transformation'' $f \in \mathbf{B}(X,Y)$ such that for all $u \in U$, we have that $\barr{s(u)}{f}{t(h(u))}$ (that is, for each vertex $u \in U$, the summary assigned to $u$ by $s$ is simulated by the summary assigned to its corresponding vertex $h(u)$ by $t$, modulo the transformation $f$).  Thus, the natural ``base category'' over which to understand summary assignments is the product category $\mathbf{Set} \times \mathbf{B}$, where for a summary assignment $s : U \rightarrow \proj{A}^{-1}(X)$, we have $\proj{S_A}(s) = \tuple{U, X}$.

\subsection{Robustness of Summarization Under Isomorphisms} \label{sec:robust-vertex-elimination}

First we will present an abstract view of summarization algorithms based on \textit{vertex elimination}, which can be thought of as a variant of Gauss-Jordan elimination, or Kleene's algorithm for converting a finite automaton to a regular expression.  Practical summarization algorithms, such as \citet{JACM:Tarjan1981b}'s, can be thought of as vertex elimination algorithms that use various techniques to improve computational efficiency, but produce the same results as vertex elimination (assuming that the underlying algebra obeys the semiring laws).

In defining the vertex elimination operation, it is convenient to extend the weight function $w_G$ to the domain $V_G \times V_G$, by assigning each pair $u$ and $v$ with $\tuple{u,v} \notin E_G$ the weight $0$.  We do not distinguish between a weight function and its total extension.  Eliminating a vertex $v$ from a graph $G$ produces another flow graph $G/v$ in which $v$ has no outgoing edges, but the paths in $G$ are preserved in the sense that for any path $\pi$ in $G$ from $r_G$ to any vertex $u$, there is a path $\pi'$ in $G/v$ such that $w_G(\pi) \leq w_{G/v}(\pi')$. The graph $G/v$ has the same vertices and root as $G$, and its edges are defined as follows:
\begin{align*}
    E_{G/v} & \defeq \set{ \tuple{u,u'} \in E : u' \neq v} \cup \set{ \tuple{p,s} : \tuple{p,v} \in E \land \tuple{v,s} \in E}\\
    w_{G/v}(u,u') &\defeq \begin{cases}
    w_G(u,u') + w_G(u,v)w_G(v,v)^*w_G(v,u') & u \neq v, u' \neq v\\
    w_G(u,v)w_G(v,v)^* & u' = v\\
    0 & \text{otherwise}
    \end{cases}
\end{align*}
We extend the elimination notation to sequences by defining $G/\epsilon = G$ and $G/v_1,v_2,\dots,v_n = (G/v_1)/v_2,\dots,v_n$.
To compute a summary assignment using vertex elimination, we enumerate the non-root vertices of $G$ in some order $v_1,\dots,v_n$
and then eliminate them to yield a graph $H \defeq G/v_1,\dots,v_n$.  Since eliminated vertices have no outgoing edges, the only edges in $H$ are of the form $\tuple{r_G,v}$, and the weight of this edge must approximate the weight of any path from $r_G$ to $v$ in $G$.  Thus, the function
mapping $v \mapsto w_H(r_G,v)$ is a sound summary assignment.

If $A$ happens to be a Kleene algebra, then the elimination order is irrelevant.  But if $A$ is a \textit{Pre-Kleene algebra}, different elimination orders can produce different results (i.e., different sound over-approximations of program behavior).  From this abstract perspective, the crucial difference between different summarization algorithms is the selection of elimination orders.

The most basic question concerning the robustness of a summarization algorithm is whether it is robust with respect to graph isomorphisms---that is, given flow graphs $G$ and $H$ that differ only in the names of vertices, does the algorithm produce exactly the same result for $G$ and $H$?  The answer is, unfortunately, no. For instance, consider the flow graph in Figure~\ref{fig:tarjan-flowgraph}.  The map that swaps the vertices 2 and 3 is a structural isomorphism for this graph, so we cannot depend on the order that an algorithm will eliminate them.  However, the order of elimination matters---in particular, for the order $2,3$ the $(-)^*$ operator is applied to $dc$, and for the $3,2$ order it is applied to $cd$, and in general there is no relationship between $(dc)^*$ and $(cd)^*$.

\begin{figure}
\begin{subfigure}{0.25\textwidth}
\begin{tikzpicture}[>=stealth]
 \node [circle,draw] (A) {1};
 \node [below left of=A,circle,draw,node distance=1.3cm] (B) {2};
  \node [below right of=A,circle,draw, node distance=1.3cm] (C) {3};
  \draw (A) edge[->] node[above left]{$a$} (B);
\draw (A) edge[->] node[above right]{$b$} (C);
\draw (B) edge[->,bend left=15] node[above]{$c$} (C);
\draw (C) edge[->,bend left=15] node[below]{$d$} (B);
\end{tikzpicture}
\caption{A flow graph $G$ \label{fig:tarjan-flowgraph}}
\end{subfigure}
\begin{subfigure}{0.35\textwidth}
\begin{tikzpicture}[>=stealth]
 \node [circle,draw] (A) {1};
 \node [below left of=A,circle,draw,node distance=1.3cm] (B) {2};
  \node [below right of=A,circle,draw, node distance=1.3cm] (C) {3};
  \draw (A) edge[->] node[above left]{$a + b(dc)^*d$} (B);
\draw (A) edge[->] node[above right]{$(b+ac)(dc)^*$} (C);
\end{tikzpicture}
\caption{$G/2/3$ \label{fig:tarjan-reduce1}}
\end{subfigure}
\begin{subfigure}{0.35\textwidth}
\begin{tikzpicture}[>=stealth]
 \node [circle,draw] (A) {1};
 \node [below left of=A,circle,draw,node distance=1.3cm] (B) {2};
  \node [below right of=A,circle,draw, node distance=1.3cm] (C) {3};
  \draw (A) edge[->] node[above left]{$(a+bd)(cd)^*$} (B);
\draw (A) edge[->] node[above right]{$b+a(cd)^*c$} (C);
\end{tikzpicture}
\caption{$G/3/2$ \label{fig:tarjan-reduce2}}
\end{subfigure}
\caption{An irreducible flow graph $G$ along with the result of eliminating vertices $2$ and $3$ in both possible orders.}
\end{figure}
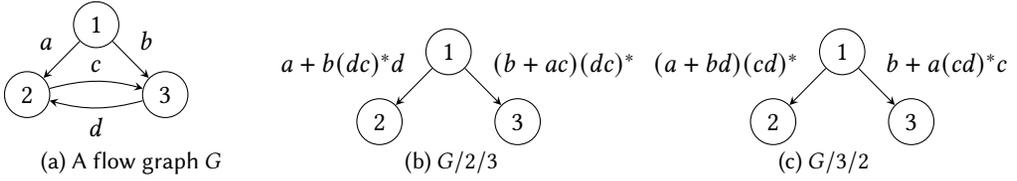

For reducible flow graphs, however, Tarjan's algorithm \textit{is} robust with respect to isomorphisms.  To prove this result, we define a class of \textit{admissible} elimination orders, and show that any two admissible elimination orders of the same reducible graph produce the same analysis result.  It follows that any vertex elimination algorithm that eliminates vertices in an admissible order (of which Tarjan's algorithm is an example) is robust with respect to isomorphisms (for reducible flow graphs).

Let $G$ be a flow graph.  Say that $u$ \textbf{dominates} $v$ if every path from $r_G$ to $v$ passes through $u$.  Dominance forms a partial order, with least element $r_G$.  A flow graph $G$ is said to be \textbf{reducible} if for every cycle in $G$, there is some vertex along the cycle that dominates all other vertices.
A \textbf{local cycle} of a vertex $v$ is a path $\pi$ in $G$ such that (1) $v = \src(\pi) = \dst(\pi)$, and (2) $v$ dominates every vertex along $\pi$.  Define $L_G(v)$ to be the set of vertices that appear in some local cycle of $v$.
Say that a sequence $v_1,\dots,v_n$ of vertices is \textbf{admissible} if there exists no $i < j$ such that $v_j \in L(v_i)$.
Intuitively, one could think that vertices in an inner loop should be eliminated before those in the outer loop.

\begin{restatable}{theorem}{admissibility}
\label{thm:admissibility}
  Let $G$ be a reducible flow graph, and let $v_1,\dots,v_n$ and $u_1,\dots,u_n$ be enumerations of some subset $U \subseteq V_G$.  If both $v_1,\dots,v_n$ and $u_1,\dots,u_n$ are admissible, then $G/v_1,\dots,v_n = G/u_1,\dots,u_n$.
\end{restatable}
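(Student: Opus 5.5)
The plan is to reduce the statement to a single \emph{swap lemma} for adjacent eliminations, and then to connect any two admissible enumerations by a chain of admissible adjacent transpositions.

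\textbf{Structural invariant.} For any sequence $W$ of eliminated vertices, every edge $\tuple{p,s}$ of $G/W$ is witnessed by a path in $G$ from $p$ to $s$ all of whose interior vertices lie in $W$. This follows by a routine induction on the length of $W$ from the definition of $E_{G/v}$: a new edge $\tuple{p,s}$ arises from $\tuple{p,v}$ and $\tuple{v,s}$, and we concatenate their witnessing paths through $v$.

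\textbf{Swap lemma.} Let $H = G/W$ for an admissible prefix $W$, and let $u,v \notin W$ be such that $W,u,v$ and $W,v,u$ are both admissible, so $u \notin L(v)$ and $v \notin L(u)$. We show $H/u/v = H/v/u$ in two steps.
\begin{itemize}
\item First, $H$ cannot contain both edges $\tuple{u,v}$ and $\tuple{v,u}$. Suppose it did. By the invariant we get a cycle in $G$ through $u$ and $v$ whose other vertices lie in $W$. By reducibility some vertex $x$ on the cycle dominates all the others, and rotating the cycle to start at $x$ gives a local cycle of $x$. If $x \in W$, then $u \in L(x)$ although $u$ is eliminated after $x$, contradicting admissibility of $W,u,v$. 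If $x = u$, then $v \in L(u)$, a contradiction; the case $x=v$ is symmetric.
\item Second, suppose without loss of generality that $w_H(v,u) = 0$. We compute both $H/u/v$ and $H/v/u$ edge by edge using only the idempotent-semiring laws: associativity, distributivity, and $0$ being annihilating and neutral. The vanishing weight $w_H(v,u)$ means the self-loop of $v$ is unchanged by eliminating $u$. As a result, the only starred terms that appear are $w_H(u,u)^*$ and $w_H(v,v)^*$, in both orders, and never a star of a composite such as $(w(u,v)w(v,u))^*$. After expanding, both orders give the same sum of products for every pair $\tuple{p,s}$, including the entries for $s \in \{u,v\}$.
\end{itemize}

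\textbf{Bubble sort.} Given admissible enumerations $v_1,\dots,v_n$ and $u_1,\dots,u_n$ of $U$, we move $u_1$ to the front of the $v$-sequence by adjacent swaps. Each element $v_i$ that $u_1$ passes precedes $u_1$ in the $v$-order and follows it in the $u$-order. Admissibility of the $v$-order gives $u_1 \notin L(v_i)$, and admissibility of the $u$-order gives $v_i \notin L(u_1)$. So the swap lemma applies to each transposition, and each intermediate sequence remains admissible. The admissibility of the prefix used in the swap lemma is inherited from the current sequence. We then recurse on the suffixes after the common first element, applying the argument to $G/u_1$. Relative to $G$, the relevant prefix condition is still admissibility of the whole sequence, so the induction goes through if the swap lemma is stated, as above, relative to the original $G$ and an admissible prefix $W$.

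\textbf{Main obstacle.} I expect the first step of the swap lemma to be the delicate one: the reducibility argument that $u$ and $v$ cannot be mutually adjacent in $G/W$. It needs care in two places. One is that the dominating vertex on the constructed cycle really yields a local cycle containing the relevant vertex. The other is the degenerate case where the witnessing cycle is not simple; this may require extracting a simple sub-cycle containing $u$ and $v$, or arguing directly that domination along a closed walk suffices. The semiring calculation in the second step is routine but tedious.
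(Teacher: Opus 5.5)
Your proof is correct and has the same skeleton as the paper's. Both use a commutation lemma for two adjacent eliminations of vertices, neither of which lies in the other's local-cycle set. Both then move the first element of one admissible enumeration to the front of the other by adjacent swaps and induct on the remainder; the paper moves $v_1$ forward in the $u$-sequence, you move $u_1$ forward in the $v$-sequence. Once one of $w(u,v), w(v,u)$ is known to vanish, the edge-by-edge semiring calculation is the same.

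The real difference is where reducibility and loop structure are anchored. The paper states its commutation lemma for an arbitrary reducible graph, with hypotheses on that graph's own $L$. It rules out mutual adjacency via the 2-cycle $\langle u,v\rangle\langle v,u\rangle$. It then tacitly applies the lemma to intermediate graphs $G/u_1,\dots,u_j$ and the induction hypothesis to $G/v_1$.

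When $U \subsetneq V_G$ those intermediate graphs need not be reducible. Eliminating a loop header $h$ with two body vertices $a,b\notin U$ creates edges $r\to a$, $r\to b$, $a\to b$, $b\to a$, which form an irreducible region. The paper's reducibility-preservation lemma needs $|L(v)|\le 1$, which admissibility does not ensure. Moreover, admissibility is defined through $L_G$, not $L_{G/v_1}$.

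Your witness-path invariant, and stating the swap lemma relative to the original $G$ and an admissible prefix $W$, sidestep both issues. In that respect your argument is more complete than the paper's.

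The cost is the closed-walk point you flagged. It is easy to discharge by induction on length:
\begin{itemize}
\item If a closed walk in $G$ is not simple, split it at a repeated vertex $w$ into two shorter closed walks.
\item Each piece has a vertex dominating all of its vertices, and both of these dominate $w$.
\item The dominators of a reachable vertex form a chain, so one of the two dominates the other. That one dominates every vertex of the whole walk.
\item If the walk's vertices are unreachable, dominance holds vacuously.
\end{itemize}
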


In the following, we use $\sem{G}$ to denote the summary assignment produced by a vertex elimination algorithm that eliminates vertices in a reducible order.  Formally,
let $G$ be a reducible flow graph over a PKA $A$.  Define $\sem{G}$ to be the function $V_G \rightarrow A$ that maps each vertex $v \in V_G$ to $w_H(r_G,v)$, where
$H \defeq G/v_1,\dots,v_n$ for some admissible order $v_1,\dots,v_n$ of the non-root vertices of $G$.  $\sem{G}$ is well-defined by Theorem~\ref{thm:admissibility} above.

\subsection{Robustness of Summarization Under Loop-Preserving Stuttering Simulations}
\label{sec:continuous-maps}

Let $\mathbf{A}$ be a robust algebraic program analysis over some base category $\mathbf{B}$.
Recall that the natural base category over which we understand the results of vertex elimination is $\mathbf{Set} \times \mathbf{B}$.
In this section, we wish to define a category $\mathbf{G}_{\mathbf{A}}$ of \textit{reducible} flow graphs over $\mathbf{A}$ such that
$\sem{-}$ can be understood to be
a concrete functor from $\mathbf{G}_\mathbf{A}$ to $\mathbf{S}_{\mathbf{A}}$.  Given that
we wish to view $\mathbf{G}_{\mathbf{A}}$ as a concrete category over $(\mathbf{Set} \times \mathbf{B})$, we can make this question more concrete.  Suppose $G$ and $H$ are flow graphs with weights in $\proj{A}^{-1}(X)$ and $\proj{A}^{-1}(Y)$ respectively (for some $X,Y \in \Ob{B}$).  For which functions $h : V_G \rightarrow V_H$ and arrows $f : X \rightarrow Y$ should the pair $\tuple{h,f}$ be considered an acceptable transformation from $G$ to $H$ (such that we may conclude that $\tuple{h,f}$ is an arrow
from $\sem{G}$ to $\sem{H}$)?  The answer provided in this section is that $\tuple{h,f}$ should be a \textit{stuttering simulation} that \textit{preserves loops}.

Figure~\ref{fig:overview-flowgraphs} depicts an example of a stuttering simulation between two flow graphs.  The intuition behind simulation is that the simulator has \textit{at least} as many behaviors in the sense that every execution of the simulatee can be mapped to a corresponding execution of the simulator.  In the abstract, we view ``executions'' as paths in a flow graph, and correspondence as a consistent approximation relation on weights of those paths.
\begin{definition}
  Let $\mathbf{A}$ be a robust algebraic program analysis over a base category $\mathbf{B}$.
  Let $G$ and $H$ be flow graphs over $\proj{A}^{-1}(X)$ and $\proj{A}^{-1}(Y)$,
  respectively.  A
  \textbf{stuttering simulation} is a pair
  $\tuple{h,f}$ where
  $h : V_G \rightarrow V_H$ and
  $f \in \mathbf{B}(X,Y)$ such that for each edge $\tuple{u,v} \in E_G$, we have
  \begin{itemize}
  \item $\tuple{h(u),h(v)} \in E_H$ and $\barr{w_G(u,v)}{f}{w_H(h(u),h(v))}$, \textit{or}
  \item $h(u) = h(v)$ and
  $\barr{w_G(u,v)}{f}{1}$
  \end{itemize}
\end{definition}

Observe that for any stuttering simulation $\tuple{h,f}$ between flow graphs $G$ and $H$
there exists a function $r$ that maps any path $\pi$ in $G$ to a simulating path $r(\pi)$ in $H$, which can be constructed as:
\begin{align*}
r(\epsilon) &= \epsilon\\
r(\pi \tuple{u,v}) &= \begin{cases}
  r(\pi)\tuple{h(u),h(v)} & \text{if } \tuple{h(u),h(v)} \in E_H \text{ and } \barr{w_G(u,v)}{f}{w_H(h(u),h(v))}\\
  r(\pi) & \text{if } h(u)=h(v) \text{ and } \barr{w_G(u,v)}{f}{1}
\end{cases}
\end{align*}
Note that the above does not define $r$ uniquely, because an edge $\tuple{u,v}$ may be simulated both by $1$ and by $w_H(h(u),h(v))$.  We say that any function meeting the conditions above is a \textbf{witness} for the stuttering simulation $\tuple{h,f}$.

\begin{example} \label{ex:phase-transform}
  \citet{CAV:SDDA2011} introduce a technique for improving the precision of static analyses by splitting loops into phases---an example of the transformation appears below (the program on the left is transformed into the program on the far right).  This kind of transformation \textit{heuristically} improves precision, but in some cases can actually decrease precision.  For robust algebraic program analyses, the transformation always improves precision (not necessarily strictly): there is a stuttering simulation from the transformed program to the original.  A schematic illustration appears below to the right: $\tuple{h,1}$ is a stuttering simulation from the rightmost graph to the one to its left,  where $h = \set{ u \mapsto x, v \mapsto x }$.
  \begin{center}
  \newsavebox{\lstboxSplitA}
  \begin{lrbox}{\lstboxSplitA}\begin{lstlisting}[style=base]
while(x < 100) {
  x++;
  if (x > 50) y++;
}
\end{lstlisting}\end{lrbox}
  \newsavebox{\lstboxSplitB}
  \begin{lrbox}{\lstboxSplitB}\begin{lstlisting}[style=base]
while(x <= 49) { x++; }
while(x < 100 && x > 49) {
  x++; y++;
}
\end{lstlisting}\end{lrbox}
  \begin{tikzpicture}[node distance=1.5cm]
      \node (pre) {\usebox{\lstboxSplitA}};
\node [right of=pre,node distance=5cm] (post) {\usebox{\lstboxSplitB}};
\draw (pre) edge[->] node[above]{\textit{Split}} (post);

    \node [circle,draw, right of =post,node distance=3cm,yshift=0.5cm] (x) {$x$};
    \draw (x) edge[->,loop below] node[below]{$a+b$} (x);

    \node [circle,draw,right of=x] (u) {$u$};
    \node [circle,draw,right of=u] (v) {$v$};
    \draw [->] (u) -- node[above]{$1$} (v);
    \draw (u) edge[->,loop below] node[below]{$a$} (u);
    \draw (v) edge[->,loop below] node[below]{$b$} (v);

  \end{tikzpicture} \qedhere
  \end{center}
\end{example}

\begin{example} \label{ex:unrolling}
Loop unrolling is a transformation widely used in compilers to improve performance by reducing the overhead of loop-control branches and exposing more instructions to the scheduler
\citep{Aho:2006:CPT,Hennessy:2011:CAQ}.
For robust algebraic program analyses, this transformation is (not necessarily strictly) precision-improving  because a stuttering simulation exists from the unrolled program (right) to the original (left): $\tuple{h,1}$ is a stuttering simulation from the transformed graph to the original, where $h = \set{ u \mapsto x, v \mapsto x, w \mapsto x }$.

\begin{center}
  \newsavebox{\lstboxUnrollA}
  \begin{lrbox}{\lstboxUnrollA}\begin{lstlisting}[style=base]
while(x < 100) {
  if (x % 2 == 0) {
    y++;
  }
  x++;
}
\end{lstlisting}\end{lrbox}
  \newsavebox{\lstboxUnrollB}
  \begin{lrbox}{\lstboxUnrollB}\begin{lstlisting}[style=base]
while(x + 1 < 100) {
  if (x % 2 == 0) y++;
  x++;
  if (x % 2 == 0) y++;
  x++;
}
if (x < 100) x++;
\end{lstlisting}\end{lrbox}
  \begin{tikzpicture}[node distance=1.5cm]
      \node (pre) {\usebox{\lstboxUnrollA}};
\node [right of=pre,node distance=5cm] (post) {\usebox{\lstboxUnrollB}};
\draw (pre) edge[->] node[above]{\textit{Unroll}} (post);

\node [circle,draw, right of =post,node distance=3cm,yshift=0.5cm] (x) {$x$};
\draw (x) edge[->,loop below] node[below]{$a$} (x);

\node [circle,draw,right of=x] (u) {$u$};
\node [circle,draw,right of=u] (v) {$v$};
\node [circle,draw,below of=u] (w) {$w$};

\draw [->, bend left=30] (u) to node[above]{$a$} (v);
\draw [->, bend left=30] (v) to node[below]{$a$} (u);
\draw [->] (u) -- node[left]{$a$} (w);

  \end{tikzpicture} \qedhere
  \end{center}
\end{example}

\textsc{Informally, } vertex elimination is robust with respect to stuttering simulations that are \textit{also} respect syntactic loop structure, such as nesting hierarchy.  Such simulations (which include Examples~\ref{ex:phase-transform} and \ref{ex:unrolling})
are called \textit{loop-preserving}.  We will now make loop preservation precise.
For a path $\pi = e_1e_2 \dots e_n$ and a vertex $u \in V_G$, define the \textbf{$u$-length} of $\pi$ to be the number of edges $e_i$ such that the source of $e_i$ is $u$.  Denote the $u$-length of a path $\pi$ by $|\pi|_u$.

\begin{definition}
  Let $G$ and $H$ be flow graphs, and let $\tuple{h,f}$ be a stuttering simulation from $G$ to $H$.  Say that $\tuple{h,f}$ is \textbf{loop-preserving} if
  \begin{itemize}
  \item \textit{(Loop membership)} For any $u,v \in V_G$, if $u \in L_G(v)$, then $h(u) \in L_H(h(v))$;
  \item \textit{(Non-nesting)} For any $u,v \in V_G$ such that $h(u) = h(v)$, and $u \in L_G(v)$, we must have $L_G(u) = \emptyset$;
  \item \textit{(Consistent unrolling)} There is a witness $r$ for $\tuple{h,f}$ such that for any $v \in V_G$, there is some $n \geq 1$ such that every primitive local cycle (meaning that it is not obtained by repeating some other local cycles multiple times) $\pi$ of $v$ satisfies $|r(\pi)|_{h(v)} = n$.
  \end{itemize}
\end{definition}

We now present the main result: for any robust algebraic program analysis $\mathbf{A}$, vertex elimination $\sem{-} : \mathbf{G}_{\mathbf{A}} \rightarrow \mathbf{S}_{\mathbf{A}}$ is a robust operation.
\begin{restatable}{theorem}{rpa}
\label{thm:rpa}
  Let $\mathbf{A}$ be a robust algebraic program analysis over a category $\mathbf{B}$.
  Let $G$ and $H$ be reducible flow graphs, and let $\tuple{h,f}$ be a stuttering simulation from $G$ to $H$ that is loop-preserving.  Then for any vertex $v \in V_G$, we have that
  $\sem{G}(v)$ is simulated by $\sem{H}(h(v))$ modulo $f$.
\end{restatable}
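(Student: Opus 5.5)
We prove the claim by strong induction on the loop-nesting structure of $G$, in parallel with a chosen admissible elimination order for $H$. Theorem~\ref{thm:admissibility} makes $\sem{-}$ independent of which admissible order we pick. So we may choose, for both graphs, an order that eliminates vertices innermost-loop-first, following the dominator tree: for each loop header $v$, all vertices of $L_G(v)\setminus\{v\}$ are eliminated before $v$. The key intermediate claim concerns the graph $G_v$ obtained by eliminating every vertex of $L_G(v)\setminus\{v\}$. We will show that the self-loop weight $w_{G_v}(v,v)$ satisfies
\[ \barr{w_{G_v}(v,v)}{f}{\bigl(w_{H_{h(v)}}(h(v),h(v))\bigr)^{n}}, \]
where $n$ is the unrolling constant given by \emph{Consistent unrolling}. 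In other words, the sum of the primitive local cycles at $v$ is simulated by the $n$-fold product of the local-cycle summary at $h(v)$. We will also show that weights of edges leaving a loop body are simulated by the corresponding edges of $H$, or by $1$ when they stutter.

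The base case is an acyclic fragment. Vertex elimination then only forms sums and products of edge weights. Each edge of $G$ is simulated modulo $f$ either by the image edge of $H$ or by $1$, via a witness $r$. By the Compatibility observation, the sum-of-paths weight from the root to $v$ is simulated by the corresponding sum over the $r$-images of those paths. This sum is $\leq$ the sum over all paths in $H$ to $h(v)$, and the last clause of Compatibility absorbs the extra paths.

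For the inductive step at a loop header $v$, we split on \emph{Non-nesting}. If $h$ collapses some $u\in L_G(v)$ onto $h(v)$, then $L_G(u)=\emptyset$, so no inner loops are folded into the same image vertex. The local cycles of $v$ are then finite concatenations of inner-summarized segments. By the induction hypothesis and Compatibility, each such cycle $\pi$ is simulated by $w_H(r(\pi))$, which by Consistent unrolling is a cycle through $h(v)$ of $h(v)$-length exactly $n$; it is therefore $\leq (w_{H_{h(v)}}(h(v),h(v)))^n$. We then apply $(-)^*$, which is robust by Compatibility, and the Unrolling law of the PKA:
\[ \barr{w_{G_v}(v,v)^*}{f}{\bigl((c_{h(v)})^n\bigr)^*} \quad\text{and}\quad (c^n)^*\leq c^*, \]
where $c_{h(v)}$ denotes the local-cycle summary at $h(v)$. \emph{Loop membership} ensures that the $r$-images of local cycles of $v$ stay inside $L_H(h(v))$, so they are indeed accounted for in $c_{h(v)}$ rather than escaping the loop. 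The case $n=1$, with $h$ injective on the loop, is the easy isomorphism-like subcase.

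The final summary $\sem{G}(v)$ is the sum over root-to-$v$ path expressions, with the stars placed at headers. We assemble it by composing these simulations using the product, sum and order compatibility rules.

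The main obstacle is the inductive step. There we must match the expression tree produced by eliminating $L_G(v)$ against that produced for $L_H(h(v))$ when several $G$-vertices map to one $H$-vertex. In particular, we must show that stuttering edges simulated by $1$ inside a loop do not produce a spurious starred subterm in $G$ with no counterpart in $H$. We would handle this first by proving a lemma: for an admissible elimination, $w_{G/\ldots}(p,s)$ equals a regular-expression sum over paths whose interior vertices are eliminated, starred exactly at local-cycle headers. Non-nesting guarantees that collapsed vertices have trivial such stars, which are therefore either $1^*$ (absorbed by Reflexivity and Transitivity) or genuine unrolled copies covered by Unrolling.
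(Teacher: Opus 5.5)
Your route is genuinely different from the paper's, and it is viable in outline.

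\textbf{How the paper argues.} The paper never inducts on the loop tree of $G$ and never reasons about path expressions. It eliminates $G$ in \emph{$H$'s} admissible order: an admissible enumeration of $h^{-1}(u_1)$, then of $h^{-1}(u_2)$, and so on, after checking that the concatenation is admissible for $G$. It then proves a lock-step invariant (Lemma~\ref{lem:elimination-step}): eliminating all of $h^{-1}(u)$ in $G$ and $u$ in $H$ preserves loop-preserving stuttering simulations. Three devices make that invariant true:
\begin{itemize}
\item multi-flow graphs, so that elimination cannot merge a stuttering edge with a non-stuttering path (Figure~\ref{fig:ex-multi-flow-graph});
\item the auxiliary graph $\mathcal{K}=\mathcal{H}+\mathcal{H}/u$ (Lemma~\ref{lem:transit}), to simulate edges leaving members of $h^{-1}(u)$ that are not yet eliminated;
\item distance functions $d_\ell$ (Lemma~\ref{lem:distance-function}), because Consistent unrolling itself is destroyed by partial elimination.
\end{itemize}
The Unrolling law enters exactly where you use it: a self-loop of $v_i$ is simulated by some $w_{\mathcal{H}}(u,u)^{d}$, and the result is then starred.

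\textbf{What your route buys and costs.} Comparing whole sums of structured paths lets you skip all three devices and apply Consistent unrolling directly to skeleton cycles. The price is that everything rests on the path-expression lemma you only state. That lemma is really two facts:
\begin{itemize}
\item on $G$, the admissible elimination result is at most the finite sum of hierarchically structured path weights, with $c_y^*$ at each eliminated header $y$;
\item on $H$, every concrete path with such stars inserted at header visits lies below the elimination result (by induction on the loop tree using Reflexivity, Extensivity and Transitivity).
\end{itemize}
All uses of reducibility and admissibility live there; this is the content of the paper's dominance and loop-invariance lemmas. So that lemma is essentially the proof and has to be written out.

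\textbf{Two claims need repair.}

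First, your bound on ``edges leaving a loop body'' does not compose. In general $\tuple{h,f}$ is not a stuttering simulation from $G_v$ to $H_{h(v)}$. An exit of $L_G(v)$ can lead to a vertex $t$ with $h(t)\in L_H(h(v))\setminus\set{h(v)}$; Loop membership only forces $L_G(v)$ into $L_H(h(v))$, not conversely. Such an $h(t)$ is already eliminated and has no outgoing edges in $H_{h(v)}$, while $t$ still has outgoing edges in $G_v$. Because you follow $G$'s loop tree, the matching must be done on whole root-to-$v$ structured paths, never on intermediate graphs. The paper avoids this misalignment by following $H$'s elimination order instead.

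Second, trivial stars are not ``absorbed by Reflexivity and Transitivity.'' Eliminating a collapsed non-header $u\in L_G(v)$ inserts $0^*$ between consecutive factors of $c_{h(v)}^n$. At that point there is no $H$-side star to absorb it. The PKA laws do not give $0^*\leq 1$: in $\TF$, interpreting every $F^*$ as $\textit{true}$ satisfies all of them. So your bound $\barr{w_{G_v}(v,v)}{f}{c_{h(v)}^{n}}$ needs $0^*=1$ as an explicit assumption. The paper relies on the same assumption tacitly: in step $B(i)$, ``$v_i$ has no self-loops'' is taken to give $W(\pi)=W(\tau)$. This is therefore no disadvantage relative to the paper, but the justification should be stated correctly.
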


%%% Local Variables:
%%% TeX-master: "main"
%%% End:

\section{Conclusion} \label{sec:conclusion}

It is the opinion of the authors that \textit{robustness} should be an important design consideration for practical program analyses.  We provide a unified framework for articulating a class of robustness properties in the language of concrete category theory, by thinking of robustness as preserving certain relationships between programs.

We give a general design strategy for achieving robustness, and establish robustness of several existing analyses by way of showing that they are instances of this strategy.
A natural avenue for future work is to instantiate this strategy in new ways---i.e., for what other sub-Turing models of computation, and what other classes of simulations, can this recipe be followed?
We also study robustness of algebraic program analyses ``in the large''---that is, how guarantees of robustness of individual program operations translate to end-to-end analysis guarantees upon which users may rely.  A limitation of this theory is that we have restricted our attention to \textit{reducible} flow graphs.
  We leave open the question of whether robustness can be established for irreducible flow graphs, and the correct notion of simulation for algebraic termination analysis.

%%% Local Variables:
%%% TeX-master: "main"
%%% End:

\bibliography{references}

@string{cav = "CAV"}

@string{cc = "Comp.\ Construct."}

@string{entcs = "Electr.\ Notes Theor.\ Comp.\ Sci."}

@string{fmcad = "FMCAD"}

@string{pldi = "PLDI"}

@string{popl = "POPL"}

@string{prentice = "Prentice-Hall"}

@string{sas = "SAS"}

@string{vmcai = "VMCAI"}

@book{Aho:2006:CPT,
 author  = {Alfred V. Aho and Monica S. Lam and Ravi Sethi and Jeffrey D. Ullman},
 title   = {Compilers: Principles, Techniques, and Tools},
 edition  = {2nd},
 publisher = {Addison-Wesley},
 year   = {2006}
}

@book{Hennessy:2011:CAQ,
 author  = {John L. Hennessy and David A. Patterson},
 title   = {Computer Architecture: A Quantitative Approach},
 edition  = {5th},
 publisher = {Morgan Kaufmann},
 year   = {2011}
}

@InProceedings{PLDI:KBFR17,
    Author    = "Z. Kincaid and J. Breck and A. {Forouhi Boroujeni} and T. Reps",
    Title     = "Compositional Recurrence Analysis Revisited",
    Booktitle = pldi,
    Year      = 2017
}

@InProceedings{FMCAD:FK2015,
  author    = "A. Farzan and Z. Kincaid",
  title     = "Compositional Recurrence Analysis",
  booktitle = fmcad,
  year      = 2015
}

@inproceedings{Cousot1977,
 author = {Cousot, Patrick and Cousot, Radhia},
 title = {Abstract interpretation: a unified lattice model for static analysis of programs by construction or approximation of fixpoints},
 booktitle = {POPL},
 year = {1977},
 pages = {238--252},
 location = {Los Angeles, California},
 publisher = {ACM},
 }

@article{JACM:Tarjan1981,
 author = {Tarjan, Robert Endre},
 title = {Fast Algorithms for Solving Path Problems},
 journal = {J. ACM},
 issue_date = {July 1981},
 volume = {28},
 number = {3},
 month = jul,
 year = {1981},
 pages = {594--614},
}

@article{JACM:Tarjan1981b,
 author = {Tarjan, Robert Endre},
 title = {A Unified Approach to Path Problems},
 journal = {J. ACM},
 issue_date = {July 1981},
 volume = {28},
 number = {3},
 month = jul,
 year = {1981},
 pages = {577--593},
}

@article{ENTCS:ACI2010,
 author = {Ancourt, C. and Coelho, F. and Irigoin, F.},
 title = {A Modular Static Analysis Approach to Affine Loop Invariants Detection},
 journal = entcs,
 issue_date = {October, 2010},
 volume = {267},
 number = {1},
 month = oct,
 year = {2010},
 pages = {3--16},
}

@inproceedings{POPL:CC1977,
  Author    = "P. Cousot and R. Cousot",
  Title     = "Abstract Interpretation: A Unified Lattice Model for Static Analysis of Programs by Construction or Approximation of Fixpoints",
  Booktitle = popl,
  Year      = 1977
}

@inproceedings{CAV:SK2019,
  author    = {Jake Silverman and
               Zachary Kincaid},
  editor    = {Isil Dillig and
               Serdar Tasiran},
  title     = {Loop Summarization with Rational Vector Addition Systems},
  booktitle = {Computer Aided Verification - 31st International Conference, {CAV}
               2019, New York City, NY, USA, July 15-18, 2019, Proceedings, Part
               {II}},
  series    = {Lecture Notes in Computer Science},
  volume    = {11562},
  pages     = {97--115},
  publisher = {Springer},
  year      = {2019},
  url       = {https://doi.org/10.1007/978-3-030-25543-5\_7},
  doi       = {10.1007/978-3-030-25543-5\_7},
  bibsource = {dblp computer science bibliography, https://dblp.org}
}

@inproceedings{SAS:Kincaid2018,
  author    = {Zachary Kincaid},
  editor    = {Andreas Podelski},
  title     = {Numerical Invariants via Abstract Machines},
  booktitle = {Static Analysis - 25th International Symposium, {SAS} 2018, Freiburg,
               Germany, August 29-31, 2018, Proceedings},
  series    = {Lecture Notes in Computer Science},
  volume    = {11002},
  pages     = {24--42},
  publisher = {Springer},
  year      = {2018},
  url       = {https://doi.org/10.1007/978-3-319-99725-4\_3},
  doi       = {10.1007/978-3-319-99725-4\_3},
  bibsource = {dblp computer science bibliography, https://dblp.org}
}

@InProceedings{CAV:ZK2021,
author="Zhu, Shaowei
and Kincaid, Zachary",
editor="Silva, Alexandra
and Leino, K. Rustan M.",
title="Reflections on Termination of Linear Loops",
booktitle="Computer Aided Verification",
year="2021",
publisher="Springer International Publishing",
address="Cham",
pages="51--74",
isbn="978-3-030-81688-9"
}

@inproceedings{PLDI:ZK2021,
author = {Zhu, Shaowei and Kincaid, Zachary},
title = {Termination Analysis without the Tears},
year = {2021},
isbn = {9781450383912},
publisher = {Association for Computing Machinery},
address = {New York, NY, USA},
url = {https://doi.org/10.1145/3453483.3454110},
doi = {10.1145/3453483.3454110},
booktitle = {Proceedings of the 42nd ACM SIGPLAN International Conference on Programming Language Design and Implementation},
pages = {1296–1311},
numpages = {16},
location = {Virtual, Canada},
series = {PLDI 2021}
}

@incollection{CAV:ZK2024,
  title = {Breaking the {{Mold}}: {{Nonlinear Ranking Function Synthesis Without Templates}}},
  shorttitle = {Breaking the {{Mold}}},
  booktitle = {Computer {{Aided Verification}}},
  author = {Zhu, Shaowei and Kincaid, Zachary},
  editor = {Gurfinkel, Arie and Ganesh, Vijay},
  year = {2024},
  volume = {14681},
  pages = {431--452},
  doi = {10.1007/978-3-031-65627-9_21},
  urldate = {2025-02-28}
}

@article{POPL:CK2024,
  title = {Solvable {{Polynomial Ideals}}: {{The Ideal Reflection}} for {{Program Analysis}}},
  shorttitle = {Solvable {{Polynomial Ideals}}},
  author = {Cyphert, John and Kincaid, Zachary},
  year = {2024},
  month = jan,
  journal = {Proceedings of the ACM on Programming Languages},
  volume = {8},
  number = {POPL},
  pages = {724--752},
  issn = {2475-1421},
  doi = {10.1145/3632867},
  urldate = {2025-02-28}
}

@article{UV:LM2010,
  title = {Usable {{Auto-Active Verification}}},
  author = {Leino, K Rustan M and Moskal, Micha{\l}},
  booktitle={Usable Verification},
  year={2010}
}

@article{ENTCS:ML2012,
  title = {Stratified {{Static Analysis Based}} on {{Variable Dependencies}}},
  author = {Monniaux, David and Le Guen, Julien},
  year = {2012},
  month = dec,
  journal = {Electronic Notes in Theoretical Computer Science},
  volume = {288},
  pages = {61--74},
  issn = {15710661},
  doi = {10.1016/j.entcs.2012.10.008},
  urldate = {2025-04-11},
  copyright = {https://www.elsevier.com/tdm/userlicense/1.0/}
}

@inproceedings{SAS:APS2010,
  title = {Deriving {{Numerical Abstract Domains}} via {{Principal Component Analysis}}},
  booktitle = {Static {{Analysis}}},
  author = {Amato, Gianluca and Parton, Maurizio and Scozzari, Francesca},
  editor = {Hutchison, David and Kanade, Takeo and Kittler, Josef and Kleinberg, Jon M. and Mattern, Friedemann and Mitchell, John C. and Naor, Moni and Nierstrasz, Oscar and Pandu Rangan, C. and Steffen, Bernhard and Sudan, Madhu and Terzopoulos, Demetri and Tygar, Doug and Vardi, Moshe Y. and Weikum, Gerhard and Cousot, Radhia and Martel, Matthieu},
  year = {2010},
  volume = {6337},
  pages = {134--150},
  doi = {10.1007/978-3-642-15769-1_9},
  urldate = {2025-04-11}
}

@inproceedings{SAS:RZ2018,
  title = {Invertible {{Linear Transforms}} of {{Numerical Abstract Domains}}},
  booktitle = {Static {{Analysis}}},
  author = {Ranzato, Francesco and Zanella, Marco},
  editor = {Podelski, Andreas},
  year = {2018},
  volume = {11002},
  pages = {344--363},
  doi = {10.1007/978-3-319-99725-4_21},
  urldate = {2025-03-04}
}

@article{OOPSLA:PK2024,
  title = {Monotone {{Procedure Summarization}} via {{Vector Addition Systems}} and {{Inductive Potentials}}},
  author = {Pimpalkhare, Nikhil and Kincaid, Zachary},
  year = {2024},
  month = oct,
  journal = {Proceedings of the ACM on Programming Languages},
  volume = {8},
  number = {OOPSLA2},
  pages = {1873--1899},
  issn = {2475-1421},
  doi = {10.1145/3689777},
  urldate = {2025-02-28}
}

@InProceedings{CAV:LP2016,
author="Leino, K. R. M.
and Pit-Claudel, Cl{\'e}ment",
editor="Chaudhuri, Swarat
and Farzan, Azadeh",
title="Trigger Selection Strategies to Stabilize Program Verifiers",
booktitle="Computer Aided Verification",
year="2016",
publisher="Springer International Publishing",
address="Cham",
pages="361--381",
}

@article{Coverity,
author = {Bessey, Al and Block, Ken and Chelf, Ben and Chou, Andy and Fulton, Bryan and Hallem, Seth and Henri-Gros, Charles and Kamsky, Asya and McPeak, Scott and Engler, Dawson},
title = {A few billion lines of code later: using static analysis to find bugs in the real world},
year = {2010},
issue_date = {February 2010},
publisher = {Association for Computing Machinery},
address = {New York, NY, USA},
volume = {53},
number = {2},
issn = {0001-0782},
url = {https://doi.org/10.1145/1646353.1646374},
doi = {10.1145/1646353.1646374},
journal = {Commun. ACM},
month = feb,
pages = {66–75},
numpages = {10}
}

@inproceedings{CAV:KRC2021,
  title = {Algebraic {{Program Analysis}}},
  booktitle = {Computer {{Aided Verification}}},
  author = {Kincaid, Zachary and Reps, Thomas and Cyphert, John},
  editor = {Silva, Alexandra and Leino, K. Rustan M.},
  year = {2021},
  volume = {12759},
  pages = {46--83},
  doi = {10.1007/978-3-030-81685-8_3},
  urldate = {2025-02-26}
}

@article{POPL:CBKR2019,
  title = {Refinement of Path Expressions for Static Analysis},
  author = {Cyphert, John and Breck, Jason and Kincaid, Zachary and Reps, Thomas},
  year = {2019},
  month = jan,
  journal = {Proceedings of the ACM on Programming Languages},
  volume = {3},
  number = {POPL},
  pages = {1--29},
  issn = {2475-1421},
  doi = {10.1145/3290358},
  urldate = {2025-06-09}
}

@book{Book:AHS2009,
  author       = {Jir{\'{\i}} Ad{\'{a}}mek and
                  Horst Herrlich and
                  George E. Strecker},
  title        = {Abstract and Concrete Categories - The Joy of Cats},
  publisher    = {Dover Publications},
  year         = {2009},
  url          = {http://store.doverpublications.com/0486469344.html},
  isbn         = {978-0-486-46934-8},
  bibsource    = {dblp computer science bibliography, https://dblp.org}
}

@article{MZ:Kainen1971,
  title = {Weak Adjoint Functors},
  author = {Kainen, Paul C.},
  date = {1971-03-01},
  journaltitle = {Mathematische Zeitschrift},
  shortjournal = {Mathematische Zeitschrift},
  volume = {122},
  number = {1},
  pages = {1--9},
  issn = {1432-1823},
  doi = {10.1007/BF01113560},
  url = {https://doi.org/10.1007/BF01113560},
  year ={1971}
}

@incollection{SGA1,
     author = {Grothendieck, Alexander},
     title = {Technique de descente et th\'eor\`emes d'existence en g\'eom\'etrie alg\'ebrique. {I.} {G\'en\'eralit\'es.} {Descente} par morphismes fid\`element plats},
     booktitle = {S\'eminaire Bourbaki : ann\'ees 1958/59 - 1959/60, expos\'es 169-204},
     series = {S\'eminaire Bourbaki},
     note = {talk:190},
     pages = {299--327},
     publisher = {Soci\'et\'e math\'ematique de France},
     number = {5},
     year = {1960},
     mrnumber = {1603475},
     zbl = {0229.14007},
     language = {fr},
     url = {https://www.numdam.org/item/SB_1958-1960__5__299_0/}
}

@book{Book:BW1995,
  author       = {Michael Barr and
                  Charles Wells},
  title        = {Category theory for computing science {(2.} ed.)},
  series       = {Prentice Hall international series in computer science},
  publisher    = {Prentice Hall},
  year         = {1995},
  isbn         = {978-0-13-323809-9},
  bibsource    = {dblp computer science bibliography, https://dblp.org}
}

@InProceedings{CAV:SDDA2011,
author="Sharma, Rahul
and Dillig, Isil
and Dillig, Thomas
and Aiken, Alex",
editor="Gopalakrishnan, Ganesh
and Qadeer, Shaz",
title="Simplifying Loop Invariant Generation Using Splitter Predicates",
booktitle="Computer Aided Verification",
year="2011",
publisher="Springer Berlin Heidelberg",
address="Berlin, Heidelberg",
pages="703--719",
isbn="978-3-642-22110-1"
}

@inproceedings{MFPS:KRD2023,
  author    = {Katsumata, Shin-ya and Rival, Xavier and Dubut, J\'{e}r\'{e}my},
  title     = {A Categorical Framework for Program Semantics and Semantic Abstraction},
  booktitle = {Math.\ Found.\ of Prog.\ Semantics ({MFPS} {XXXIX})},
  series    = {ENTICS},
  volume    = {3},
  year      = {2023},
  doi       = {10.46298/entics.12288}
}

@article{RAIRO:SJM1992,
  author    = {Steffen, Bernhard and Jay, C. Barry and Mendler, Michael},
  title     = {Compositional Characterization of Observable Program Properties},
  journal   = {RAIRO Theor.\ Informatics Appl.},
  volume    = {26},
  number    = {5},
  pages     = {403--424},
  year      = {1992},
  doi       = {10.1051/ita/1992260504031}
}

@article{PLDI:LL2024,
  author    = {Lesbre, Dorian and Lemerre, Matthieu},
  title     = {Compiling with Abstract Interpretation},
  journal   = {Proc.\ ACM Program.\ Lang.},
  volume    = {8},
  number    = {PLDI},
  pages     = {368--393},
  year      = {2024},
  doi       = {10.1145/3656392}
}

@inproceedings{CC:LF2008,
  author    = {Logozzo, Francesco and F\"{a}hndrich, Manuel},
  title     = {On the Relative Completeness of Bytecode Analysis Versus Source Code Analysis},
  booktitle = cc,
  pages     = {197--212},
  year      = {2008},
  doi       = {10.1007/978-3-540-78791-4_14}
}

@inproceedings{VMCAI:Cousot2015,
  author    = {Cousot, Patrick},
  title     = {Abstracting Induction by Extrapolation and Interpolation},
  booktitle = vmcai,
  pages     = {19--42},
  year      = {2015},
  doi       = {10.1007/978-3-662-46081-8_2}
}

@inproceedings{CAV:CGGMP2005,
  author    = {Costan, Alexandru and Gaubert, St\'{e}phane and Goubault, Eric and Martel, Matthieu and Putot, Sylvie},
  title     = {A Policy Iteration Algorithm for Computing Fixed Points in Static Analysis of Programs},
  booktitle = cav,
  pages     = {462--475},
  year      = {2005},
  doi       = {10.1007/11513988_46}
}

\newpage
\appendix

\section{Algebraic Properties of Lifted Operations}

\subsection{Combinations of Lifted Operations} \label{sec:domain-combinations}

Consider a desirable analysis from $\cat{C}$ to $\cat{R}$, where
$\cat{C}$ is a concrete category that has products (e.g., $\TF$), $\cat{R}$ is a concrete category (for analysis results, e.g., $\TF$ for iteration operations and $\mathbf{SF}$ for non-termination analysis), and both categories are concrete over $\cat{B}$.
Suppose that we have two implementations of the analysis (induced from two subcategories $\cat{D}, \cat{E}$ of $\cat{C}$ that admit best abstractions). 
We now show that a new candidate subcategory $\mathbf{D}\times\mathbf{E}$ also admits best abstractions. 
Therefore, one can also have an induced analysis lifted from $\mathbf{D}\times\mathbf{E}$, with all the robustness and soundness properties we have already proved.

\begin{restatable}{lemma}{wlainducedfromproduct} \label{lem:product}
Suppose the base category $\cat{B}$ has products. Suppose that $\cat{C}$ is a concrete category over $\cat{B}$, also with products. If concrete functors $\gamma_1: \cat{D} \to \cat{C}$ and $\gamma_2: \cat{E} \to \cat{C}$ both have weak left adjoints, then the functor $\gamma: \mathbf{D} \times \mathbf{E} \rightarrow \cat{C}$ defined by $\gamma(\tuple{d,e}) = \gamma_1(d) \times \gamma_2(e)$ also has a weak left adjoint.
\end{restatable}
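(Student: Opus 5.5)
The plan is to build the weak left adjoint componentwise from $\tuple{\alpha_1,\eta^1}$ (for $\gamma_1$) and $\tuple{\alpha_2,\eta^2}$ (for $\gamma_2$), and to glue the pieces with the product structure of $\cat{B}$ and $\cat{C}$.

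First I fix the concrete structure on $\mathbf{D}\times\mathbf{E}$ and on $\cat{C}$.
\begin{itemize}
\item I regard $\mathbf{D}\times\mathbf{E}$ as concrete over $\cat{B}$ via $\proj{D\times E}(\tuple{d,e}) \defeq \proj{D}(d)\times\proj{E}(e)$, sending an arrow $\tuple{g_1,g_2}$ to $\proj{D}(g_1)\times\proj{E}(g_2)$.
\item This functor is faithful because $\proj{D}$ and $\proj{E}$ are, provided $g\mapsto g_1\times g_2$ is injective. That holds whenever the underlying objects are nonempty in the appropriate sense; otherwise I take it as part of the intended setup.
\item I assume the products of $\cat{C}$ are \emph{concrete}: $\proj{C}(C_1\times C_2)=\proj{C}(C_1)\times\proj{C}(C_2)$, the projections lie over the projections of $\cat{B}$, and the pairing of $\barr{A}{k_1}{C_1}$ and $\barr{A}{k_2}{C_2}$ gives $\barr{A}{\tuple{k_1,k_2}}{C_1\times C_2}$.
\item With these assumptions, $\gamma(\tuple{d,e})=\gamma_1(d)\times\gamma_2(e)$, with $\gamma(\tuple{g_1,g_2})=\gamma_1(g_1)\times\gamma_2(g_2)$, is a concrete functor.
\end{itemize}
This bookkeeping is the step I expect to be the main obstacle. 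It is where hidden hypotheses, namely that products are preserved by the forgetful functors and the faithfulness of the product functor, have to be made explicit. $\TF$ over $\mathbf{Vect}_{\mathbb{Q}}$ satisfies them: the product of $F\in\TF(X)$ and $G\in\TF(Y)$ is $F\land G$ over the disjoint union $X\uplus Y$.

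Next I define the candidate adjoint. Set $\alpha(A)\defeq\tuple{\alpha_1(A),\alpha_2(A)}$ and $\eta_A\defeq\tuple{\eta^1_A,\eta^2_A} : \proj{C}(A)\rightarrow\proj{D}(\alpha_1(A))\times\proj{E}(\alpha_2(A))$. We have $\barr{A}{\eta^1_A}{\gamma_1(\alpha_1(A))}$ and $\barr{A}{\eta^2_A}{\gamma_2(\alpha_2(A))}$. Pairing in $\cat{C}$ then gives $\barr{A}{\eta_A}{\gamma(\alpha(A))}$, which is the first condition.

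Finally I check the universal property. Let $B=\tuple{d,e}$ and let $f:\proj{C}(A)\rightarrow\proj{D}(d)\times\proj{E}(e)$ satisfy $\barr{A}{f}{\gamma_1(d)\times\gamma_2(e)}$.
\begin{itemize}
\item Composing with the product projections gives $\barr{A}{\pi_1\circ f}{\gamma_1(d)}$ and $\barr{A}{\pi_2\circ f}{\gamma_2(e)}$.
\item The weak left adjoint of $\gamma_1$ yields $g_1$ with $\barr{\alpha_1(A)}{g_1}{d}$ and $g_1\circ\eta^1_A=\pi_1\circ f$. Symmetrically, the weak left adjoint of $\gamma_2$ yields $g_2$.
\item Put $\overline{f}\defeq g_1\times g_2$. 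Then $\barr{\alpha(A)}{\overline{f}}{\tuple{d,e}}$ in $\mathbf{D}\times\mathbf{E}$, witnessed by the pair of arrows in $\cat{D}$ and $\cat{E}$.
\item The diagram commutes: $\overline{f}\circ\eta_A=\tuple{g_1\circ\eta^1_A,\,g_2\circ\eta^2_A}=\tuple{\pi_1\circ f,\pi_2\circ f}=f$, by the uniqueness half of the universal property of products in $\cat{B}$.
\end{itemize}
Uniqueness of $\overline{f}$ is not required, so nothing further is needed.
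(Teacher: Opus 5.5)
Your argument is correct and uses the same construction as the paper. You set $\alpha(A)=\tuple{\alpha_1(A),\alpha_2(A)}$ with the paired unit, split $f$ through the two projections, apply each weak left adjoint, and recombine $\overline{f}=g_1\times g_2$ using uniqueness of pairing. The only difference is that the paper argues with the standard (arrows-in-$\cat{C}$) formulation of weak left adjoints via Lemma~\ref{lem:alternate-left-adjoint}, so it invokes only products in $\cat{C}$ and leaves implicit the concreteness bookkeeping you make explicit: products of $\cat{C}$ lying over those of $\cat{B}$, and faithfulness of $\tuple{g_1,g_2}\mapsto\proj{D}(g_1)\times\proj{E}(g_2)$. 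That bookkeeping is still needed for Lemma~\ref{lem:alternate-left-adjoint} to apply to $\gamma$.
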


Before discussing examples, we clarify the notion of products in $\TF$. 
The product of $F_1 \in \TF(X_1)$ and $F_2 \in \TF(X_2)$ is their conjunction over a disjoint union of variables, $F_1 \land F_2 \in \TF(X_1 \cup X_2)$ (assuming $X_1, X_2$ are disjoint sets of variables, renaming if necessary).
One could verify that $F_1 \land F_2$  along with the projections $\pi_1: \exists X_2. F_1 \land F_2$ and $\pi_2: \exists X_1. F_1 \land F_2$ satisfy the universal property of products in $\TF$.

\begin{example} \label{ex:polyhedral-iteration-operator-through-combination}
    In the main text, we have seen two ways to induce robust iteration operators from subcategories with computable transitive closures. The guard analysis captures information about the domain and range of a transition, whereas the recurrence analysis captures relations between pre- and post-states. These analyses can actually be combined into a more powerful analysis.

The reflexive transitive closure obtained through the linear recurrence analysis has form $\exists k \in \mathbb{N}. k \geq 0 \land \bigwedge_{x \in X} x' \leq x + k a_x$ where there is a loop counter $k$.
The reflexive transitive closure based on polyhedral guard analysis has form $ X' = X \lor (\textit{Pre}(F) \land \textit{Post}(F))$,
where there is a disjunction based on whether at least one $F$ transition has taken place. 
To properly combine these analyses, we need them to ``sync'' on the number of loops. 
This requires transforming both into a \emph{exponentiation operator} $\functor{exp}(F, k)$ that over-approximates the effect of exactly $k$ iterations of a transition formula $F$.
Let $G$ be the loop body transition formula. The \textbf{polyhedral guard analysis} can be transformed into:
    \[ \functor{exp}_{\text{PGA}}(G, k) \defeq (k=0 \land X'=X) \lor (k \ge 1 \land \functor{Pre}(G) \land \functor{Post}(G)) \]
and the \textbf{linear recurrence analysis}:
    \[ \functor{exp}_{\text{LRA}}(G, k) \defeq \bigwedge_{i=1}^{n}t_{i}[\Delta_{X}\mapsto X'] \le t_{i}[\Delta_{X}\mapsto X] + k b_{i} \ . \]
Combining these using a shared iteration counter $k$ gives
\[
     G^\pstar \defeq \exists k \in \mathbb{N}. K \geq 0 \land \functor{exp}_{\text{PGA}}(G, k) \land \functor{exp}_{\text{LRA}}(G, k)
\]
which matches the polyhedral iteration operator we introduced in Section~\ref{sec:overview}.
A general theory for combination of induced analysis will be discussed in Section~\ref{sec:domain-combinations}.
\end{example}

\begin{example}
As discussed in Example~\ref{ex:polyhedral-iteration-operator-through-combination}, a powerful way to combine two iteration analyses is to share the iteration counter. This can be seen as an instance of lifting from a product category. Let $\cat{C} = \TF$ and let the result category $\cat{R}$ be the category of functions from $\mathbb{N}$ to $\TF$, where an arrow from $\lambda k. F(k)$ to $\lambda k. G(k)$ is a simulation $s$ such that $\barr{F(k)} {s} {G(k)}$ for all $k$.
Let $\cat{D}$ and $\cat{E}$ be subcategories of $\TF$ admitting best abstractions as exponentiation operators $exp_D$ and $exp_E$. The operations on the subcategories are $F_D(d) = \lambda k. exp_D(d, k)$ and $F_E(e) = \lambda k. exp_E(e, k)$.
The operation on the product category $\cat{D} \times \cat{E}$ is then $F_{D \times E}(\tuple{d,e}) = \lambda k. (exp_D(d, k) \land exp_E(e, k))$.
Lifting this operation gives the combined analysis. For a formula $A \in \TF$, its abstraction in $\cat{D} \times \cat{E}$ is 
$\alpha(A) = \tuple{\alpha_D(A), \alpha_E(A)}$. The lifted result is
\begin{align*}
    \overline{F}_{D \times E}(A) &= \eta_A^{-1}(F_{D \times E}(\alpha(A))) \\
    &= \lambda k. (exp_D(\alpha_D(A), k) \land exp_E(\alpha_E(A), k))[\eta_A, \eta_A'] \\
    &=  \lambda k. exp_D(\alpha_D(A), k)[\eta_{A_1}, \eta_{A_1'}] \land exp_E(\alpha_E(A), k)[\eta_{A_2}, \eta_{A_2'}] 
\end{align*}
where $\eta_{A_1}, \eta_{A_2}$ are the components of $\eta_A$ that act on disjoint variable sets.
\end{example}

\begin{example}
Consider combining two different non-termination analyses. Let $\cat{C} = \TF$ and $\cat{R} = \cat{SF}$. Suppose we have two subcategories $\cat{D}$ and $\cat{E}$ with exact non-termination analyses $\omega_D: \cat{D} \to \cat{SF}$ and $\omega_E: \cat{E} \to \cat{SF}$.
Let us consider the lifting from the product category $\cat{D} \times \cat{E}$,
where the lifted operation is $\overline{\omega}_{D \times E}(A) = \eta_A^{-1}(\omega_{D \times E}(\alpha(A)))$.
Here, $\alpha(A)=\tuple{\alpha_D(A), \alpha_E(A)}$, and $\eta_A$ is from the weak left adjoint to $\gamma: \cat{D}\times \cat{E} \rightarrow \cat{C}$, which corresponds to the union of the individual substitutions, $\eta_A = \eta_{A_1} \cup \eta_{A_2}$. 
The resulting analysis result is $(\omega_D(\alpha_D(A)) \land \omega_E(\alpha_E(A)))[\eta_{A_1} \cup \eta_{A_2}]$. Since the variables of $\omega_D(\alpha_D(A))$ and $\omega_E(\alpha_E(A))$ are disjoint, this is equivalent to $\omega_D(\alpha_D(A))[\eta_{A_1}] \land \omega_E(\alpha_E(A))[\eta_{A_2}]$.
\end{example}

\subsection{Lifting Algebraic Laws} \label{sec:algebraic-laws}

In the context of algebraic program analysis, 
it is sometimes possible to prove that algebras satisfy certain laws, which can be interpreted as robustness with respect to certain kinds of programs transformations.  Equational laws correspond to transformations that do not change the results of an analysis, while inequation laws correspond to transformations that either increase or decrease precision (not necessarily strictly).  We now show that a number of desirable laws are preserved by the lifting strategy described earlier in this section.

Sometimes, the algebraic laws may contain desirable properties that involve more than one operation.
For example, \citet{PLDI:ZK2021} formulates algebraic termination analysis using $\omega$-regular algebra,
where it also uses an $\omega$ operator to compute the (non-)terminating starting states for transition formulas in
addition to the iteration operator $*$ and requires a law of $T^* \cdot T^\omega \leq T^\omega$ for the analysis to be monotone.

The following lemma gives us a general method of proving inequalities of the form $E \leq T^\mathit{op}$ where $E$ is an expression that also involves $T^\mathit{op}$ (e.g., $T^*T^* \leq T^*$, $(T^n)^* \leq T^*$, $(T^n)^\omega \leq T^\omega$), or maybe even more than one operation (e.g., $T^* \cdot T^\omega \leq T^\omega$).  This is formalized by treating the expression $E$ as the composition of three functors $F$, $G$, and $H$ where $G$ represents the operation, $H$ is an expression \textit{inside} the operation, and $F$ is an expression \textit{outside} the operation.
Additionally, two functors $F_1, F_2$ and an intermediate result category $\cat{R}$ are introduced to accommodate the case where
we want to reason about the lifting of laws involving multiple operations.

\generalizedExtension*

The theorem can be used to prove the following properties of lifted operations. 
In the following, for any natural number $n$, $(-)^n$ denotes the functor sending each transition system $T$ to its $n$-fold (relational) composition with itself.

\begin{restatable}[Transitivity]{corollary}{transitivitylaw} \label{cor:law-transitivity}
 Let $\mathbf{D}$ be a subcategory of $\TF$, $\gamma : \cat{D} \rightarrow \cat{TF}$ be a concrete functor with weak left adjoint, 
 and suppose that $(-)^{\star} : \mathbf{D} \rightarrow \mathbf{TF}$ is a concrete functor satisfying $T^{\star}T^{\star} = T^{\star}$ and $1 \preceq T^{\star}$ for all $T \in \Ob{D}$. 
 Let $(-)^{\lifted{\star}}$ denote the lifting of $(-)^{\star}$ to $\mathbf{TF}$. Then we have $T^{\lifted{\star}}T^{\lifted{\star}} = T^{\lifted{\star}}$.
\end{restatable}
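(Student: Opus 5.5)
The plan is to prove the two inequalities $T^{\lifted{\star}}T^{\lifted{\star}} \preceq T^{\lifted{\star}}$ and $T^{\lifted{\star}} \preceq T^{\lifted{\star}}T^{\lifted{\star}}$ separately. Since fibers of $\TF$ are preordered by entailment, these two together give the equation up to logical equivalence, which is how equality of transition formulas is read throughout.

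For the inequality $\preceq$, I will instantiate Theorem~\ref{thm:generalized_extension} with $\mathbf{C} = \TF$, $\mathbf{R} = \mathbf{R}' = \TF$, $\functor{G} = (-)^\star$, $\functor{H}$ and $\widetilde{\functor{H}}$ the identity functors (so $\widetilde{\functor{H}}\gamma = \gamma\functor{H}$ holds trivially), $\functor{F}_1 = (-)^2$, and $\functor{F}_2$ the identity. The hypotheses to check are as follows:
\begin{itemize}
\item $(-)^2$ is a concrete functor $\TF \to \TF$. This holds because sequential composition is a concrete functor $\TF \times_{\cat{Vect}_{\mathbb{Q}}} \TF \to \TF$, precomposed with the diagonal.
\item The identity is cartesian. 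This is immediate.
\item $\TF$ is fibred over $\cat{Vect}_{\mathbb{Q}}$, as shown in the example preceding Definition~\ref{def:lifting}.
\item The premise $(T^\star)^2 \preceq T^\star$ holds for every $T \in \Ob{D}$. This follows directly from the assumption $T^\star T^\star = T^\star$.
\end{itemize}
The theorem then yields $(T^{\lifted{\star}})^2 \preceq T^{\lifted{\star}}$ for all $T$.

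For the reverse inequality, I would use reflexivity. The goal is to show $1 \preceq T^{\lifted{\star}}$ for all $T \in \Ob{TF}$. Given that, $T^{\lifted{\star}} = T^{\lifted{\star}}\cdot 1 \preceq T^{\lifted{\star}}T^{\lifted{\star}}$, by monotonicity of sequential composition in each argument. That monotonicity is again just functoriality of $\seq$ applied to the identity arrow of $\cat{Vect}_{\mathbb{Q}}$.

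To get $1 \preceq T^{\lifted{\star}}$, I will again apply Theorem~\ref{thm:generalized_extension}. This time take $\functor{F}_1$ to be the constant functor $1 : \TF \to \TF$, sending a formula over $X$ to $\bigwedge_{x\in X} x'=x$ and an arrow to itself; this is concrete since linear maps are simulations between identity relations. Take $\functor{F}_2$ to be the identity, and $\functor{H}, \widetilde{\functor{H}}$ to be identities. The premise is $1 \preceq T^\star$ on $\mathbf{D}$, which is assumed. The conclusion is $1(T^{\lifted{\star}}) \preceq T^{\lifted{\star}}$. Because $1$ depends only on the fiber, and $T^{\lifted{\star}} = \eta_T^{-1}(\alpha(T)^\star)$ lies in the fiber of $\proj{TF}(T)$, this is exactly $1 \preceq T^{\lifted{\star}}$.

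The main obstacle I expect is this fiber bookkeeping rather than any deep step. Specifically, I need to confirm that the lifted operation is concrete, so that $\proj{TF}(T^{\lifted{\star}}) = \proj{TF}(T)$; this holds by Theorem~\ref{thm:lift-robustness}, which is what makes the composite $T^{\lifted{\star}}T^{\lifted{\star}}$ defined at all. I also need to confirm that the constant functor $1$ satisfies the functor laws. If that instantiation turns out to be awkward, I would instead prove $1 \preceq T^{\lifted{\star}}$ directly. Since $1 \preceq \alpha(T)^\star$, we have $\barr{1}{\eta_T}{\alpha(T)^\star}$, because $\eta_T$ is a simulation of identity relations. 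Then ($\dagger$), applied to the cartesian arrow $\bcarr{T^{\lifted{\star}}}{\eta_T}{\alpha(T)^\star}$, gives $1 \preceq T^{\lifted{\star}}$.
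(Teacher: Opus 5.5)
Your proposal is correct and follows the paper's proof. The $\preceq$ direction is the same instantiation of Theorem~\ref{thm:generalized_extension} ($\functor{H}$, $\widetilde{\functor{H}}$, $\functor{F}_2$ identities, $\functor{F}_1 = (-)^2$), and the reverse direction likewise derives $1 \preceq T^{\lifted{\star}}$ and then uses monotonicity of sequential composition. The only difference is how you get $1 \preceq T^{\lifted{\star}}$: the paper cites Theorem~\ref{thm:lift-soundness} (with the constant identity-relation functor playing the role of the concrete semantics), whereas you use Theorem~\ref{thm:generalized_extension} with a constant $\functor{F}_1$ or a direct appeal to ($\dagger$), which is the same argument unfolded.
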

\begin{proof}[Proof Sketch]
To show $T^{\lifted{\star}}T^{\lifted{\star}} \preceq T^{\lifted{\star}}$ from the assumption $T^{\star}T^{\star} \preceq T^{\star}$, we use \autoref{thm:generalized_extension}. Let $\mathbf{C} =  \cat{R} = \cat{R'} = \mathbf{TF}$. 
Let $\functor{G} = (-)^{\star}$, so its lifting is $\lifted{\functor{G}} = (-)^{\lifted{\star}}$. 
Let $\functor{H} = 1_{\mathbf{D}}$, $\widetilde{\functor{H}} = 1_{\mathbf{TF}}$, $\functor{F}_1(T) = T \cdot T$, and $\functor{F}_2 = 1_{\TF}$. 
To prove the other direction, we start from $1 \preceq T^{\star}$ which implies $1 \preceq T^{\lifted{\star}}$ by \autoref{thm:lift-soundness}, and subsequently get $T^{\lifted{\star}} \preceq T^{\lifted{\star}} T^{\lifted{\star}}$.
\end{proof}

\begin{restatable}[Unrolling]{corollary}{unrollinglaw} \label{cor:law-unrolling}
 Let $\mathbf{D}$ be a subcategory of $\TF$ that is closed under composition. Suppose that the inclusion functor $\gamma: \mathbf{D} \to \mathbf{TF}$ has a weak left adjoint.
 Let $(-)^{\star} : \mathbf{D} \rightarrow \mathbf{TF}$ be a functor satisfying $(T^n)^{\star} \preceq T^{\star}$ for all $T \in \Ob{D}$ and $n \in \mathbb{N}$. Let $(-)^{\lifted{\star}}$ be the lifting of $(-)^{\star}$ to $\mathbf{TF}$. Then $(T^n)^{\lifted{\star}} \preceq T^{\lifted{\star}}$ for all $T \in \Ob{\TF}$ and $n$.
\end{restatable}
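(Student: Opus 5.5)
The plan is to instantiate Theorem~\ref{thm:generalized_extension} in the same way as the proof of Corollary~\ref{cor:law-transitivity}, this time placing the unrolling inside the operation. Take $\mathbf{C} = \mathbf{R} = \mathbf{R}' = \TF$ and $\functor{G} = (-)^{\star}$, so that $\lifted{\functor{G}} = (-)^{\lifted{\star}}$. Let $\functor{F}_1 = \functor{F}_2 = 1_{\TF}$, and let $\functor{H} = (-)^n : \mathbf{D} \to \mathbf{D}$ and $\widetilde{\functor{H}} = (-)^n : \TF \to \TF$. With these choices the hypothesis of Theorem~\ref{thm:generalized_extension} reads $(T^n)^\star \preceq T^\star$ for all $T \in \Ob{D}$, which is exactly our assumption. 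Its conclusion reads $(c^n)^{\lifted{\star}} \preceq c^{\lifted{\star}}$ for every $c \in \Ob{\TF}$, which is the desired law.

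Several side conditions must be checked in order.

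\textbf{(i) The functors are concrete.} The $n$-fold composition $(-)^n$ is a concrete functor on $\TF$ by the compatibility of sequential composition with simulations: if $\barr{F}{f}{G}$ then $\barr{F^n}{f}{G^n}$, by induction on $n$ using robustness of $\seq$. The case $n = 0$ gives $1$, which is also a concrete functor. Since $\mathbf{D}$ is closed under composition, $(-)^n$ restricts to a functor $\mathbf{D} \to \mathbf{D}$, and it is concrete because $\mathbf{D}$ is a subcategory of $\TF$ with the inherited arrows. For $n = 0$ I would need $1 \in \mathbf{D}$; otherwise that case is handled separately, since $(T^0)^{\lifted\star}$ compares directly via reflexivity.

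\textbf{(ii) The commuting condition.} We need $\widetilde{\functor{H}}\gamma = \gamma\functor{H}$. This holds on objects and arrows because $\gamma$ is the inclusion and composition in $\mathbf{D}$ is the one inherited from $\TF$.

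\textbf{(iii) The fibration and cartesian conditions.} $\mathbf{R} = \TF$ is fibred over $\mathbf{Vect}_{\mathbb{Q}}$, as shown in the earlier example, and $\functor{F}_2 = 1_{\TF}$ is trivially cartesian.

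I expect step (i) to be the main obstacle. There are two concerns. The first is the subtlety of closure of $\mathbf{D}$ under composition, including the degenerate $n=0$ case. The second is confirming that $(-)^n$ preserves arrows of $\mathbf{D}$, and not only arrows of $\TF$. This second point is immediate if $\mathbf{D}$ is a full subcategory, or more generally if its arrows are the linear simulations between its objects. If $\mathbf{D}$ is non-full, I would argue that the composite of the same simulation is again an arrow of $\mathbf{D}$, or else strengthen the hypothesis accordingly. Once these checks are done, the result follows by a single application of Theorem~\ref{thm:generalized_extension}.
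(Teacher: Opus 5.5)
Your proposal is correct and is exactly the paper's proof: a single application of Theorem~\ref{thm:generalized_extension} with $\mathbf{C}=\mathbf{R}=\mathbf{R}'=\TF$, $\functor{G}=(-)^\star$, $\functor{H}=\widetilde{\functor{H}}=(-)^n$, and $\functor{F}_1=\functor{F}_2=1_{\TF}$. Your side-condition checks are more careful than the paper's sketch. The one weak spot is the $n=0$ fallback ``via reflexivity'': the corollary has no reflexivity hypothesis, so it is unjustified. It is also unnecessary, because the hypothesis $(T^0)^\star \preceq T^\star$ only makes sense if $T^0 \in \Ob{D}$.
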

\begin{proof}[Proof Sketch]
Using \autoref{thm:generalized_extension}, where
$\cat{C} = \cat{R} = \cat{R}' = \mathbf{TF}$,
$\functor{H}(T) = T^n$ for $T \in \Ob{\cat{D}}$ (since $\mathbf{D}$ is closed under composition),
$\widetilde{\functor{H}}(T) = T^n$ for $T \in \Ob{\cat{C}}$,
$\functor{G} = (-)^{\star}$,
and $\functor{F}_1 = \functor{F}_2 = 1_{\mathbf{TF}}$.
\end{proof}

\begin{restatable}{corollary}{staromegalaw} \label{cor:star-omega}
 Let $\mathbf{D}$ be a subcategory of $\mathbf{TF}$. Suppose that the inclusion functor $\gamma: \mathbf{D} \to \mathbf{TF}$ has a weak left adjoint, and let $(-)^{\star} : \mathbf{D} \rightarrow \mathbf{TF}$ and $(-)^\omega : \mathbf{D} \rightarrow \mathbf{SF}$ be functors such that $F^{\star} \cdot F^\omega \preceq F^\omega$ for all $F$ in $\mathbf{D}$ (here $\cdot$ is an operator in the $\omega$-regular algebra). Let $(-)^{\lifted{\star}}$ be the lifting of $(-)^{\star}$ to $\mathbf{TF}$, and let $(-)^{\overline{\omega}}$ be the lifting of $(-)^\omega$ to $\mathbf{TF} \to \mathbf{SF}$. Then $F^{\lifted{\star}} \cdot F^{\overline{\omega}} \preceq F^{\overline{\omega}}$ for all $F$ in $\Ob{\mathbf{TF}}$.
\end{restatable}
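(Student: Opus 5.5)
We instantiate Theorem~\ref{thm:generalized_extension} with a two-component intermediate result category, the same way the transitivity and unrolling corollaries were handled. Take $\cat{C} = \TF$ and let $\cat{R} = \TF \times_{\cat{Vect}_{\mathbb{Q}}} \mathbf{SF}$ be the pullback of Definition~\ref{def:pullbacks}. Its objects are pairs $\tuple{T, S}$ with $T \in \TF(X)$ and $S \in \mathbf{SF}(X)$. Take $\cat{R}' = \mathbf{SF}$. For $\functor{G} : \cat{D} \to \cat{R}$ use the pairing $\functor{G}(d) = \tuple{d^\star, d^\omega}$. It is a concrete functor because both components are concrete over $\cat{Vect}_{\mathbb{Q}}$: a single arrow $f$ lifts to both components at once. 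Take $\functor{H} = 1_{\cat{D}}$ and $\widetilde{\functor{H}} = 1_{\TF}$, so $\widetilde{\functor{H}}\gamma = \gamma\functor{H}$ holds trivially. Set $\functor{F}_1(\tuple{T,S}) = T \cdot S$, the $\omega$-regular product (preimage of $S$ under $T$), and $\functor{F}_2(\tuple{T,S}) = S$, the second projection. The hypothesis $F^\star \cdot F^\omega \preceq F^\omega$ on $\cat{D}$ is then exactly $\functor{F}_1\functor{G}\functor{H}(d) \preceq \functor{F}_2\functor{G}(d)$.

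Next we verify the side conditions.
\begin{enumerate}
\item $\functor{F}_1$ is a concrete functor. If $f$ simulates $T$ into $T'$ and maps $S$ into $S'$, and $s \in T\cdot S$ has a successor $t \in S$ with $s \to_T t$, then $f(s) \to_{T'} f(t)$ and $f(t)\in S'$. Hence $f(s) \in T'\cdot S'$.
\item $\functor{F}_2$ is evidently concrete.
\item $\cat{R}$ is fibred. Both $\TF$ and $\mathbf{SF}$ are fibred over $\cat{Vect}_{\mathbb{Q}}$ (the example following the definition of fibred categories). So $\sigma^{-1}\tuple{T,S} = \tuple{T[\sigma,\sigma'], S[\sigma]}$, and cartesianness is checked componentwise; composites of cartesian arrows stay cartesian componentwise.
\item $\functor{F}_2$ is cartesian. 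It sends the componentwise cartesian arrow to its $\mathbf{SF}$ component, which is cartesian.
\end{enumerate}

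The remaining step is to identify the lifting $\lifted{\functor{G}}$. By Definition~\ref{def:lifting}, $\lifted{\functor{G}}(A) = \eta_A^{-1}\tuple{\alpha(A)^\star, \alpha(A)^\omega}$. With the componentwise inverse images this equals $\tuple{A^{\lifted{\star}}, A^{\overline{\omega}}}$, because both liftings are built from the same weak left adjoint $\tuple{\alpha,\eta}$ and the canonical inverse images in each component. The conclusion of Theorem~\ref{thm:generalized_extension} then reads $A^{\lifted{\star}}\cdot A^{\overline{\omega}} \preceq A^{\overline{\omega}}$ for all $A \in \Ob{\TF}$, which is the claim.

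The main obstacle is the bookkeeping around the pullback category $\cat{R}$. We must confirm that a pair of cartesian arrows over the same base arrow is cartesian in $\TF \times_{\cat{Vect}_{\mathbb{Q}}} \mathbf{SF}$. Take a test object $\tuple{T_Z,S_Z}$ and $h$ with $\barr{\tuple{T_Z,S_Z}}{h\circ f}{\tuple{T,S}}$. Each component then factors through its own cartesian arrow via $h$, and faithfulness ensures the pair is an arrow. We also need the canonical choices in $\cat{R}$ to be taken componentwise, so that the lifted $\functor{G}$ really splits into the two separately lifted operators. If the canonical choice were made differently, we would only obtain $\preceq$-equivalent objects; the inequality is preserved under such equivalence because $\functor{F}_1$ and $\functor{F}_2$ are functors and thus respect $\preceq$.
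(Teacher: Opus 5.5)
Your proposal is correct and follows the paper's proof essentially verbatim. The paper also instantiates Theorem~\ref{thm:generalized_extension} with $\cat{C} = \TF$, $\cat{R} = \TF \times \mathbf{SF}$, $\cat{R}' = \mathbf{SF}$, identity $\functor{H}$ and $\widetilde{\functor{H}}$, $\functor{G}(F) = \tuple{F^\star, F^\omega}$, $\functor{F}_1(T,S) = T \cdot S$ and $\functor{F}_2(T,S) = S$; your extra steps (reading that product as the pullback $\TF \times_{\cat{Vect}_{\mathbb{Q}}} \mathbf{SF}$, and checking that $\functor{F}_1$ is concrete, $\cat{R}$ is fibred componentwise, $\functor{F}_2$ is cartesian, and $\lifted{\functor{G}}$ splits as $\tuple{F^{\lifted{\star}}, F^{\overline{\omega}}}$) supply exactly the details the paper's sketch leaves implicit.
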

\begin{proof}[Proof Sketch]
Using  \autoref{thm:generalized_extension}. Put $\cat{C} = \TF$, $\mathbf{R} = \mathbf{TF} \times \mathbf{SF}$, $\mathbf{R}' = \mathbf{SF}$, $\functor{H} = 1_{\mathbf{D}}$, and $\widetilde{\functor{H}} = 1_{\mathbf{TF}}$. Also, let $\functor{G}(F) = (F^{\star}, F^\omega)$ (with its lifting  $\lifted{\functor{G}}(F) = (F^{\lifted{\star}}, F^{\overline{\omega}})$),
$\functor{F}_1(T, S) = T \cdot S$
and $\functor{F}_2(T, S) = S$ (noting that $\functor{F}_2$ is cartesian).
\end{proof}

\section{Proofs} \label{sec:proofs}

\robustImpliesConcreteFunctor*

\begin{proof}
That $\proj{C} = \proj{D} \circ F$ follows immediately from the definition of robustness.  It remains to show that $F$ is a functor.
First, $F$ preserves identities: for any object $X$, $F(1_X) = 1_{F(X)}$.  This follows from the fact that $\proj{D}$ is faithful, and
$ \proj{D}(F(1_X)) = \proj{C}(1_X) = 1_{\proj{C}(X)} = 1_{\proj{D}(F(X))} = \proj{D}(1_{F(X)})\ . $
Also $F$ respects composition: for any compatible arrows $f$ and $g$, $F(f \circ g) = F(f) \circ F(g)$.  This follows from the fact that $\proj{D}$ is faithful, and
\[
\proj{D}(F(f \circ g)) = \proj{C}(f \circ g) = \proj{C}(f) \circ \proj{C}(g) = \proj{D}(F(f)) \circ \proj{D}(F(g)) = \proj{D}(F(f) \circ F(g))\ . \qedhere
\]
\end{proof}

\soundnessfromnattransformation*

\begin{proof}
  The fact that existence of a concrete natural transformation implies soundness is immediate.  For the other direction, for any object $A$, we define $\epsilon_A$ to be the unique arrow in $\mathbf{R}^{\sharp}(F^\natural(A),\gamma(F^\sharp(A)))$ such that
  $\proj{R^\natural}(\epsilon_A) = 1_{\proj{C}(A)}$.  The fact that such an arrow exists follows from soundness, while uniqueness follows from the fact that $\proj{R^\natural}$ is faithful.  Last we must show that $\epsilon$ is a natural transformation: that is, for any arrow $f \in \mathbf{C}(A,B)$, the following diagram commutes:
    \begin{center}
  \begin{tikzpicture}[thick]
    \matrix (m) [matrix of math nodes, row sep=2.5em, column sep=3em,
      text height=1.5ex, text depth=0.25ex] { F^\natural(A) & \gamma(F^\sharp(A))\\
      F^\natural(B)  & \gamma(F^\sharp(B))\\ };

    \draw (m-2-2) edge[<-] node[right]{$\gamma(F^\sharp(f))$} (m-1-2);
    \draw (m-1-1) edge[->] node[above]{$\epsilon_A$} (m-1-2);

    \draw (m-2-1) edge[<-] node[left]{$F^\natural(f)$} (m-1-1);
    \draw (m-2-1) edge[->] node[below]{$\epsilon_B$} (m-2-2);
  \end{tikzpicture}
  \end{center}
  This follows from the fact that the image of this diagram under $\proj{R^\natural}$ commutes (i.e., 
  $\proj{R^\natural}(\gamma(F^\sharp(f)) \circ \epsilon_A) = \proj{C}(f) = \proj{R^\natural}(\epsilon_B \circ F^\natural(f))$ and that $\proj{R^\natural}$ is faithful.
\end{proof}

\begin{corollary}
    Polyhedral guard analysis $(-)^{\star_{\text{PGA}}}$ is sound and robust. \label{cor:polyhedral-guard-analysis-sound-robust}
\end{corollary}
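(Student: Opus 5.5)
The plan is to obtain both properties from the general results on lifted operations. Robustness will follow from Theorem~\ref{thm:lift-robustness} and soundness from Theorem~\ref{thm:lift-soundness}, instantiated with $\mathbf{B} = \mathbf{Vect}_{\mathbb{Q}}$, $\mathbf{C} = \TF$, $\mathbf{D} = \cat{PolyCart}$, $\mathbf{R}^\sharp = \TF$, and $\mathbf{R}^\natural = \mathbf{TS}$ (viewed over $\mathbf{Vect}_{\mathbb{Q}}$ by restricting to linear simulations). The functor $\gamma_R : \TF \to \mathbf{TS}$ is the concretization. The paper has already noted that $\TF$ is fibred, with $\invimg{\sigma}{G} = G[\sigma,\sigma']$, and that $\gamma_R$ is cartesian. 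Since $\eta$ is the identity, the lifting $\eta_G^{-1}(\alpha(G)^\star)$ is syntactically $(\alpha(G)^\star)[\eta_G,\eta_G'] = G^{\star_{\text{PGA}}}$. Thus three obligations remain:
\begin{enumerate}
\item $(-)^\star : \cat{PolyCart} \to \TF$ is a concrete functor and is sound with respect to reflexive transitive closure;
\item $\alpha$ lands in $\cat{PolyCart}$;
\item $\tuple{\alpha,\eta}$ is a weak left adjoint to the inclusion $\gamma$.
\end{enumerate}

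For the functor and soundness claims, take a linear simulation $\sigma$ from $F$ to $F'$ in $\cat{PolyCart}$. It maps the identity disjunct $X'=X$ to $Y'=Y$, since $\sigma$ is applied to both the pre- and post-state. It maps $F$ into $F'$ by assumption. Hence $\sigma$ is a simulation $F^\star \to F'^\star$. For soundness, I will show that every $F = \textit{Pre}\land\textit{Post}$ in $\cat{PolyCart}$ is transitively closed. If $F(s,s')$ and $F(s',s'')$ hold, then $s$ satisfies the pre-part and $s''$ satisfies the post-part, so $F(s,s'')$ holds. Consequently $F^* \subseteq (X'=X) \lor F$. Obligation (2) is immediate, because $\textit{conv}$ yields conjunctions of inequalities over the unprimed variables and over the primed variables respectively.

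The main step is the universal property. First, the unit: $G \models \textit{Pre}(G)\land\textit{Post}(G)$ because $G$ entails $\exists X'.G$ and $\exists X.G$, and each of these entails its convex hull. This gives $\barr{G}{1}{\alpha(G)}$. Next, suppose $B \in \cat{PolyCart}$ over $Y$ has the form $\bigwedge s_i \ge a_i \land \bigwedge t_j' \ge b_j$, and $\sigma$ is a linear simulation from $G$ to $B$, so that $G \models B[\sigma,\sigma']$. I take $\overline{f} = \sigma$, which makes the triangle commute trivially since $\eta_G = 1$. It remains to show $\alpha(G) \models B[\sigma,\sigma']$. Because $B$ is cartesian, $B[\sigma,\sigma']$ splits into a conjunction of linear inequalities $s_i[\sigma] \ge a_i$ over $X$ and $t_j[\sigma]' \ge b_j$ over $X'$.

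Each conjunct over $X$ is entailed by $G$. Since it mentions only unprimed variables, it is also entailed by $\exists X'.G$. By the defining property of $\textit{conv}$, it is then entailed by $\textit{Pre}(G)$. The primed conjuncts are handled symmetrically via $\textit{Post}(G)$. This separation argument is the one place where care is needed: the point to verify is that linear entailment from $G$ of a constraint on one side transfers to the projection, and that $\textit{conv}$ captures all linear consequences, including when $G$ is unsatisfiable, where $\textit{conv}$ gives $\false$.

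Having verified these hypotheses, the corollary follows by invoking Theorems~\ref{thm:lift-robustness} and~\ref{thm:lift-soundness}.
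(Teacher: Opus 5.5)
Your proposal is correct and follows the paper's route. Like the paper, you show that $\tuple{\alpha,\eta}$ with $\eta$ the identity is a weak left adjoint to the inclusion $\cat{PolyCart} \to \TF$, by pushing the unprimed and primed conjuncts of $B[\sigma,\sigma']$ through $\exists X'$ (respectively $\exists X$) and then $\textit{conv}$, and you then invoke Theorem~\ref{thm:lift-robustness}. Your version is more complete: the paper's proof leaves implicit the unit $G \models \alpha(G)$, the functoriality and soundness of $(-)^\star$ on $\cat{PolyCart}$, and the appeal to Theorem~\ref{thm:lift-soundness} that gives the soundness half of the corollary.
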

\begin{proof}
We first recall the definition of polyhedral guard analysis as follows:
Define $\cat{PolyCart}$ to be the subcategory of $\TF$
consisting of polyhedral cartesian transition formulas, i.e., formulas of the form
$(\bigwedge_{i=1}^n s_i \geq a_i) \land (\bigwedge_{j=1}^m t_j \geq b_j)$ where each $s_i$ is a linear term over unprimed variables and each $t_i$ is a linear term over primed variables.
Observe that every polyhedral cartesian transition formula is transitively closed, so we can define a reflexive transitive closure operator $(-)^\star : \cat{PolyCart} \rightarrow \TF$ by defining $ F^\star \defeq X' = X \lor F $
for any $F \in \TF(X)$.
Let $\gamma : \cat{PolyCart} \rightarrow \TF$ be an inclusion functor. It remains to show that $\gamma$ has a weak left adjoint $\tuple{\alpha,\eta}$.

Define $\alpha(F) \defeq \textit{Pre}(F) \land \textit{Post}(F)$ and $\eta_F$ to be the identity.  To prove that $\tuple{\alpha,\eta}$ is weak left adjoint to $\gamma$, we must show that for any transition formula of form $G = G_{\textit{pre}} \land G_{\textit{post}}$ and any linear simulation
$s$ such that $\barr{F}{s}{\gamma(G)}$, there is a linear simulation $\overline{s}$ such that $\barr{\alpha(F)}{\overline{s}}{G}$ and $s = \overline{s} \circ \eta_F$.  Since $\eta_F$ is the identity, we must have $\overline{s} = s$, and so we need to show that $\barr{\alpha(F)}{s}{G}$---that is, $s$ is a simulation from $\alpha(F)$ to $G$, or equivalently $\alpha(F) \models G[s,s']$.  Since $s$ is linear, $H \defeq G[s,s']$ is \textit{also} in $\cat{PolyCart}$, taking the form $H_{\textit{pre}} \land H_{\textit{post}}$.  Since $F \models H$ by assumption, we have
$\exists X'.F \models \exists X'. H \equiv H_{\textit{pre}}$; since $H_{\textit{pre}}$ is convex and $\textit{Pre}(F)$ is the convex hull of $\exists X'.F$, we must have $\textit{Pre}(F) \models H_\textit{pre}$.  Similarly, we have
$\textit{Post}(F) \models H_{\textit{post}}$, and thus
$\alpha(F) \defeq \textit{Pre}(F) \land \textit{Post}(F) \models H_{\textit{pre}} \land H_{\textit{post}} = H = G[s,s'], $
and so $s$ is a simulation from $\alpha(F)$ to $G$.

Finally, observe that polyhedral guard analysis is precisely the iteration operator induced by $\cat{PolyCart}$:
$
F^{\star_{\text{PGA}}} \defeq X' = X \lor (\textit{Pre}(F) \land \textit{Post}(F)) = \alpha(F)^\star[\eta_F,\eta_F']\ .
$
It thus follows from Theorem~\ref{thm:lift-robustness} that $(-)^{\star_{\text{PGA}}}$ is robust.
\end{proof}

\begin{corollary}
    Linear recurrence analysis $(-)^{\star_{\text{LRA}}}$ is sound and robust. \label{cor:linear-rec-analysis-sound-robust}
\end{corollary}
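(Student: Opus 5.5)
The proof follows the template of Corollary~\ref{cor:polyhedral-guard-analysis-sound-robust}. We must show three things: that $(-)^\star : \mathbf{LT}_\bot \rightarrow \TF$ is a concrete functor; that it is sound, since it computes exact reflexive transitive closure; and, the main work, that the stated $\tuple{\alpha,\eta}$ is a weak left adjoint of the inclusion $\gamma$. Robustness then follows from Theorem~\ref{thm:lift-robustness}. Soundness follows from Theorem~\ref{thm:lift-soundness}, using that the concretization $\TF \rightarrow \mathbf{TS}$ is cartesian and that $\TF$ is fibred over $\mathbf{Vect}_{\mathbb{Q}}$. Finally we check syntactically that $G^{\star_{\text{LRA}}} = (\alpha(G)^\star)[\eta_G,\eta_G']$.

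\emph{The unit.} For $\barr{G}{\eta_G}{\alpha(G)}$, each conjunct of $\alpha(G)[\eta_G,\eta_G']$ reads $\hat t_i' \leq \hat t_i + b_i$. By linearity this is $t_i[\Delta_X \mapsto X'] - t_i[\Delta_X\mapsto X] \leq b_i$, i.e.\ $t_i$ evaluated at $X'-X$ is at most $b_i$. This is entailed by $G$, because $\Delta(G) \models \textit{conv}(\Delta(G))$.

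\emph{The universal property.} Take $B = \bigwedge_{z \in Z} z' \leq z + a_z$ over $Z$, and a linear $s = \set{z \mapsto u_z}$ with $G \models B[s,s']$, i.e.\ $G \models u_z(X') - u_z(X) \leq a_z$ for each $z$. By linearity of $u_z$ this says that $\Delta(G)$ entails the linear inequality $u_z(\Delta_X) \leq a_z$. Hence $\textit{conv}(\Delta(G))$ entails it. By Farkas' lemma (affine form), assuming $\textit{conv}(\Delta(G))$ is satisfiable, there are $\lambda_{z,i} \geq 0$ with $u_z = \sum_i \lambda_{z,i} t_i$ as linear terms and $\sum_i \lambda_{z,i} b_i \leq a_z$.

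Now set $\overline{s} \defeq \set{ z \mapsto \sum_i \lambda_{z,i} y_i }$. Then $\overline{s}\circ\eta_G = s$, because $\sum_i\lambda_{z,i}\hat t_i = u_z(X)$. Also $\alpha(G) \models \sum_i \lambda_{z,i}(y_i' - y_i) \leq \sum_i \lambda_{z,i} b_i \leq a_z$, obtained by nonnegative combination, so $\barr{\alpha(G)}{\overline{s}}{B}$.

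\emph{Degenerate cases.} If $B = \false$, then $G$ must be unsatisfiable. If $\Delta(G)$ is unsatisfiable, we define $\alpha(G) = \false$ with, say, $\eta_G$ the zero map into the empty variable set, and any factorization works, since $\false$ simulates into anything. Exact rational arithmetic also needs care: the closed convex hull may have strict-versus-nonstrict subtleties. These are absorbed because $\textit{conv}$ is defined as the strongest conjunction of entailed non-strict inequalities, so entailment of $u_z \le a_z$ transfers directly.

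\emph{Main obstacle.} I expect the crux to be the Farkas step. It must produce nonnegative multipliers expressing $u_z$ \emph{exactly} as a combination of the $t_i$; this needs the affine Farkas lemma for a nonempty polyhedron, together with the separate handling of the empty case. Functoriality and soundness of $(-)^\star$ on $\mathbf{LT}_\bot$ are routine. If $\sigma$ simulates $\bigwedge x'\le x+a_x$ into $\bigwedge z' \le z + c_z$, then by the same Farkas argument each $\sigma(z)$ is a nonnegative combination of the $x$ with matching constant bound, and scaling by $k$ preserves this.
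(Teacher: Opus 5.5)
Your argument for satisfiable $G$ is the paper's proof: the same $\alpha$ and $\eta$, the same Farkas step, and the same appeal to Theorem~\ref{thm:lift-robustness}. You are in fact more careful than the paper, because you make explicit the constant condition $\sum_i \lambda_{z,i} b_i \leq a_z$. The paper leaves that condition implicit, yet it is exactly what makes $\overline{s}$ a simulation from $\alpha(G)$ to $B$.

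\textbf{The degenerate case fails as written.} If $G$ is unsatisfiable, then every linear $s : \mathbb{Q}^X \rightarrow \mathbb{Q}^Z$ satisfies $\barr{G}{s}{B}$ for every object $B$ of $\mathbf{LT}_\bot$. The universal property then requires some $\overline{s}$ with $\overline{s} \circ \eta_G = s$. You chose $\eta_G$ to be the zero map into $\mathbb{Q}^{\emptyset}$, so every composite $\overline{s} \circ \eta_G$ is the zero map. Hence the factorization fails for every nonzero $s$, and your $\tuple{\alpha,\eta}$ is not a weak left adjoint. The phrase ``any factorization works'' conflates two conditions. The simulation condition is indeed vacuous, since $\false$ simulates into anything, but the commutation condition pins down $\overline{s}$ and is not vacuous. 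The fix is the paper's choice: take $\alpha(G)$ to be $\false$ over the same variables $X$ and $\eta_G = 1_{\mathbb{Q}^X}$, so that $\overline{s} = s$ works. In the satisfiable case this problem does not arise, because Farkas' lemma forces each $u_z$ into the cone spanned by the $t_i$.

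\textbf{A minor inaccuracy.} $(-)^\star$ on $\mathbf{LT}_\bot$ is not the exact reflexive transitive closure. Its $k = 0$ instance gives $\bigwedge_{x} x' \leq x$ rather than $\bigwedge_x x' = x$; for example, with $a_x = -1$ it admits $x' = x - \tfrac{1}{2}$. It is only an over-approximation, but that is all Theorem~\ref{thm:lift-soundness} needs, so your soundness argument still goes through.
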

\begin{proof}
We first recall the definition of linear recurrence analysis.
Define $\mathbf{LT}_\bot$ (lossy translations) to be the subcategory of $\mathbf{TF}$ whose objects are formulas of the form $\bigwedge_{x \in X} x' \leq x + a_x$ or $\false$. Let $\gamma : \mathbf{LT}_\bot \rightarrow \mathbf{TF}$ be the inclusion functor.
Computing the reflexive transitive closure of a formula in $\mathbf{LT}_\bot$ is straightforward, and gives rise to a concrete functor $(-)^\star : \mathbf{LT}_\bot \rightarrow \mathbf{TF}$, where for a formula $F \in \mathbf{LT}_\bot$ over variables $X$:
\begin{align*}
\false^\star &\defeq \bigwedge_{x \in X} x' = x \\
\left(\bigwedge_{x \in X} x' \leq x + a_x\right)^\star &\defeq \exists k \in \mathbb{N}. k \geq 0 \land \bigwedge_{x \in X} x' \leq x + k a_x
\end{align*}

We must show that any transition formula has a best abstraction in $\mathbf{LT}_\bot$---i.e., the concretization functor $\gamma$ has a weak left adjoint $\tuple{\alpha,\eta}$. Recall that linear recurrence analysis works by extracting a set of linear recurrence inequations from a loop body formula $F$ that is satisfiable, by computing the convex hull of the formula
$\Delta(F) \defeq \exists X,X'. F \land \bigwedge_{x \in X} \delta_x = x' - x$, which takes the form
$\bigwedge_{i=1}^n t_i \leq a_i$, where each $t_i$ is a linear term over $\Delta_X \defeq \set{ \delta_x : x \in X}$, and each $a_i$ is a rational number. Then we may observe that the rational numbers $a_1,\dots,a_n$ define an $\cat{LT}_\bot$ formula $\alpha(F) \defeq \bigwedge_{i=1}^n y_i' \leq y_i + a_i$ (over a fresh set of variables $Y = \set{y_1, \dots, y_n}$), and the terms $t_1,\dots,t_n$ define a linear map $\eta_F : \mathbb{Q}^X \rightarrow \mathbb{Q}^Y$---represented in ``transposed form'' as a linear substitutiom---maps each $y_i$ to $t_i[\delta_X \mapsto X]$.
For an unsatisfiable formula $F$ over $X$, we define $\alpha(F) \defeq \false$ (considered as a transition formula over $X$) and $\eta_F$ is the identity substitution.

Now we must argue that $\tuple{\alpha,\eta}$ is in fact weakly left-adjoint to $\gamma$.
Suppose that $U = \bigwedge_{i=1}^m z_i' \leq z_i + c_i$ is a formula in $\mathbf{LT}_\bot$ and that $s$ is a linear simulation from a satisfiable formula $F$ to $\gamma(U) = U$.
We can think of $s$ as a substitution $[z_i \mapsto s_i]_{i =1}^m$ for linear terms $s_1,\dots,s_m$ over the $X$ variables, and the fact that $s$ is a simulation implies
$F \models \bigwedge_{i=1}^m s_i' \leq s_i + c_i$ and thus
\[\Delta(F) \models s_i[X \mapsto \Delta_X] \leq c_i\ .\] Since $\bigwedge_{i=1}^n t_i \leq a_i$ is the convex hull of
$\Delta(F)$, for each $i$ we have
non-negative $\lambda_{i,1},\dots,\lambda_{i,n}$ such that
$s_i = \lambda_{i,1}t_1 + \dots + \lambda_{i,n}t_n$ by Farkas' lemma.
Define a linear substitution $\overline{s}$ by
$[z_i \mapsto \lambda_{i,1}y_1 + \dots + \lambda_{i,n}y_n]_{i=1}^m$, and observe that
$\overline{s}$ is a linear simulation from $\alpha(F)$ to $U$ and that
$\overline{s} \circ \eta = [z_i \mapsto (\lambda_{i,1}y_1 + \dots + \lambda_{i,n}y_n)[\eta_F]]_{i=1}^m = [z_i \mapsto s_i]_{i=1}^m = s$.
The same argument vacuously holds for unsatisfiable $F$ such that $\alpha(F) = \false$; thus, $\tuple{\alpha,\eta}$ is a weakly left-adjoint to $\gamma$.

Finally, observe that linear recurrence analysis is precisely the iteration operator induced by $\cat{LT}_\bot$:
$F^{\star_{\text{LRA}}} = \alpha(F)^\star[\eta_F,\eta_F']$.
It thus follows from Theorem~\ref{thm:lift-robustness} that $(-)^{\star_{\text{LRA}}}$ is robust.
\end{proof}

\wlainducedfromproduct*
\begin{proof}
For the purpose of this proof, we adopt the standard notion of weak left adjoints (which is equivalent by Lemma~\ref{lem:alternate-left-adjoint}).
For the category $\cat{D} \times \cat{E}$, its objects and arrows are tuples of objects and arrows in $\cat{D}$ and $\cat{E}$.
Let $\langle\alpha_1, \epsilon_1\rangle$ be the weak left adjoint of $\gamma_1$, and $\langle\alpha_2, \epsilon_2\rangle$ be the weak left adjoint of $\gamma_2$.
We define a functor $\gamma: \cat{D} \times \cat{E} \rightarrow \cat{C}$ which maps objects $\tuple{d, e} \to \gamma_1(d) \times \gamma_2(e)$ and arrows $\tuple{f_D, f_E} \to \gamma_1(f_D) \times \gamma_2(f_E)$, which is well-defined since $\cat{C}$ has products.

We now construct its weak left adjoint $\langle\alpha, \epsilon\rangle$.
For any object $A \in \text{Ob}(\cat{C})$, we define its abstraction $\alpha(A) \triangleq \langle \alpha_1(A), \alpha_2(A) \rangle$.
The arrow $\epsilon_A \in \cat{C}(A, \gamma(\alpha(A)))$ is the unique arrow to the product induced by the component arrows $\epsilon_{A,1}$ and $\epsilon_{A,2}$.
We must show that for any object $B = \langle B_D, B_E \rangle \in \text{Ob}(\cat{D} \times \cat{E})$, and any arrow $f \in \cat{C}(A, \gamma(B))$, there exists an arrow $\overline{f} \in (\cat{D} \times \cat{E})(\alpha(A), B)$ such that $f = \gamma(\overline{f}) \circ \epsilon_A$.

The argument is visualized by the following diagram:
\begin{center}
\begin{tikzcd}[row sep=3.5em, column sep=3em]
    & A \arrow[bend left=15]{ddl}[near end]{p_1 \circ f} \arrow[bend right=15]{ddr}[swap, near end]{p_2 \circ f} \arrow{dl}[swap]{\epsilon_{A,1}} \arrow{dr}{\epsilon_{A, 2}} \arrow{d}[near end]{\epsilon_A} & \\
\gamma_1(\alpha_1(A)) \arrow[dashed]{d}[swap]{\gamma_1(\overline{f_1})} & \gamma(\alpha(A)) \arrow{l}[swap]{\pi_1} \arrow{r}{\pi_2} \arrow[dashed]{d}{\gamma(\overline{f})} & \gamma_2(\alpha_2(A)) \arrow[dashed]{d}{\gamma_2(\overline{f_2})} \\
\gamma_1(B_D) & \gamma(B) \arrow{l}[swap]{p_1} \arrow{r}{p_2} & \gamma_2(B_E)
\end{tikzcd}
\end{center}

We consider how to construct $\overline{f} = \langle \overline{f_1}, \overline{f_2} \rangle$.
Consider the arrow $f_1 = p_1 \circ f : A \to \gamma_1(B_D)$. Since $\langle \alpha_1, \epsilon_1 \rangle$ is a weak left adjoint to $\gamma_1$, there exists an arrow $\overline{f_1} \in \cat{D}(\alpha_1(A), B_D)$ such that $f_1 = \gamma_1(\overline{f_1}) \circ \epsilon_{A,1}$. This ensures the left-hand part of the diagram commutes.
Similarly, for the arrow $f_2 = p_2 \circ f : A \to \gamma_2(B_E)$, there exists an arrow $\overline{f_2} \in \cat{E}(\alpha_2(A), B_E)$ such that $f_2 = \gamma_2(\overline{f_2}) \circ \epsilon_{A,2}$.
By the universal property of products in $\cat{C}$, the arrow $\gamma(\overline{f}) = \gamma_1(\overline{f_1}) \times \gamma_2(\overline{f_2})$ is the unique arrow making the outer diagram commute. We have shown it commutes on both components when composed with the projections, therefore $f = \gamma(\overline{f}) \circ \epsilon_A$.
\end{proof}

\subsection*{Definition of Weak Left Adjoints}

In the following Lemma~\ref{lem:alternate-left-adjoint}, we show that our notion of weak left adjoints defined in Section~\ref{sec:best-abstraction-weak-left-adj}
is equivalent to the standard definition.

\begin{lemma} \label{lem:alternate-left-adjoint}
Let $\cat{C}$ and $\cat{D}$ be concrete categories over a common base category $\cat{B}$, with faithful functors $\phi_{\cat{C}}:\cat{C} \to \cat{B}$ and $\phi_{\cat{D}}:\cat{D} \to \cat{B}$. Let $\gamma: \cat{D} \to \cat{C}$ be a concrete functor. The following two notions of weak left adjoints are equivalent:
\begin{enumerate}
    \item (Standard notion of weak left adjoints) There is a weak left adjoint $(\alpha, \epsilon)$ to $\gamma$, where $\alpha: \Ob{\cat{C}} \to \Ob{\cat{D}}$ is a function and for each $A \in \Ob{\cat{C}}$, $\epsilon_A: A \to \gamma(\alpha(A))$ is an arrow in $\cat{C}$ such that for any object $K \in \Ob{\cat{D}}$ and any arrow $f: A \to \gamma(K)$ in $\cat{C}$, there exists an arrow $\overline{f}: \alpha(A) \to K$ in $\cat{D}$ for which the diagram
    \[
    \begin{tikzcd}
    A \arrow{r}{\epsilon_A} \arrow{dr}[swap]{f} & \gamma(\alpha(A)) \arrow[dashed]{d}{\gamma(\overline{f})} \\
    & \gamma(K)
    \end{tikzcd}
    \]
    commutes.
    \item (Our definition) There is a pair $(\alpha, \eta)$ where $\alpha: \Ob{\cat{C}} \to \Ob{\cat{D}}$ is a function and for each $A \in \Ob{\cat{C}}$, $\eta_A: \phi_{\cat{C}}(A) \to \phi_{\cat{D}}(\alpha(A))$ is an arrow in $\cat{B}$ such that $\barr{A}{\eta_A}{ \gamma(\alpha(A))}$, and for any object $K \in \Ob{\cat{D}}$ and any arrow $f: \phi_{\cat{C}}(A) \to \phi_{\cat{C}}(\gamma(K))$ in $\cat{B}$ with $\barr{A}{f}{\gamma(K)}$, there exists an arrow $\overline{f}: \phi_{\cat{D}}(\alpha(A)) \to \phi_{\cat{D}}(K)$ in $\cat{B}$ such that $\barr{\alpha(A)}{\overline{f}}{K}$ and $f = \overline{f} \circ \eta_A$. The following is the diagram in $\cat{B}$.
    \[
    \begin{tikzcd}
    \proj{C}(A) \arrow{r}{\eta_A} \arrow{dr}[swap]{f} & \proj{C}(\gamma(\alpha(A)))=\proj{D}(\alpha(A)) \arrow[dashed]{d}{\overline{f}} \\
    & \proj{C}(\gamma(K)) = \proj{D}(K)
    \end{tikzcd}
    \]
\end{enumerate}
\end{lemma}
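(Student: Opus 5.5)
We prove the two directions separately. Both rest on a single dictionary between arrows of $\cat{C}$ and arrows of $\cat{B}$. Because $\proj{C}$ is faithful, an arrow $u \in \cat{C}(A,A')$ is uniquely determined by $\proj{C}(u)$. Moreover, $\barr{A}{g}{A'}$ holds exactly when $g$ lies in the image of $\cat{C}(A,A')$ under $\proj{C}$. Concreteness of $\gamma$ gives $\proj{C}(\gamma(\alpha(A))) = \proj{D}(\alpha(A))$ and $\proj{C}(\gamma(K)) = \proj{D}(K)$, so both diagrams make sense. It also gives $\proj{C}(\gamma(\overline{f})) = \proj{D}(\overline{f})$ for every $\cat{D}$-arrow $\overline{f}$.

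\emph{From (1) to (2).} Given $(\alpha,\epsilon)$, set $\eta_A \defeq \proj{C}(\epsilon_A)$; then $\barr{A}{\eta_A}{\gamma(\alpha(A))}$ holds by definition. Now let $f$ be a $\cat{B}$-arrow with $\barr{A}{f}{\gamma(K)}$, and let $\hat f \in \cat{C}(A,\gamma(K))$ be the arrow with $\proj{C}(\hat f) = f$. Property (1) supplies $\hat g \in \cat{D}(\alpha(A),K)$ with $\hat f = \gamma(\hat g) \circ \epsilon_A$. We take $\overline{f} \defeq \proj{D}(\hat g)$, so that $\barr{\alpha(A)}{\overline{f}}{K}$. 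Applying $\proj{C}$ and using concreteness of $\gamma$ gives $f = \proj{D}(\hat g) \circ \eta_A = \overline{f} \circ \eta_A$.

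\emph{From (2) to (1).} Given $(\alpha,\eta)$, let $\epsilon_A$ be the unique $\cat{C}$-arrow $A \to \gamma(\alpha(A))$ lying over $\eta_A$; it exists because $\barr{A}{\eta_A}{\gamma(\alpha(A))}$. For $f \in \cat{C}(A,\gamma(K))$ we have $\barr{A}{\proj{C}(f)}{\gamma(K)}$. Property (2) then gives a $\cat{B}$-arrow $g$ with $\barr{\alpha(A)}{g}{K}$ and $\proj{C}(f) = g \circ \eta_A$. Let $\overline{f} \in \cat{D}(\alpha(A),K)$ be the arrow over $g$. Then
\[
\proj{C}(\gamma(\overline{f}) \circ \epsilon_A) = \proj{D}(\overline{f}) \circ \eta_A = g \circ \eta_A = \proj{C}(f),
\]
and faithfulness of $\proj{C}$ yields $f = \gamma(\overline{f}) \circ \epsilon_A$.

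The only delicate step is this last use of faithfulness in the direction (2) to (1). There, commutativity of the diagram is known only after projecting to $\cat{B}$, and faithfulness is what lifts it back to an equation in $\cat{C}$. For this we must check that both composites are parallel arrows with the same domain $A$ and codomain $\gamma(K)$, which they are. Everything else is routine bookkeeping with $\proj{C}\circ\gamma = \proj{D}$.
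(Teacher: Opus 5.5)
Your proof is correct and follows essentially the same route as the paper's. In both directions you transport $\epsilon_A$/$\eta_A$ and the factorizing arrow along $\proj{C}$ and $\proj{D}$ using $\proj{C}\circ\gamma = \proj{D}$, and in the direction (2)$\Rightarrow$(1) you use faithfulness of $\proj{C}$ to lift the projected equation back to $\cat{C}$.
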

\begin{proof}
We first prove that (1) implies (2). Assume $(\alpha, \epsilon)$ is a weak left adjoint to $\gamma$. For each $A \in \Ob{\cat{C}}$, define $\eta_A \defeq \phi_{\cat{C}}(\epsilon_A)$. Thus $\barr{A}{\eta_A}{\gamma(\alpha(A))}$.
Now, consider an object $K \in \Ob{\cat{D}}$ and an arrow $f: \phi_{\cat{C}}(A) \to \phi_{\cat{C}}(\gamma(K))$ in $\cat{B}$ such that $\barr{A} {f} {\gamma(K)}$. Since $\phi_{\cat{C}}$ is faithful, there is a unique arrow $f': A \to \gamma(K)$ in $\cat{C}$ such that $\phi_{\cat{C}}(f') = f$. By the definition of a weak left adjoint, there exists an arrow $\overline{f}': \alpha(A) \to K$ in $\cat{D}$ such that $f' = \gamma(\overline{f}') \circ \epsilon_A$.
Let $\overline{f} \defeq \phi_{\cat{D}}(\overline{f}')$. Since $\overline{f}'$ is an arrow in $\cat{D}$, we have $\barr{\alpha(A)}{\overline{f}}{K}$. Applying the functor $\phi_{\cat{C}}$ to the equation $f' = \gamma(\overline{f}') \circ \epsilon_A$, we get:
\[ \phi_{\cat{C}}(f') = \phi_{\cat{C}}(\gamma(\overline{f}') \circ \epsilon_A) = \phi_{\cat{C}}(\gamma(\overline{f}')) \circ \phi_{\cat{C}}(\epsilon_A) \]
Since $\gamma$ is a concrete functor, $\phi_{\cat{C}} \circ \gamma = \phi_{\cat{D}}$. Thus, $\phi_{\cat{C}}(\gamma(\overline{f}')) = \phi_{\cat{D}}(\overline{f}') = \overline{f}$.
Substituting back, we have $f = \overline{f} \circ \eta_A$. This shows that (2) holds.

We then show that (2) implies (1). Assume the pair $(\alpha, \eta)$ satisfies (2). For each $A \in \Ob{\cat{C}}$, since $\barr{A}{\eta_A} {\gamma(\alpha(A))}$, there exists a unique arrow $\epsilon_A: A \to \gamma(\alpha(A))$ in $\cat{C}$ such that $\phi_{\cat{C}}(\epsilon_A) = \eta_A$. We will show that $(\alpha, \epsilon)$ is a weak left adjoint to $\gamma$.
Consider an object $K \in \Ob{\cat{D}}$ and an arrow $f': A \to \gamma(K)$ in $\cat{C}$. 
Let $f = \phi_{\cat{C}}(f')$, thus by definition $\barr{A} {f} {\gamma(K)}$. According to (2), there exists an arrow $\overline{f}: \phi_{\cat{D}}(\alpha(A)) \to \phi_{\cat{D}}(K)$ in $\cat{B}$ such that $\barr{\alpha(A)} {\overline{f}} {K}$ and $f = \overline{f} \circ \eta_A$.
Since $\barr{\alpha(A)} {\overline{f}} {K}$, there is a unique arrow $\overline{f}': \alpha(A) \to K$ in $\cat{D}$ with $\phi_{\cat{D}}(\overline{f}') = \overline{f}$.
We need to show $f' = \gamma(\overline{f}') \circ \epsilon_A$, by applying the faithful functor $\phi_{\cat{C}}$:
\begin{align*}
    \phi_{\cat{C}}(\gamma(\overline{f}') \circ \epsilon_A) &= \phi_{\cat{C}}(\gamma(\overline{f}')) \circ \phi_{\cat{C}}(\epsilon_A) \\
    &= \phi_{\cat{D}}(\overline{f}') \circ \eta_A \\
    &= \overline{f} \circ \eta_A = f = \phi_{\cat{C}}(f')
\end{align*}
Thus $f' = \gamma(\overline{f}') \circ \epsilon_A$, which means $(\alpha, \epsilon)$ is a weak left adjoint to $\gamma$.
\end{proof}

\subsection*{Robustness and Soundness of Lifted Operations}

The main text contains proofs of Theorems~\ref{thm:lift-robustness} and~\ref{thm:lift-soundness}; below we give slightly more spelled-out versions that may be easier to follow.

Recall Theorem~\ref{thm:lift-robustness}:
\liftRobustness*
\begin{proof}
We must show that $\overline{F}$ is a concrete functor from $\mathbf{C}$ to $\mathbf{R}$.
Observe that we have $\overline{F}(A) = \eta_A^{-1}(F(\alpha(A)))$ by definition, and so $\bcarr{\overline{F}(A)}{\eta_A}{F(\alpha(A))}$.  It follows that $\proj{R}(\overline{F}(A)) = \dom(\eta_A) = \proj{C}(A)$.
It remains to show that for any arrow $f \in \mathbf{B}(\proj{C}(A),\proj{C}(B))$ such that
$\barr{A}{f}{B}$, we have $\barr{\overline{F}(A)}{f}{\overline{F}(B)}$.

  Since $\barr{A}{\eta_B}{\gamma(\alpha(B))}$ and $\tuple{\alpha,\eta}$ is a weak left-adjoint of $\gamma$, there is some
  $ \overline{f}$ in $\cat{B}$ that goes from $\proj{D}(\alpha(A))$ to $\proj{D}(\alpha(B)))$
  such that $\barr{\alpha(A)}{\overline{f}}{\alpha(B)}$ and the following diagram commutes:

 \begin{center}
  \begin{tikzpicture}[thick]
    \matrix (m) [matrix of math nodes, row sep=2.5em, column sep=3em,
      text height=1.5ex, text depth=0.25ex] {
      \proj{C}(A) & \proj{D}(\alpha(A))\\
    \proj{C}(B)  & \proj{D}(\alpha(B))\\ };
    \draw (m-2-2) edge[<-,dashed] node[right]{$\overline{f}$} (m-1-2);
    \draw (m-1-1) edge[->] node[above]{$\eta_A$} (m-1-2);

    \draw (m-2-1) edge[<-] node[left]{$f$} (m-1-1);
    \draw (m-2-1) edge[->] node[below]{$\eta_B$} (m-2-2);
  \end{tikzpicture}
  \end{center}

  Since $F$ is a concrete functor, we have
  $\barr{F(\alpha(A))}{\overline{f}}{F(\alpha(B))}$
  and thus $\barr{\overline{F}(A)}{\overline{f} \circ \eta_A}{F(\alpha(B))}$, and since
  $\overline{f} \circ \eta_A = \eta_B \circ f$ we have $\barr{\overline{F}(A)}{\eta_B \circ f}{F(\alpha(B))}$.
  Observing that $\bcarr{f^{-1}(\overline{F}(B))}{f}{\bcarr{\overline{F}(B)}{\eta_B}{F(\alpha(B))}}$, we have
  $\bcarr{f^{-1}(\overline{F}(B))}{\eta_B \circ f}{F(\alpha(B))}$.
  We have thus exhibited two arrows pointing to $F(\alpha(B))$, with the latter one cartesian.
  It follows from the universal property of cartesian arrows ($\dagger$) that
  $\barr{\overline{F}(A)}{1}{f^{-1}(\overline{F}(B))}$.  We thus have
  $\barr{\overline{F}(A)}{1}{\barr{f^{-1}(\overline{F}(B))}{f}{\overline{F}(B)}}$, and so
  $\barr{\overline{F}(A)}{f}{\overline{F}(B)}$.
\end{proof}

We now prove Theorem~\ref{thm:lift-soundness}.
\begin{theorem*}[Soundness of lifted operations]
    Suppose that we have the following (not commuting) diagram of concrete categories over $\mathbf{B}$:

 \begin{minipage}{0.2\textwidth}
  \begin{tikzpicture}[thick]
    \matrix (m) [matrix of math nodes, row sep=2.5em, column sep=3em,
      text height=1.5ex, text depth=0.25ex] { \mathbf{D} & \mathbf{R}^\sharp \\
      \mathbf{C}  & \mathbf{R}^\natural \\ };
    \draw (m-2-2) edge[<-] node[right]{$\gamma_R$} (m-1-2);
    \draw (m-1-1) edge[->] node[above]{$F^\sharp$} (m-1-2);
    \draw (m-1-1) edge[->] node[left]{$\gamma_D$} (m-2-1);
    \draw (m-2-1) edge[->] node[below]{$F^\natural$} (m-2-2);
    \draw (m-2-1) edge[->,dashed] node[above left]{$\overline{F}^\sharp$} (m-1-2);
  \end{tikzpicture}
  \end{minipage}
  \begin{minipage}{0.75\textwidth}
  where:
  \begin{itemize}
      \item $\mathbf{C}$, $\mathbf{D}$, $\mathbf{R}^\natural$, and $\mathbf{R}^\sharp$ are concrete categories over $\mathbf{B}$;
      \item $\gamma_D: \mathbf{D} \to \mathbf{C}$ is a concrete functor that has a weak left adjoint $\langle \alpha, \eta \rangle$;
      \item $F^\natural : \mathbf{C} \rightarrow \mathbf{R}^\natural$, $F^\sharp : \mathbf{D} \rightarrow \mathbf{R}^\sharp$; and $\gamma_R : \mathbf{R}^\sharp \rightarrow \mathbf{R}^\natural$ are concrete functors;
      \item $F^\sharp$ is sound with respect to $F^\natural \circ \gamma_D$.
  \end{itemize}
  \end{minipage}

   Supposing that $\mathbf{R}^\sharp$ is fibred (so that the lifted operation $\overline{F}^\sharp : \mathbf{C} \rightarrow \mathbf{R}^\sharp$ is well-defined) and $\gamma_R$ is cartesian, then $\overline{F}^\sharp$ is sound with respect to $F^\natural$.
\end{theorem*}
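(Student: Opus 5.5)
We fix an arbitrary object $A$ of $\mathbf{C}$ and aim to show $F^\natural(A) \preceq \gamma_R(\overline{F}^\sharp(A))$. The idea is to produce two arrows into the common target $\gamma_R(F^\sharp(\alpha(A)))$, both lying over the same base arrow $\eta_A$, where the second arrow is cartesian. The special case ($\dagger$) of the cartesian universal property then turns this pair into the desired approximation.

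\emph{First arrow.} The weak left adjoint gives $\barr{A}{\eta_A}{\gamma_D(\alpha(A))}$. Since $F^\natural$ is a concrete functor, we obtain $\barr{F^\natural(A)}{\eta_A}{F^\natural(\gamma_D(\alpha(A)))}$. Soundness of $F^\sharp$ with respect to $F^\natural\circ\gamma_D$, applied at the object $\alpha(A)$, gives $F^\natural(\gamma_D(\alpha(A))) \preceq \gamma_R(F^\sharp(\alpha(A)))$, that is, an arrow lying over the identity. Composing the two and using $1\circ\eta_A=\eta_A$ yields $\barr{F^\natural(A)}{\eta_A}{\gamma_R(F^\sharp(\alpha(A)))}$.

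\emph{Second arrow.} By Definition~\ref{def:lifting}, $\overline{F}^\sharp(A)=\eta_A^{-1}(F^\sharp(\alpha(A)))$. The canonical choice of cartesian arrow therefore gives $\bcarr{\overline{F}^\sharp(A)}{\eta_A}{F^\sharp(\alpha(A))}$. Because $\gamma_R$ is a cartesian functor, it sends this arrow to a cartesian arrow. Since $\gamma_R$ is also concrete, the image arrow still lies over $\eta_A$, so $\bcarr{\gamma_R(\overline{F}^\sharp(A))}{\eta_A}{\gamma_R(F^\sharp(\alpha(A)))}$.

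Applying ($\dagger$) in $\mathbf{R}^\natural$ to these two arrows gives $F^\natural(A)\preceq\gamma_R(\overline{F}^\sharp(A))$. For this to be a statement within a single fiber, both objects must lie over $\proj{C}(A)$. This holds because both $\proj{R^\natural}(F^\natural(A))$ and $\proj{R^\natural}(\gamma_R(\overline{F}^\sharp(A)))$ equal the domain of $\eta_A$, namely $\proj{C}(A)$, by concreteness of the functors involved.

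The only step that needs care is transporting cartesianness through $\gamma_R$. We need the image arrow to be cartesian \emph{for $\eta_A$ and the object $\gamma_R(F^\sharp(\alpha(A)))$}. This follows from $\proj{R^\natural}\circ\gamma_R=\proj{R^\sharp}$ together with the hypothesis that $\gamma_R$ preserves cartesian arrows. Everything else is a routine composition of $\Vdash$-relations using the Identity and Composition properties.
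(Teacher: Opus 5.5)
Your proposal is correct and matches the paper's proof step for step. You build an arrow over $\eta_A$ from $F^\natural(A)$ into $\gamma_R(F^\sharp(\alpha(A)))$ by using concreteness of $F^\natural$ and soundness at $\alpha(A)$. You obtain the cartesian arrow over $\eta_A$ by applying the cartesian functor $\gamma_R$ to the arrow defining the lifting, and ($\dagger$) finishes the argument. Your explicit checks that the image arrow is cartesian for $\eta_A$ and that both objects lie over $\proj{C}(A)$ are details the paper leaves implicit.
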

\begin{proof}
    For any object $A$ of $\mathbf{C}$, we must show that $F^\natural(A) \preceq \gamma_R(\overline{F}^\sharp(A))$.

Since $F^\natural$ is a concrete functor, the arrow $\eta_A : \phi_C(A) \to \phi_C(\gamma_D(\alpha(A)))$ from the weak left adjoint gives us the relation $\barr{F^\natural(A)}{\eta_A}{F^\natural(\gamma_D(\alpha(A)))}$.
We have as a condition that $F^\sharp$ is sound with respect to $F^\natural \circ \gamma_D$. Applied to the object $\alpha(A) \in Ob(\mathbf{D})$, this gives us $F^\natural(\gamma_D(\alpha(A))) \preceq \gamma_R(F^\sharp(\alpha(A)))$, which is equivalent to the relation $\barr{F^\natural(\gamma_D(\alpha(A)))}{1}{\gamma_R(F^\sharp(\alpha(A)))}$.
Combining these, we have $\barr{F^\natural(A)}{\eta_A}{\barr{F^\natural(\gamma_D(\alpha(A)))} {1}{ \gamma_R(F^\sharp(\alpha(A)))}}$, and so by composition, we get $\barr{F^\natural(A)}{\eta_A}{\gamma_R(F^\sharp(\alpha(A)))}$.

By definition of the lifted operator, we have $\overline{F}^\sharp(A) = \eta_A^{-1}(F^\sharp(\alpha(A)))$, which means there exists a cartesian arrow $\bcarr{\overline{F}^\sharp(A)}{\eta_A}{F^\sharp(\alpha(A))}$. Since $\gamma_R$ is a cartesian functor, it preserves this arrow, giving us a cartesian arrow $\bcarr{\gamma_R(\overline{F}^\sharp(A))}{\eta_A}{\gamma_R(F^\sharp(\alpha(A)))}$.

We have now established two relations over the same base arrow $\eta_A$: (1) $\barr{F^\natural(A)}{\eta_A}{\gamma_R(F^\sharp(\alpha(A)))}$;
(2) A cartesian arrow $\bcarr{\gamma_R(\overline{F}^\sharp(A))}{\eta_A}{\gamma_R(F^\sharp(\alpha(A)))}$.
By the universal property of cartesian arrows (\dag), there must exist an arrow $\barr{F^\natural(A)}{1}{\gamma_R(\overline{F}^\sharp(A))}$. Equivalently, $F^\natural(A) \preceq \gamma_R(\overline{F}^\sharp(A))$.
\end{proof}

\subsection*{Proof of the Extension Theorem}

Recall the extension theorem for lifting algebraic laws on the operators:
\generalizedExtension*

\begin{proof}[Proof of Theorem~\ref{thm:generalized_extension}]

Let $A \in \Ob{C}$.   We must show that
$F_1(\lifted{\functor{G}}(\widetilde{\functor{H}}(A))) \preceq
\lifted{\functor{G}}(A)$.

Let $A_H \defeq \widetilde{\functor{H}}(A)$.
Since $\barr{A}{\eta_A}{\gamma(\alpha(A))}$ and $\widetilde{\functor{H}}$ is robust, we have
\[A_H = \barr{\widetilde{\functor{H}}(A)}{\eta_A}{\widetilde{\functor{H}}(\gamma(\alpha(A)))}=\gamma(H(\alpha(A)))\] (where the second equation follows from the assumption $\widetilde{\functor{H}}\gamma = \gamma H$).  It follows from the universal property of weak left adjoints that there is some $\lifted{\eta_A}$ such that $\barr{\alpha(A_H)}{\lifted{\eta_A}}{H(\alpha(A))}$ and the following diagram commutes:
\[
\begin{tikzpicture}[thick]
 \matrix (m) [matrix of math nodes, row sep=3em, column sep=3em, text height=1.5ex, text depth=0.25ex]
 {
  & \proj{D}(\functor{H}(\alpha(A))) \\
  \proj{C}(A_H) &  \proj{D}(\alpha(A_H))\\
  % & \\
 };
 \draw (m-2-1) edge[->] node[above left] {$\eta_A$} (m-1-2);
 \draw (m-2-1) edge[->] node[below] {$\eta_{A_H}$} (m-2-2);
 \draw (m-2-2) edge[->,dashed] node[right] {$\lifted{\eta_A}$} (m-1-2);
\end{tikzpicture}
\]

The argument for why $F_1(\lifted{\functor{G}}(\widetilde{\functor{H}}(A))) \preceq
\lifted{\functor{G}}(A)$ (or equivalently,
$\barr{F_1(\lifted{\functor{G}}(\widetilde{\functor{H}}(A)))}{1}{\lifted{\functor{G}}(A)}$) is expressed in the following diagram:
\begin{center}
\begin{tikzpicture}[thick]
  \matrix (m) [matrix of math nodes, row sep=2.5em, column sep=4em, text height=1.5ex, text depth=0.25ex]
  {
    \functor{F}_1(\lifted{\functor{G}}(A_H)) & \functor{F}_1(\functor{G}(\alpha(A_H))) & \functor{F}_1(\functor{G}(\functor{H}(\alpha(A)))) & \functor{F}_2(\functor{G}(\alpha(A))) \\
    & & & \functor{F}_2(\lifted{\functor{G}}(A)) \\
  };
  \draw (m-1-1) edge[{||[sep=1pt]}->] node[above]{$\eta_{A_H}$} (m-1-2);
  \draw (m-1-2) edge[{||[sep=1pt]}->] node[above]{$\overline{\eta_A}$} (m-1-3);
  \draw (m-1-3) edge[{||[sep=1pt]}->] node[above]{$1$} (m-1-4);
  \draw (m-2-4) edge[{||[sep=1pt]}->>] node[right]{$\eta_A$} (m-1-4);
  \draw (m-1-1) edge[{||[sep=1pt]}->,dashed,bend right=5] node[below left]{$1$} (m-2-4);
\end{tikzpicture}
\end{center}
The idea is by that by composing the propositions of the top portion of the diagram
we have
$\barr{F_1(\overline{G}(\widetilde{H}(A)))}{\overline{\eta_A}\circ \eta_{A_H}}{F_2(G(\alpha(A)))}$, so by ($\dagger$) the universal property of the cartesian arrow
$\bcarr{F_2(\overline{G}(A))}{\eta_A}{F_2(G(\alpha(A)))}$ and the fact that $\overline{\eta_A}\circ \eta_{A_H} = \eta_A$, we have
the desired result (dashed arrow)
\[F_1(\overline{G}(\widetilde{H}(A))) = \barr{F_1(\overline{G}(A_H))}{1}{F_2(\overline{G}(A))}\]
We argue for each component of this diagram in turn:
\begin{itemize}
\item $\barr{\functor{F}_1(\lifted{\functor{G}}(A_H))}{\eta_{A_H}}{\functor{F}_1(\functor{G}(\alpha(A_H)))}$: By definition of $\lifted{\functor{G}}$ we have
$\lifted{\functor{G}}(A_H) = \eta_{A_H}^{-1}(G(\alpha(A_H))$ and thus $\bcarr{\lifted{\functor{G}}(A_H)}{\eta_{A_H}}{G(\alpha(A_H))}$.  The result follows because $F_1$ is a concrete functor.
\item $\barr{\functor{F}_1(\functor{G}(\alpha(A_H)))}{\overline{\eta_A}}{\functor{F}_1(\functor{G}(H(\alpha(A))))}$:
we have $\barr{\alpha(A_H)}{\lifted{\eta_A}}{H(\alpha(A))}$ by the construction of $\lifted{\eta_A}$.  The result follows because $F_1$ and $G$ are concrete functors.
\item $\barr{\functor{F}_1(\functor{G}(H(\alpha(A))))}{1}{\functor{F}_2(\functor{G}(\alpha(A)))}$: by the assumption
$F_1GH \preceq F_2G$.
\item $\bcarr{\functor{F}_2(\lifted{\functor{G}}(A))}{\eta_A}{\functor{F}_2(\functor{G}(\alpha(A)))}$:  By definition of $\overline{G}$, we have
$\lifted{\functor{G}}(A) = \eta_A^{-1}(G(\alpha(A))$ and thus $\bcarr{\lifted{\functor{G}}(A)}{\eta_A}{G(\alpha(A))}$; the result follows since $F_2$ is a cartesian functor. \qedhere
\end{itemize}
\end{proof}

\subsection*{Proof of Theorem~\ref{thm:admissibility}}

The essential argument is that if two vertices are incomparable in the sense that neither appears in a local cycle of the other (and therefore can appear in either order in an admissible order), then either way the vertices are eliminated yields identical results:
\begin{lemma} \label{lem:commute-elim}
For any reducible flow graph $G$ and vertices $u$ and $v$, if neither $u \in L_G(v)$ nor $v \in L_G(u)$,  then
$G/u,v = G/v,u$.
\end{lemma}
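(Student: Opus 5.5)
The proof compares the edge weights of $G/u,v$ and $G/v,u$ directly, splitting by which edges exist between $u$ and $v$. Write $w$ for $w_G$. In both graphs, $u$ and $v$ end up with no outgoing edges. So it suffices to compare, for every source $p$ and target $s$, the weight $w(p,s)$ after both eliminations.

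The first step uses reducibility to show that $u$ and $v$ cannot lie on a common cycle through both edges $\tuple{u,v}$ and $\tuple{v,u}$. Suppose both edges are present. Then $u\,v\,u$ is a cycle, so by reducibility one of them dominates the other. Say $u$ dominates $v$. Then $u \to v \to u$ is a local cycle of $u$, which gives $v \in L_G(u)$ and contradicts the hypothesis. Hence at most one of $w(u,v)$, $w(v,u)$ is nonzero. By symmetry, assume $w(v,u) = 0$. I also need to check that eliminating $u$ does not create a new edge $v \to u$. Such an edge would have to come from a path $v \to u$ through $u$, which requires $w(v,u) \neq 0$, so this cannot happen. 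Eliminating $v$ first does not create such an edge either, since that would require $w(v,v)$ to lead to $u$.

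The second step is the computation. With $w(v,u)=0$, the self-loop weights stay as they were: $w(u,u)$ is untouched by eliminating $v$, because any new contribution would pass through $v \to u$. Likewise $w(v,v)$ is untouched by eliminating $u$, because any new contribution would pass through $v\to u$. Now expand both sides for a generic pair $p,s \notin \{u,v\}$:
\[ w(p,s) + w(p,u)w(u,u)^*w(u,s) + \bigl(w(p,v) + w(p,u)w(u,u)^*w(u,v)\bigr) w(v,v)^* w(v,s) \]
for the order $u,v$. For the order $v,u$, eliminating $v$ first leaves $w(u,s)$ modified to $w(u,s)+w(u,v)w(v,v)^*w(v,s)$ and leaves $w(u,u)$ unchanged. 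Eliminating $u$ then produces the same expression, using distributivity in the idempotent semiring. The remaining cases are those where $s\in\{u,v\}$: these are the edges into eliminated vertices, of weights $w(p,u)w(u,u)^*$ and $(w(p,v)+w(p,u)w(u,u)^*w(u,v))w(v,v)^*$. They are handled by the same computation restricted to the relevant terms. Edges out of $u$ and out of $v$ are $0$ in both graphs.

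The main obstacle is the case analysis in the first step. Besides using reducibility, it must confirm that no intermediate self-loop or back-edge weight changes. Only the semiring laws (associativity, distributivity, and the annihilating $0$) are needed; none of the star laws enter. This matters because the algebra is only a Pre-Kleene algebra, and the argument must not rely on identities such as $(ab)^*a = a(ba)^*$. Keeping the two-edge case excluded is exactly what allows the proof to avoid those identities.
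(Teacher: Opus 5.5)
Your proof is correct and follows essentially the same route as the paper. Reducibility rules out having both edges $\tuple{u,v}$ and $\tuple{v,u}$, you take $w_G(v,u)=0$ without loss of generality, and you expand both elimination orders using only the semiring laws. In the expansion you also use, without stating it, that $w_{G/v}(p,u)=w_G(p,u)$ for $p\neq v$ and $w_{G/u}(v,s)=w_G(v,s)$ for $s\neq u$; the paper states these explicitly, and both follow from $w_G(v,u)=0$ exactly as your self-loop observations do.
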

\begin{proof}
Observe that we must have either $w_G(u,v) = 0$ or $w_G(v,u) = 0$.  If both are non-zero then $\tuple{u,v}\tuple{v,u}$ is a cycle, and since $G$ is reducible, either $u$ dominates $v$ and $v \in L_G(u)$ or $v$ dominates $u$ and $u \in L_G(v)$, a contradiction. Without loss of generality, suppose that $w_G(v,u) = 0$.
Observe that for any vertex $z \neq u$, we have
$w_{G/u}(v,z) = w_{G}(v,z) + w_G(v,u)w_G(u,u)^*w_G(u,z) = w_G(v,z)$.
Similarly, for any $z \neq v$, we have
$w_{G/v}(z,u) = w_G(z,u) + w_G(z,v)w_G(v,v)^*w_G(v,u) = w_G(z,u)$.

Let $z$ and $z'$ be vertices.  Suppose that $z'$ is neither $u$ nor $v$---the other cases are similar.  Then we have
\begin{align*}
 w_{G/u/v}(z,z') &= w_{G/u}(z,z') + w_{G/u}(z,v)w_{G/u}(v,v)^*w_{G/u}(v,z')\\
 &= w_{G/u}(z,z') + w_{G/u}(z,v)w_G(v,v)^*w_G(v,z')\\
 &= \left(\begin{array}{ll}&w_G(z,z')\\ + & w_G(z,u)w_G(u,u)^*w_G(u,z')\\ +&
 w_G(z,v)w_G(v,v)^*w_G(v,z')\\
 + & w_G(z,u)w_G(u,u)^*w_G(u,v)w_G(v,v)^*w_G(v,z')
\end{array}\right)
 \end{align*}
By a similar calculation for $w_{G/v/u}(z,z')$, we arrive at the same result and thus $w_{G/u/v}(z,z') = w_{G/v/u}(z,z')$.  Since this holds for all $z$ and $z'$, we have $G/u/v = G/v/u$.
\end{proof}

We may now prove Theorem~\ref{thm:admissibility}. Recall:
\admissibility*
\begin{proof}
By induction on $n$.  The base case is trivial.  For the induction step, suppose that it is true for all admissible sequences of length $n-1$, and show that it is true for length $n$.

Let $i$ be such that $u_i = v_1$.  By repeated application of Lemma~\ref{lem:commute-elim}, we have that
$G/u_1,\dots,u_n = G/v_1,u_1,\dots,u_{i-1},u_{i+1},\dots,u_n$.  By the induction hypothesis, we have
\[ (G/v_1)/v_2,\dots,v_n = (G/v_1)/u_1,\dots,u_{i-1},u_{i+1},\dots,u_n \]
and thus $G/v_1,\dots,v_n = G/u_1,\dots,u_n$.
\end{proof}

\subsection*{Proof of Theorem~\ref{thm:rpa}}

Recall Theorem~\ref{thm:rpa}:
\rpa*

Recall that 
$\sem{H}(h(v))$ is the weight of the edge from $r_H$ to $h(v)$ in $H/u_1,\dots,u_n$ (for some admissible sequence $u_1,\dots,u_n$).  Our high-level strategy to prove Theorem~\ref{thm:rpa} is by induction on the sequence $u_1,\dots,u_n$.  For each $i$,
let $\vec{v}_i$ denote an admissible enumeration of $f^{-1}(u_i)$.
We construct an enumeration of $V_{G}$ by concatenating the sequences $\vec{v_1},\dots,\vec{v_n}$, and show that for each $i$,
$\tuple{h,f}$ is a loop-preserving stuttering simulation from
$G/\vec{v_1},\dots,\vec{v_i}$ to $H/u_1,\dots,u_i$.
From this we may conclude that
$\tuple{h,f}$ is a stuttering simulation from $G/\vec{v_1},\dots,\vec{v_n}$ to $H/u_1,\dots,u_n$, from which Theorem~\ref{thm:rpa} easily follows.

This intuitive argument is not quite correct as stated---such ``lock step'' elimination of vertices does not in fact preserve loop-preserving stuttering simulations.  An example appears in Figure~\ref{fig:ex-multi-flow-graph}.  Letting $h \defeq \set{ A \mapsto X, B \mapsto Y, C \mapsto X}$, we can observe that $\tuple{h,1}$ is loop-preserving stuttering simulation from $G$ to $H$.  However, it fails to be a stuttering simulation from $G/B$ to $H/Y$, because the edge from $A$ to $C$ is not stuttering, nor is it simulated by the self-loop on $X$.  The essential problem is that eliminating $B$ collapses a non-stuttering path $A\xrightarrow{a}B\xrightarrow{b}C$ with a stuttering edge $A\xrightarrow{1}C$.  To side-step this problem, we introduce \textit{multi-flow graphs}, which allow for parallel edges.

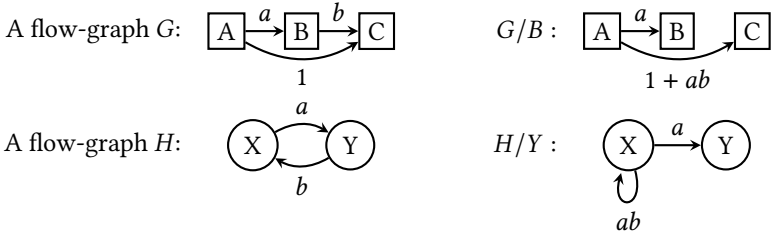
\begin{figure}
\begin{tikzpicture}[>=stealth,thick]
  \node [rectangle,draw](A) {A};
  \node [rectangle,draw,right of=A](B) {B};
  \node [rectangle,draw,right of=B](C) {C};
  \draw (A) edge[->] node[above]{$a$} (B);
  \draw (B) edge[->] node[above]{$b$} (C);
  \draw (A) edge[->,bend right] node[below]{$1$} (C);

  \node [circle,draw,below of=A,node distance=1.5cm,xshift=10pt](X) {X};
  \node [circle,draw,below of=C,node distance=1.5cm,xshift=-10pt](Y) {Y};
  \draw (X) edge[->,bend left] node[above]{$a$} (Y);
  \draw (Y) edge[->,bend left] node[below]{$b$} (X);

  \node [right of=C, node distance=3cm,rectangle,draw] (A') {A};
  \node [rectangle,draw,right of=A'](B') {B};
  \node [rectangle,draw,right of=B'](C') {C};
 \draw (A') edge[->] node[above]{$a$} (B');
  \draw (A') edge[->,bend right] node[below]{$1+ab$} (C');

    \node [circle,draw,below of=A',node distance=1.5cm,xshift=10pt](X') {X};
  \node [circle,draw,below of=C',node distance=1.5cm,xshift=-10pt](Y') {Y};
  \draw (X') edge[->] node[above]{$a$} (Y');
  \draw (X') edge[->,loop below] node[below]{$ab$} (X');

  \node [left of=A,anchor=east,xshift=15pt] (G1) {A flow-graph $G$:};
\node [left of=X,anchor=east,xshift=5pt] (H1) {A flow-graph $H$:};
  \node [left of=A',anchor=east,xshift=15pt] (G2) {$G/B:$};
\node [left of=X',anchor=east,xshift=5pt] (H2) {$H/Y:$};
\end{tikzpicture}
\caption{An example illustrating that stuttering simulations are not preserved by vertex elimination. \label{fig:ex-multi-flow-graph}}
\end{figure}

\begin{definition}
  Let $A$ be a regular algebra.  A \textbf{multi-flow graph} $\mathcal{G}$ over $A$ consists of a finite set of vertices $V_{\mathcal{G}}$, a finite set of weighted directed edges $E_{\mathcal{G}} \subseteq V_{\mathcal{G}} \times A \times V_{\mathcal{G}}$, and a designated root vertex $r_{\mathcal{G}} \in V_{\mathcal{G}}$.
\end{definition}

For a multi-flow graph $\mathcal{G}$ and vertices $u,v \in V_{\mathcal{G}}$, we use $w_{\mathcal{G}}(u,v)$ to denote the sum of the weights on all edges with source $u$ and destination $v$.  For a path $\pi = \tuple{v_0,a_0,v_1}\tuple{v_1,a_1,v_2}\dots\tuple{v_n,a_n,v_{n+1}}$ in a multi-flow graph, we use
$W(\pi)$ to denote the sequential composition of the weights along $\pi$: $W(\pi) \defeq a_0 \cdot a_1 \cdot \dots \cdot a_n$

A flow graph $G$ can be associated with a multi-flow graph $\mathcal{G}$ with the same root and vertices, and where $E_{\mathcal{G}} = \set{ \tuple{u, w_G(u,v), v} : \tuple{u,v} \in G}$.  In the reverse direction, a multi-flow graph $\mathcal{G}$ can be associated with a flow graph $G$ with the same root and vertices, and where $E_G = \set{ \tuple{u,v} : \exists a. \tuple{u,a,v} \in E_{\mathcal{G}}}$
and $w_G(u,v) = w_{\mathcal{G}}(u,v)$.  Thus one may think of a multi-flow graph as a particular representation of a flow graph.  We extend definitions from flow graphs to multi-flow graphs (paths, dominance, reducibility, vertex elimination, stuttering simulations) in the natural way.  In particular, for a vertex $u$ is in a multi-flow graph $\mathcal{H}$, $\mathcal{G}/u$ is the multi-flow graph with the same vertices and root as $\mathcal{G}$, but with
\[ E_{\mathcal{G}/u} = \begin{array}{l@{}l}&\set{ \tuple{z,a,z'} \in E_{\mathcal{G}} : z \neq u }\\ \cup & \set{ \tuple{p,a \cdot w_{\mathcal{G}}(u,u)^* \cdot b,s} : \tuple{p,a,u},\tuple{u,b,s} \in E_{\mathcal{G}}}\\
\cup & \set{ \tuple{p,a \cdot w_{\mathcal{G}}(u,u)^*, u} : \tuple{p,a,u} \in E_{\mathcal{G}} }
\end{array}\]
Observe multi-flow graphs avoid the issue pointed out in Figure~\ref{fig:ex-multi-flow-graph}: if $G$ is considered as a multi-flow graph, then $G/B$ has two parallel edges $A \xrightarrow{1}C$ and $A \xrightarrow{ab} C$; the first edge is stuttering, while the second is simulated by the self-loop on $X$ in $H/Y$.

To prove Theorem~\ref{thm:rpa}, we will need several technical results that relate a multi-flow graph $\mathcal{G}$ to the multi-flow graph $\mathcal{G}/v$ that results from eliminating a vertex.  The following definition relates the paths of these two graphs:
\begin{definition}
Let $\mathcal{G}$ be a multi-flow graph, and let $v \in V_{\mathcal{G}}$.  For paths
$\pi$ in $\mathcal{G}$ and $\pi' \in \mathcal{G}/v$, we say that $\pi'$ is a \textbf{$v$-contraction} of $\pi$ if
\begin{enumerate}
\item $\pi = \pi'$, or
\item $\pi = \tuple{p,a,v}\tuple{v,b_1,v}\dots\tuple{v,b_n,v}\tuple{v,c,s}$ and $\pi' = \tuple{p,a\cdot w_{\mathcal{H}}(v,v)^* \cdot c,s}$, or
\item We can decompose $\pi$ as $\pi_1\pi_2$, and $\pi'$ as $\pi_1'\pi_2'$ such that $\pi_1'$ is a $v$-contraction of $\pi_1$ and $\pi_2'$ is a $v$-contraction of $\pi_2$.
\end{enumerate}
We say that $\pi'$ is a \textbf{simple $v$-contraction} of $\pi$ if it contains no self-loops of $v$.  
\end{definition}
Observe that every path in $\mathcal{G}/v$ is a $v$-contraction of at least one path in $\mathcal{G}$, and every path in $\mathcal{G}$ has a $v$-contraction in $\mathcal{G}/v$.

The following lemmas formalize the sense in which eliminating a vertex preserves various features of multi-flow graph:
dominance (Lemma~\ref{lem:dominance-preservation}), reducibility
(Lemma~\ref{lem:reducible-preservation}), loop structure (Lemma~\ref{lem:loop-invariance}).

\begin{lemma}[Dominance preservation] \label{lem:dominance-preservation}
  Let $\mathcal{G}$ be a multi-flow graph and $v \in V_{\mathcal{G}}$ be a vertex.  For any vertices $u,u' \in V_{\mathcal{G}}$, if $u$ dominates $u'$ in $\mathcal{G}/v$, then $u$ dominates $u'$ in $\mathcal{G}$.  If $u$ dominates $u'$ in $\mathcal{G}$ and $u \neq v$, then $u$ dominates $u'$ in $\mathcal{G}/v$.
\end{lemma}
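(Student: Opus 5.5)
Both directions will be proved by path arguments, using the two facts just observed: every path in $\mathcal{G}/v$ is a $v$-contraction of some path in $\mathcal{G}$, and every path in $\mathcal{G}$ has a $v$-contraction in $\mathcal{G}/v$. The roots of $\mathcal{G}$ and $\mathcal{G}/v$ coincide, so "path from the root" has the same meaning in both graphs. The one structural fact needed is that a $v$-contraction $\pi'$ of $\pi$ visits exactly the vertices of $\pi$, except possibly for some occurrences of $v$. Case (2) of the definition deletes only intermediate occurrences of $v$, and cases (1) and (3) delete nothing else. So the endpoints are kept and every vertex other than $v$ appearing in $\pi$ appears in $\pi'$. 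Conversely, every vertex of $\pi'$ appears in $\pi$.

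For the first claim, suppose $u$ dominates $u'$ in $\mathcal{G}/v$, and let $\pi$ be a path in $\mathcal{G}$ from $r$ to $u'$. Take a $v$-contraction $\pi'$ of $\pi$ in $\mathcal{G}/v$. It is a path from $r$ to $u'$, so it passes through $u$. Every vertex of $\pi'$ occurs in $\pi$, so $\pi$ passes through $u$, and hence $u$ dominates $u'$ in $\mathcal{G}$.

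For the second claim, suppose $u$ dominates $u'$ in $\mathcal{G}$ and $u \neq v$, and let $\pi'$ be a path in $\mathcal{G}/v$ from $r$ to $u'$. Choose a path $\pi$ in $\mathcal{G}$ of which $\pi'$ is a $v$-contraction; it also runs from $r$ to $u'$, so it passes through $u$. Since $u \neq v$, the occurrence of $u$ survives contraction, so $\pi'$ passes through $u$.

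The main obstacle is the detail of the contraction correspondence, in particular edges into $v$ in $\mathcal{G}/v$ of the form $\tuple{p, a\cdot w(v,v)^*, v}$. The elimination definition produces these, but the contraction definition does not explicitly cover them. I would handle this by extending case (2) to paths that end at $v$ after some self-loops, and then checking by induction on the decomposition in case (3) that contractions preserve endpoints and delete only intermediate occurrences of $v$. The edge case $u' = v$ in the second claim also needs a check. If $\pi'$ ends at $v$ via such an edge, the corresponding $\pi$ ends at $v$ as well and passes through $u$ at a position where the vertex is not $v$, so $u$ survives. Degenerate cases with $u = r$, or $u = u'$ (where we take "passes through" to include endpoints), hold trivially.
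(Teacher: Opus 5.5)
Your proposal is correct and is essentially the paper's proof: both directions move a root-to-$u'$ path across the $v$-contraction correspondence (the paper phrases each direction as a contradiction, while you argue directly). Your extension of case (2) to cover the edges $\tuple{p, a\cdot w(v,v)^*, v}$ patches a small omission in the paper's contraction definition. Like the paper, you tacitly need $v \neq r_{\mathcal{G}}$: if the root is eliminated, a path leaving it has no contraction, non-root vertices can become unreachable in $\mathcal{G}/r_{\mathcal{G}}$ and hence vacuously dominated, and the first claim can fail. This is harmless, because only non-root vertices are ever eliminated.
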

\begin{proof}
  First, suppose that $u$ dominates $u'$ in $\mathcal{G}/v$.   For a contradiction, suppose that $u$ does not dominate $u'$ in $\mathcal{G}$.  Then there is a path $\pi$ from $r_{\mathcal{G}}$ to $u'$ in $\mathcal{G}$ that does not pass through $u$.  But then the $v$-contraction of $\pi$ is a path from $r_{\mathcal{G}}$ to $u'$ in $\mathcal{G}/u$ that does not pass through $u$, contradicting the assumption that $u$ dominates $u'$.
  
  Now suppose that $u$ dominates $u'$ in $\mathcal{G}$ and $u \neq v$.  For a contradiction, suppose that $u$ does not dominate $u'$ in $\mathcal{G}/v$.  Then there is a path $\pi$ from $r_{\mathcal{G}}$ to $u'$ in $\mathcal{G}/v$ that does not pass through $u$. Let $\tau$ be a path in $\mathcal{G}$ such that $\pi$ is a $v$-contraction of $\tau$.  Then $\tau$ is a path from $r_{\mathcal{G}}$ to $u'$ in $\mathcal{G}$ that does not pass through $u$, contradicting the assumption that $u$ dominates $u'$ in $\mathcal{G}$.
\end{proof}

\begin{lemma}
  \label{lem:reducible-preservation}
  Let $\mathcal{G}$ be a reducible multi-flow graph and $v \in V_{\mathcal{G}}$ be a vertex with $|L_{\mathcal{G}}(v)| \leq 1$. Then $\mathcal{G}/v$ is reducible.
\end{lemma}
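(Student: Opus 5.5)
The plan is to argue directly from the definition of reducibility: take an arbitrary cycle in $\mathcal{G}/v$, expand it into a cycle of $\mathcal{G}$, use reducibility of $\mathcal{G}$ to get a dominating vertex there, and transfer that dominance back with Lemma~\ref{lem:dominance-preservation}. The hypothesis $|L_{\mathcal{G}}(v)| \leq 1$ is what rules out the one bad case, where the dominator is $v$ itself.

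First, note that $v$ cannot lie on any cycle of $\mathcal{G}/v$. By the definition of $E_{\mathcal{G}/v}$, every edge with source $v$ is removed, and no edge with source $v$ is added. So let $C$ be a cycle in $\mathcal{G}/v$; all of its vertices differ from $v$. Each edge of $C$ is one of two kinds:
\begin{itemize}
\item an edge $\tuple{z,a,z'}$ already present in $E_{\mathcal{G}}$;
\item a contracted edge $\tuple{p,a\cdot w_{\mathcal{G}}(v,v)^*\cdot b,s}$ arising from edges $\tuple{p,a,v},\tuple{v,b,s} \in E_{\mathcal{G}}$, with $p,s \neq v$.
\end{itemize}
Replace each contracted edge by the two-edge path $\tuple{p,a,v}\tuple{v,b,s}$. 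This gives a cycle $C'$ in $\mathcal{G}$ whose vertex set is that of $C$, possibly together with $v$. In other words, $C$ is a $v$-contraction of $C'$.

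Since $\mathcal{G}$ is reducible, some vertex $x$ on $C'$ dominates every vertex of $C'$ in $\mathcal{G}$. We split into two cases.
\begin{itemize}
\item If $x \neq v$, then $x$ lies on $C$. By the second half of Lemma~\ref{lem:dominance-preservation} (which needs exactly $x \neq v$), $x$ dominates every vertex of $C$ in $\mathcal{G}/v$, as required.
\item If $x = v$, then $v$ lies on $C'$. Rotate $C'$ so that it starts and ends at $v$. It is then a path from $v$ to $v$ in which $v$ dominates every vertex, that is, a local cycle of $v$. Hence every vertex of $C'$ belongs to $L_{\mathcal{G}}(v)$. But $C$ is a cycle, so it has at least one vertex, and that vertex differs from $v$. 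Therefore $L_{\mathcal{G}}(v)$ contains $v$ together with some other vertex, contradicting $|L_{\mathcal{G}}(v)| \leq 1$. So this case cannot occur.
\end{itemize}

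I expect the only delicate point to be the bookkeeping in the expansion step. One must check that $C'$ is genuinely a cycle of $\mathcal{G}$: multi-edges are handled correctly because each contracted edge carries its own witness pair of edges through $v$. The other thing to check is that the dominance statements are used in the right direction, namely that dominance in $\mathcal{G}$ transfers to $\mathcal{G}/v$ only for dominators other than $v$. Vertices unreachable from the root cause no problem, since dominance over them holds vacuously in both graphs.
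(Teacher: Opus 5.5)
Your proof is correct and follows the paper's argument. Both proofs expand the cycle of $\mathcal{G}/v$ into a cycle of $\mathcal{G}$ of which it is a $v$-contraction, take the dominator that reducibility provides, rule out that dominator being $v$ using $|L_{\mathcal{G}}(v)| \leq 1$, and transfer dominance back to $\mathcal{G}/v$ with Lemma~\ref{lem:dominance-preservation}. Your version is slightly more careful than the paper's: the paper cites the wrong lemma at the final step (it cites the lemma being proved), and it leaves implicit the fact, which you state, that $v$ has no outgoing edges in $\mathcal{G}/v$ and so cannot lie on the cycle.
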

\begin{proof}
  Let $\pi$ be a cycle of $\mathcal{G}/v$; we must show that some vertex in $\pi$ dominates all others.  Since $\pi$ is a cycle of $\mathcal{G}/v$, it 
  is the contraction of some cycle $\tau$ of $\mathcal{G}$.  Since $\mathcal{G}$ is reducible, there is some vertex $\ell$ along $\tau$ that dominates all others.  If $\ell = v$, then either $v$ is the \textit{only} vertex that appears in $\tau$ (contradicting the fact $\pi$ is a contraction of $\tau$) or $|L_{\mathcal{G}}(v)| \geq 2$ (contradicting the assumption $|L_{\mathcal{G}}(v)| \leq 1$).  Thus $\ell$ must not be $v$, and so $\ell$ also appears in $\pi$, and by Lemma~\ref{lem:reducible-preservation}, $\ell$ must dominate all other vertices on $\pi$.
\end{proof}

\begin{lemma} \label{lem:loop-invariance}
  Let $\mathcal{G}$ be a reducible multi-flow graph, and let $\ell$ and $v$ be distinct vertices.  Then $L_{\mathcal{G}/v}(\ell) = L_{\mathcal{G}}(\ell) \setminus \set{v}$.
\end{lemma}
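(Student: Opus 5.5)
We prove the two inclusions separately, working with local cycles and passing between $\mathcal{G}$ and $\mathcal{G}/v$ via $v$-contractions. Throughout we use the convention that the empty path at $\ell$ is a local cycle, so $\ell \in L(\ell)$ in both graphs. Two facts from the construction of $\mathcal{G}/v$ are used repeatedly. First, $v$ has no outgoing edges in $\mathcal{G}/v$, so $v$ lies on no cycle of $\mathcal{G}/v$; hence $v \notin L_{\mathcal{G}/v}(\ell)$, which accounts for the ``$\setminus\set{v}$''. Second, since $\ell \neq v$, contraction preserves the endpoints of a path at $\ell$: every cycle at $\ell$ in $\mathcal{G}$ contracts to a cycle at $\ell$ in $\mathcal{G}/v$, and every cycle at $\ell$ in $\mathcal{G}/v$ is the contraction of a cycle at $\ell$ in $\mathcal{G}$. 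The two ingredients are Lemma~\ref{lem:dominance-preservation} and the reducibility of $\mathcal{G}$.

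\emph{($\supseteq$).} Let $u \in L_{\mathcal{G}}(\ell)$ with $u \neq v$, witnessed by a local cycle $\tau$ of $\ell$. Take a $v$-contraction $\pi$ of $\tau$ in $\mathcal{G}/v$. Then $\pi$ is a cycle at $\ell$, and its vertices are those of $\tau$ with interior occurrences of $v$ removed, so $u$ still lies on $\pi$. Every vertex on $\pi$ is dominated by $\ell$ in $\mathcal{G}$, and since $\ell \neq v$, the second half of Lemma~\ref{lem:dominance-preservation} shows that $\ell$ still dominates each of them in $\mathcal{G}/v$. So $\pi$ is a local cycle of $\ell$ in $\mathcal{G}/v$, and $u \in L_{\mathcal{G}/v}(\ell)$.

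\emph{($\subseteq$).} Let $u \in L_{\mathcal{G}/v}(\ell)$, witnessed by a local cycle $\pi$ of $\ell$ in $\mathcal{G}/v$; as noted, $u \neq v$. Choose a cycle $\tau$ at $\ell$ in $\mathcal{G}$ such that $\pi$ is a $v$-contraction of $\tau$. The vertices of $\tau$ are those of $\pi$, possibly together with $v$. By the first half of Lemma~\ref{lem:dominance-preservation}, $\ell$ dominates every vertex of $\pi$ in $\mathcal{G}$. So $\tau$ is a local cycle of $\ell$ in $\mathcal{G}$, and $u \in L_{\mathcal{G}}(\ell)$, once we know the following: \emph{if $v$ occurs on $\tau$, then $\ell$ dominates $v$ in $\mathcal{G}$.}

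This claim is the main obstacle. My first attempt uses reducibility. The cycle $\tau$ has a header $x$ that dominates all vertices of $\tau$.
\begin{itemize}
\item If $x \neq v$, then $x$ lies on $\pi$, so $\ell$ dominates $x$ and $x$ dominates $\ell$. By antisymmetry of dominance $x = \ell$, and hence $\ell$ dominates $v$.
\item The remaining case is $x = v$, i.e.\ $v$ dominates $\ell$ while $\ell$ does not dominate $v$.
\end{itemize}
The second case is where I expect real difficulty. The graph with edges $r \to v$, $v \to \ell$, $\ell \to y$, $y \to v$ is reducible. In it, $y \in L_{\mathcal{G}/v}(\ell)$ (via the cycle $\ell \to y \to \ell$), but $y \notin L_{\mathcal{G}}(\ell)$, because the only cycle through $y$ and $\ell$ passes through $v$, which $\ell$ does not dominate. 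So this case cannot be closed from the stated hypotheses alone.

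My plan is therefore to state and use the fact that, at every point where the lemma is applied (elimination along an admissible order, in the proof of Theorem~\ref{thm:rpa}), the eliminated vertex $v$ never strictly dominates a remaining vertex $\ell$ that shares a cycle with it. In that situation $\ell \notin L_{\mathcal{G}}(v) \setminus \set{v}$, so the second case is vacuous. I would record this explicitly as the condition under which the lemma is invoked, because the equality as written does not hold in the second case, as the example above shows.
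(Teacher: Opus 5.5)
Your diagnosis is correct, and your counterexample is valid. The graph $r\to v\to \ell\to y\to v$ is reducible, since its only cycle is headed by $v$. In $\mathcal{G}/v$ we have $y\in L_{\mathcal{G}/v}(\ell)$ via $\ell\to y\to\ell$, while $y\notin L_{\mathcal{G}}(\ell)$, so the lemma is false as stated. The paper's proof takes the same route as yours for both inclusions. In the $\subseteq$ direction, it disposes of the case where the header of $\tau$ is $v$ by appealing to $|L_{\mathcal{G}}(v)|\le 1$: if $v$ were the header, rotating $\tau$ would give a local cycle of $v$ through $\ell\neq v$, so $\set{v,\ell}\subseteq L_{\mathcal{G}}(v)$. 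That hypothesis is not in the statement. It is the hypothesis of Lemma~\ref{lem:reducible-preservation} and of Lemma~\ref{lem:elimination-step}, and it is exactly what your example violates, since there $L_{\mathcal{G}}(v)=\set{v,\ell,y}$. Your antisymmetry argument for the case $x\neq v$ spells out what the paper compresses into ``it follows that $\ell$ dominates all vertices along $\tau$''. Your $\supseteq$ direction is identical to the paper's.

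The one thing to correct is the side condition you propose. The condition that ``the eliminated vertex $v$ never strictly dominates a remaining vertex $\ell$ that shares a cycle with it'' is stronger than your argument needs, and it fails along admissible orders. Take edges $r\to x$, $x\to v$, $v\to\ell$, $\ell\to x$. Then $L(x)=\set{x,v,\ell}$ and $L(v)=L(\ell)=\emptyset$, so the order $v,\ell,x$ is admissible. Yet at the first step $v$ strictly dominates $\ell$ and shares the cycle $x\,v\,\ell\,x$ with it.

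What you actually use is the consequence you derive, $\ell\notin L_{\mathcal{G}}(v)$. Take that as the added hypothesis, or the paper's uniform version $|L_{\mathcal{G}}(v)|\le 1$ (equivalently $L_{\mathcal{G}}(v)\subseteq\set{v}$). This hypothesis does hold at every use of the lemma:
\begin{itemize}
\item In Lemma~\ref{lem:elimination-step} it is assumed for $u$ in $\mathcal{H}$.
\item For each eliminated $v_j\in h^{-1}(u)$, Loop membership together with $|L_{\mathcal{H}}(u)|\le 1$ gives $L_{\mathcal{G}}(v_j)\subseteq h^{-1}(u)$.
\item Admissibility then excludes $v_{j+1},\dots,v_n$ from $L_{\mathcal{G}}(v_j)$.
\item Applying the lemma at the earlier steps gives $L_{\mathcal{G}_{j-1}}(v_j)\subseteq\set{v_j}$.
\end{itemize}
With that hypothesis in place, your proof is complete.
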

\begin{proof}
We show that each side of the equation is included in the other.
\begin{itemize}
    \item $L_{\mathcal{G}}(\ell) \setminus \set{v} \subseteq L_{\mathcal{G}/v}(\ell)$:  Suppose that $u \in L_{\mathcal{G}}(\ell)$ and $u \neq v$.  We must show that $u \in L_{\mathcal{G}/v}(\ell)$.

    Since $u \in L_{\mathcal{G}}(\ell)$, there is a local cycle $\pi$ of $\ell$ that includes $u$.  Let $\tau$ be the $v$-contraction of $\pi$.  Then we have that $\tau$ is a cycle of $\ell$ in $\mathcal{G}/v$ that includes $u$ (since $u \neq v$).  By Lemma~\ref{lem:dominance-preservation}, $\tau$ is local, and so $u \in L_{\mathcal{G}/v}(\ell)$.
    \item $L_{\mathcal{G}/v}(\ell) \subseteq L_{\mathcal{G}}(\ell) \setminus \set{v}$:
    Suppose that $u \in L_{\mathcal{G}/v}(\ell)$.
    Then there is a local cycle $\pi$ of $\ell$ in $\mathcal{G}/v$ that includes $u$. Let $\tau$ be a path in $\mathcal{G}$ such that $\pi$ is a $v$-contraction of $\tau$.  
    Since $\tau$ is a cycle  and $\mathcal{G}$ is reducible, there is some vertex along $\tau$ that dominates all others.  That vertex cannot be $v$, because then we would have $\set{v,\ell} \subseteq L_{\mathcal{G}}(v)$ and $|L_{\mathcal{G}}(v)| \leq 1$ and $\ell \neq v$ by assumption.  It follows that $\ell$ dominates all vertices along $\tau$, and thus $\tau$ is a local cycle of $\ell$ in $\mathcal{G}$ and $u \in L_{\mathcal{G}}(\ell)$. \qedhere
 \end{itemize}
\end{proof}

Recall that our argument for Theorem~\ref{thm:rpa} is based on showing that ``lock step'' elimination of vertices preserves loop-preserving stuttering simulations (formalized in Lemma~\ref{lem:elimination-step}).  Intuitively, we would like the argument to proceed by induction on the admissible enumeration $h^{-1}(u)$---i.e., given an admissible enumeration $v_1,\dots,v_n$ of $h^{-1}(u)$, 
we would like to show that
$\tuple{h,f}$ is a loop-preserving stuttering simulation
from $\mathcal{G}/v_1,\dots,v_i$ to $\mathcal{H}/u$ for all $i$.  However, if $n > 1$ this is not necessarily the case.
Firstly, the vertices $v_{i+1},\dots,v_{n}$ may have outgoing edges in $\mathcal{G}/v_1,\dots,v_i$,
$u$ has no outgoing edges in $\mathcal{H}/u$ with which to simulate them.  Secondly, eliminating a vertex $v_i$ may violate the \textit{Consistent unrolling} property of loop-preserving stuttering simulations, since it shortens the $u$-length of the simulating path of any primitive local cycle of $v_j$ (with $j>i$) that passes through $i$.

To solve the first problem, our argument will establish that $\tuple{h,f}$ is a stuttering simulation from $\mathcal{G}/v_1,\dotsi,v_i$ to
an intermediate object
$\mathcal{K} = \mathcal{H} + \mathcal{H}/u$ (where $\mathcal{H}+\mathcal{H}/u$ denotes the graph with all the edges of both $\mathcal{H}$ and $\mathcal{H}/u$).  The crucial lemma concerning $\mathcal{K}$ is the following,
which establishes that paths through $\mathcal{K}$ that pass through $u$ are simulated by paths that do not.
\begin{lemma} \label{lem:transit}
  Let $\mathcal{H}$ be a multi-flow graph and let $u \in V_{\mathcal{H}}$, and let $\mathcal{K} = \mathcal{H}+(\mathcal{H}/u)$.  Then we have:
  \begin{enumerate}
      \item For any $\tuple{p,a,u} \in E_{\mathcal{K}}$, we have $\tuple{p,a \cdot w_{\mathcal{H}}(u,u)^*,u} \in E_{\mathcal{K}}$.
      \item For any $\tuple{p,a,u}$ and $\tuple{u,b,s}$ in $E_{\mathcal{K}}$, we have
      $\tuple{p,a \cdot w_{\mathcal{H}}(u,u)^* \cdot v, s} \in E_{\mathcal{K}}$.
  \end{enumerate}
\end{lemma}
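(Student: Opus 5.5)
The plan is to unfold the definitions of $\mathcal{K} = \mathcal{H} + (\mathcal{H}/u)$ and of multi-flow graph elimination, and to classify where an edge $\tuple{p,a,u} \in E_{\mathcal{K}}$ can come from. (The $v$ in the second claim is presumably a typo for $b$.) Since $E_{\mathcal{K}} = E_{\mathcal{H}} \cup E_{\mathcal{H}/u}$, an edge into $u$ either lies in $E_{\mathcal{H}}$, or lies in $E_{\mathcal{H}/u}$. In the second case it has the form $\tuple{p,a_0 \cdot w_{\mathcal{H}}(u,u)^*,u}$ for some $\tuple{p,a_0,u} \in E_{\mathcal{H}}$. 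The first and third clauses of the definition of $E_{\mathcal{H}/u}$ cannot produce it: the first keeps only edges whose source is not $u$ (and does not produce targets $u$ except from $p\ne u$ edges already in $E_{\mathcal H}$, which are covered by the first case), and the second clause with target $u$ would require $\tuple{u,b,u}\in E_{\mathcal H}$ composed similarly. All such edges are handled uniformly by noting that the edge has the shape (edge of $\mathcal H$ into $u$)$\cdot$(star) or is itself an edge of $\mathcal H$. Similarly, an edge out of $u$ in $\mathcal{K}$ must lie in $E_{\mathcal{H}}$, since $u$ has no outgoing edges in $\mathcal{H}/u$ other than possibly those produced by the composition clause with $p=u$.

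For claim (1), if $\tuple{p,a,u} \in E_{\mathcal{H}}$, the third clause of $E_{\mathcal{H}/u}$ directly gives $\tuple{p,a\cdot w_{\mathcal{H}}(u,u)^*,u} \in E_{\mathcal{H}/u} \subseteq E_{\mathcal{K}}$. If instead $a = a_0 \cdot w_{\mathcal{H}}(u,u)^*$ comes from $\mathcal{H}/u$, then we need $a_0 \cdot w^* \cdot w^*$ to be an edge label. Here I would use the transitivity law $a^*a^* = a^*$ of the Pre-Kleene algebra, together with associativity. These give $a\cdot w^* = a_0 \cdot w^*$, so the required edge is the same edge $\tuple{p,a,u}$, which is already in $E_{\mathcal{K}}$. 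Because edges are elements of $V \times A \times V$, equality of labels in the algebra suffices.

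For claim (2), take $\tuple{u,b,s} \in E_{\mathcal{H}}$. If $\tuple{p,a,u} \in E_{\mathcal{H}}$, the second clause of $E_{\mathcal{H}/u}$ yields $\tuple{p,a\cdot w^*\cdot b,s}$ directly. If $a = a_0\cdot w^*$ with $\tuple{p,a_0,u}\in E_{\mathcal H}$, the second clause yields $\tuple{p,a_0 \cdot w^*\cdot b,s}$. Transitivity and associativity then rewrite $a_0\cdot w^* \cdot b$ as $a\cdot w^*\cdot b$.

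The main obstacle is the bookkeeping about which edges of $\mathcal{K}$ have source or target $u$. In particular, a self-loop $\tuple{u,c,u}$ may appear in $\mathcal{H}/u$ via the composition clause with $p = s = u$. I would handle this by checking that its label is $c_1 w^* c_2$ with $c_1, c_2$ loop labels of $\mathcal{H}$ at $u$, and treating it using the same two-case pattern. If this case turns out not to close, I would reinterpret $\mathcal{K}$ as containing only the edges of $\mathcal{H}/u$ that do not leave $u$. This matches the convention that eliminated vertices have no outgoing edges.
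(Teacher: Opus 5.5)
Your proposal follows the paper's proof. You split on whether the edge into $u$ lies in $E_{\mathcal{H}}$, in which case the corresponding clause of $E_{\mathcal{H}/u}$ supplies the required edge directly, or comes from $\mathcal{H}/u$, in which case its label already ends in $w_{\mathcal{H}}(u,u)^*$ and transitivity makes the required edge coincide with one already present; (2) is handled the same way. Your classification of the edges of $\mathcal{K}$ at $u$ is shakier than you suggest, though. Read literally, the composition clause with $s=u$ yields edges into $u$ labelled $a_0\, w_{\mathcal{H}}(u,u)^*\, b$ with $\langle u,b,u\rangle \in E_{\mathcal{H}}$. These do not have the form $a_0\, w_{\mathcal{H}}(u,u)^*$, and your fallback about edges leaving $u$ does not cover them. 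The paper's proof tacitly makes the same assumption, that every such edge has the form $\langle p, b\cdot w_{\mathcal{H}}(u,u)^*, u\rangle$, so this is not a gap relative to the paper.
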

\begin{proof}
We prove (1); case (2) is similar.
If $\tuple{p,a,u} \in E_{\mathcal{H}}$, then (1) holds simply by construction of $\mathcal{K}$.
If $\tuple{p,a,u} \notin E_{\mathcal{H}}$, then by construction it takes the form
$\tuple{p,b \cdot \textit{loop}_{\mathcal{H}}(u),u}$ for some $\tuple{p,b,u} \in E_{\mathcal{H}}$,
where $\textit{loop}_{\mathcal{H}}(u)$ is $ w_{\mathcal{H}}(u,u)^*$.
By transitivity,
$\textit{loop}_{\mathcal{H}}(u)\cdot\textit{loop}_{\mathcal{H}}(u) = \textit{loop}_{\mathcal{H}}(u)$, and so 
$\tuple{p,a \cdot \textit{loop}_{\mathcal{H}}(u), u} = \tuple{p,a,u} \in E_{\mathcal{K}}$.
\end{proof}

To solve the second problem, we develop a more relaxed variant of the \textit{Consistent unrolling} property.
In the following, for any multi-flow graph $\mathcal{G}$ and any vertex $\ell \in V_{\mathcal{G}}$ with $L_{\mathcal{G}}(\ell) \neq \emptyset$, we use $\mathcal{G}[\ell]$ to denote the induced sub-multi flow graph of $\mathcal{G}$ with vertices $L_{\mathcal{G}}(\ell)$, and root $\ell$.   A path $\pi \in \mathcal{G}[\ell]$ is said to be \textbf{forward} if it does not pass through $\ell$; i.e., $\pi$ cannot be decomposed into two non-trivial paths $\pi = \pi_1\pi_2$ such that $\dst(\pi_1) = \src(\pi_2) = \ell$.  Observe that a local cycle of $\ell$ is primitive exactly when it is forward.
The following lemma gives our relaxation of the \textit{Consistent unrolling} property---namely, that there exists a distance function for every loop.
\begin{lemma} \label{lem:distance-function}
  Let $\mathcal{G}$ and $\mathcal{H}$ be multi-flow graphs and let $\tuple{h,f}$ be a loop-preserving stuttering simulation from $\mathcal{G}$ to $\mathcal{H}$.  Then for any vertex $\ell$ with $|L_{\mathcal{H}}(h(\ell))| \leq 1$, there is a ``distance function'' $d_\ell : L_{\mathcal{G}}(\ell) \times L_{\mathcal{G}}(\ell) \rightarrow \mathbb{N}$ such that for every forward path $z \xrightarrow{\pi} z'$ in $\mathcal{G}[\ell]$, we have $\barr{W(\pi)}{f}{w_{\mathcal{H}}(h(\ell),h(\ell))^{d_\ell(z,z')}}$.
\end{lemma}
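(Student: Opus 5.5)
The distance function will come from the witness $r$ given by the \textit{Consistent unrolling} condition. The idea is to measure $d_\ell(z,z')$ as the number of times the simulating path in $\mathcal{H}$ leaves $h(\ell)$.

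Fix $\ell$ with $|L_{\mathcal{H}}(h(\ell))|\le 1$. By \textit{Loop membership}, every $z\in L_{\mathcal{G}}(\ell)$ satisfies $h(z)\in L_{\mathcal{H}}(h(\ell))$. Since $h(\ell)\in L_{\mathcal{H}}(h(\ell))$ whenever that set is nonempty, we get $h(z)=h(\ell)$ for all $z\in L_{\mathcal{G}}(\ell)$. Consequently, for any path $\pi$ inside $\mathcal{G}[\ell]$, $r(\pi)$ is a path in $\mathcal{H}$ that stays at the single vertex $h(\ell)$. It therefore consists only of self-loop edges of $h(\ell)$, and its length equals $|r(\pi)|_{h(\ell)}$. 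By the definition of a witness (an induction on $\pi$ using the compatibility of $\barr{}{f}{}$ with $\cdot$ and $1$), $\barr{W(\pi)}{f}{w_{\mathcal{H}}(h(\ell),h(\ell))^{|r(\pi)|_{h(\ell)}}}$. Here each non-stuttering edge is simulated by its parallel edge weight, which is $\le w_{\mathcal{H}}(h(\ell),h(\ell))$ via the last compatibility clause, and each stuttering edge is simulated by $1$. So it suffices to show that $|r(\pi)|_{h(\ell)}$ depends only on the endpoints of a forward path $\pi$, and to define $d_\ell(z,z')$ as that value (or $0$ when no forward path exists).

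This endpoint-independence is the main step, and I would argue it by a path-exchange argument using \textit{Consistent unrolling}. Let $\pi_1,\pi_2$ be forward paths from $z$ to $z'$ in $\mathcal{G}[\ell]$. Every vertex of $\mathcal{G}[\ell]$ lies on a local cycle of $\ell$, so we may choose a forward path $\alpha$ from $\ell$ to $z$ and a forward path $\beta$ from $z'$ back to $\ell$. Then $\alpha\pi_1\beta$ and $\alpha\pi_2\beta$ are local cycles of $\ell$.

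These cycles may fail to be primitive, since they might revisit $\ell$ internally. However, $\pi_1,\pi_2$ avoid $\ell$ internally (they are forward), and we can take $\alpha,\beta$ to avoid $\ell$ internally as well. Hence both cycles are primitive, which is where forwardness is essential. \textit{Consistent unrolling} gives $|r(\alpha\pi_i\beta)|_{h(\ell)}=n$ for both $i$. Because $r$ is defined edge-by-edge and is additive over concatenation, subtracting the contributions of $\alpha$ and $\beta$ yields $|r(\pi_1)|_{h(\ell)}=|r(\pi_2)|_{h(\ell)}$.

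Two points need care. First, the endpoint cases $z=\ell$ or $z'=\ell$: a forward path may start or end at $\ell$, and then we take $\alpha$ or $\beta$ to be empty. Second, the witness $r$ is fixed once for the whole simulation, so additivity of $|r(\cdot)|_{h(\ell)}$ holds. I expect the main obstacle to be justifying that $\alpha\pi_i\beta$ is genuinely a primitive local cycle under the paper's notion, together with the existence of the connecting forward paths $\alpha$ and $\beta$ (cut a local cycle through $z$ at its first visit to $z$). Reducibility is not needed beyond this. The \textit{Non-nesting} condition is not needed for this lemma either, because the hypothesis $|L_{\mathcal{H}}(h(\ell))|\le 1$ already collapses $h$ on $L_{\mathcal{G}}(\ell)$.
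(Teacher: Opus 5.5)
Your argument is correct and is essentially the paper's proof. You use \textit{Loop membership} and $|L_{\mathcal{H}}(h(\ell))|\le 1$ to collapse $h$ to $h(\ell)$ on $L_{\mathcal{G}}(\ell)$, bound $W(\pi)$ by $w_{\mathcal{H}}(h(\ell),h(\ell))^{|r(\pi)|}$ edge by edge, and obtain endpoint-independence by placing two forward paths between the same connecting paths $\alpha,\beta$ and applying \textit{Consistent unrolling} to the resulting primitive local cycles.

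The one thing to tighten is that, as in the paper, $d_\ell$ should be defined using only non-trivial forward paths. For $z=z'=\ell$, the empty path and a primitive local cycle of $\ell$ are both forward paths, with $r$-lengths $0$ and $n\ge 1$ respectively. Your exchange argument fails there, because $\alpha\pi_1\beta$ is then the empty path rather than a primitive local cycle. The non-trivial version is also the one used later, e.g.\ for $d_{v_i}(v_i,v_i)$ on self-loops.
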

\begin{proof}
Let $u = h(\ell)$.  First we show that $L_{\mathcal{G}}(\ell) \subseteq h^{-1}(u)$.   Suppose $v \in L_{\mathcal{G}}(\ell)$.  Since $\tuple{h,f}$ preserves loops, we have $h(v) \in L_{\mathcal{H}}(h(\ell))$ by \textit{(Loop membership)}.  Since $|L_{\mathcal{H}}(h(\ell))| \leq 1$, we must have $L_{\mathcal{H}}(h(\ell)) = \set{ h(\ell) } = \set { u }$, and since $h(v) \in L_{\mathcal{H}}(h(\ell))$ we must have $h(v) = u$, and thus $v \in h^{-1}(u)$.

Next, we define a ``distance function'' $d_\ell : L_{\mathcal{G}}(\ell) \times L_{\mathcal{G}}(\ell) \rightarrow \mathbb{N}$ such that for every $z,z' \in L_{\mathcal{G}}(\ell)$, 
every forward path $z \xrightarrow{\pi} z'$ in $\mathcal{G}[\ell]$ satisfies $\barr{W(\pi)}{f}{w_{\mathcal{H}}(u,u)^{d_\ell(z,z')}}$.   If
there is no non-trivial forward path from $z$ to $z'$ in $\mathcal{G}[\ell]$, we may simply take $d_\ell(z,z') = 0$.  
If at least one such path $\pi$ exists, then we may take $d_\ell(z,z') = |r(\pi)|$.  To show that $d_\ell$ is well-defined, we must show that
(1) $\barr{W(\pi)}{f}{w_{\mathcal{H}}(u,u)^{|r(\pi)|}}$, and
(2) if there is any other non-trivial forward path $z \xrightarrow{\pi'} z'$ in $\mathcal{G}[\ell]$, we have $|r(\pi)| = |r(\pi')|$
\begin{enumerate}
\item Let $\pi = e_1 \dots e_m$.  Since every vertex along $\pi$ belongs to $L_{\mathcal{G}}(\ell) \subseteq h^{-1}(u)$,
for each edge $e_i$, we must have either $r(e_i) = \epsilon$, or $r(e_i)$ is a self-loop of $u$, and thus 
$\barr{w(e_i)}{f}{w_{\mathcal{H}}(u,u)}$ by compatibility and idempotence of addition.  It follows that
$\barr{w(\pi)}{f}{w_{\mathcal{H}}(u,u)^{|r(\pi)|}}$.
\item Let $\pi'$ be a forward path from $z$ to $z'$ in $\mathcal{G}[\ell]$.
Since $z$ belongs to $L_{\mathcal{G}}(\ell)$, there is a forward path $\tau$ in
$\mathcal{G}[\ell]$ from $\ell$ to $z$; 
since $z'$ belongs to $L_{\mathcal{G}}(\ell)$, there is a forward path $\tau'$ in
$\mathcal{G}[\ell]$ from $z'$ to $\ell$ (if $z=\ell$, choose $\tau=\epsilon$, and if $z' = \ell$, choose $\tau' = \epsilon$).
Then $\tau\pi\tau'$ and $\tau\pi'\tau'$ are primitive local cycles of $\ell$, and so
$|r(\tau\pi\tau')|_u = |r(\tau\pi'\tau')|_u$ the assumption that $\tuple{h,f}$ is loop-preserving.
It follows that $|r(\pi)|_u = |r(\pi')|_u$.  Since all edges in $r(\pi)$ and $r(\pi')$ have source $u$, their $u$-lengths coincide with their lengths, and so $|r(\pi)| = |r(\pi')|$. \qedhere
\end{enumerate}
\end{proof}

The following lemma is used to establish that existence of distance functions is preserved by vertex elimination (in contrast to the Consistent unrolling property).
\begin{lemma} \label{lem:forward-path}
Let $\mathcal{G}$ be a multi-flow graph, and let $\ell,v \in V_{\mathcal{G}}$ such that $\ell \notin L_{\mathcal{G}}(v)$.  For any 
path $\tau$ in $\mathcal{G}$ and forward path $\pi$ in $\mathcal{G}/v[\ell]$ such that $\pi$ is a $v$-contraction of $\tau$, we have that $\tau$ is a forward path in $\mathcal{G}[\ell]$
\end{lemma}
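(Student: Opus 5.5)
The plan is to unfold what it means for $\tau$ to be a forward path in $\mathcal{G}[\ell]$. There are two requirements: (a) every vertex of $\tau$ lies in $L_{\mathcal{G}}(\ell)$, and (b) $\tau$ never passes through $\ell$ at an interior point. I will compare the vertices of $\tau$ with those of $\pi$ using the definition of $v$-contraction. By induction on the contraction clauses, $\tau$ is obtained from $\pi$ by replacing some edges $\tuple{p,a\cdot w(v,v)^*\cdot c,s}$ with detours $\tuple{p,a,v}\tuple{v,b_1,v}\cdots\tuple{v,c,s}$, with $p$ and $s$ vertices of $\pi$. The only vertices of $\tau$ that do not occur on $\pi$ are therefore occurrences of $v$. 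Throughout I work, as in the surrounding development, with a reducible $\mathcal{G}$.

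First I dispose of degenerate cases. If $\pi$ is empty, then $\tau=\pi$ and there is nothing to show. If $v=\ell$, then $\ell$ has no outgoing edges in $\mathcal{G}/v$, so $\mathcal{G}/v[\ell]$ has no nontrivial paths out of $\ell$. I will argue that this forces $\pi$ to contain no contracted edge through $v$, so that $\tau=\pi$. I expect this case to be vacuous or immediate; I will check it briefly rather than dwell on it.

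Assume now $v\neq\ell$. Every vertex of $\pi$ lies in $L_{\mathcal{G}/v}(\ell)$, and by Lemma~\ref{lem:loop-invariance} this equals $L_{\mathcal{G}}(\ell)\setminus\set{v}$. Hence all vertices of $\tau$ other than $v$ satisfy (a).

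The main step is to show $v\in L_{\mathcal{G}}(\ell)$ whenever $\tau$ visits $v$. Take a detour $p\to v\to\cdots\to v\to s$ with $p,s\in L_{\mathcal{G}}(\ell)$. Choose a path from $\ell$ to $p$ and a path from $s$ to $\ell$ inside $L_{\mathcal{G}}(\ell)$, i.e.\ segments of local cycles, so that $\ell$ dominates every vertex on them. Gluing these gives a cycle $\ell\rightsquigarrow p\to v\to s\rightsquigarrow \ell$. By reducibility some header $x$ on this cycle dominates all its vertices.
\begin{itemize}
\item If $x$ lies on one of the two $\ell$-segments, then $\ell$ dominates $x$ and $x$ dominates $\ell$, so $x=\ell$.
\item Otherwise $x=v$. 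Then the cycle is a local cycle of $v$ containing $\ell$, giving $\ell\in L_{\mathcal{G}}(v)$, which contradicts the hypothesis.
\end{itemize}
So $x=\ell$. Thus $\ell$ dominates $v$ and the cycle is a local cycle of $\ell$, so $v\in L_{\mathcal{G}}(\ell)$. This use of the hypothesis $\ell\notin L_{\mathcal{G}}(v)$ is the crux, and I expect the bookkeeping of the dominance argument to be the main obstacle.

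For (b), every interior vertex of $\tau$ is either an interior vertex of $\pi$, which is not $\ell$ since $\pi$ is forward, or an occurrence of $v\neq\ell$. Hence $\tau$ does not pass through $\ell$ and is forward in $\mathcal{G}[\ell]$.
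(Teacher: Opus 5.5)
Your proposal is correct and follows essentially the paper's argument: $\tau$ differs from $\pi$ only by inserted occurrences of $v\neq\ell$, so forwardness and membership of the other vertices carry over from $\pi$. For $v$ itself, you close a cycle through $\ell$ and use reducibility together with $\ell\notin L_{\mathcal{G}}(v)$ to force the header to be $\ell$; your mutual-dominance case split makes explicit the step the paper compresses into ``so must be $\ell$''. The only loose spot is the case $v=\ell$. It is vacuous for the reason the paper gives: $v$ has no outgoing edges in $\mathcal{G}/v$, so $L_{\mathcal{G}/v}(v)=\emptyset$ and $\mathcal{G}/v[\ell]$ is not defined. Your suggested reason, that $\pi$ avoids contracted edges, is not the right one.
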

\begin{proof}
Since $\pi$ is a forward path in $\mathcal{G}/v[\ell]$, we must have $v \neq \ell$ because $v$ has out-degree zero in $\mathcal{G}/v$ and thus $L_{\mathcal{G}}(v)$ is empty while $L_{\mathcal{G}}(\ell)$ is non-empty.  Since $\pi$ does not pass through $\ell$, neither does $\tau$ because $\pi$ is a $v$-contraction of $\tau$ and $v \neq \ell$.
It remains to show that $\tau$ belongs to $\mathcal{G}[\ell]$---i.e., $\ell$ dominates
all vertices that appear in $\tau$.  Since $\ell$ dominates all vertices of $\pi$ in $\mathcal{G}/v$ by assumption and $\pi$ is a $v$-contraction of $\tau$, then by Lemma~\ref{lem:dominance-preservation}, it is sufficient to show that should $v$ appear in $\tau$, then $\ell$ dominates $v$ in $\mathcal{G}$.  Suppose that $v$ does appear in $\tau$.  Since $z$ and $z'$ belong to $L_{\mathcal{G}/v}(\ell)$ and therefore 
$L_{\mathcal{G}}(\ell)$, there exist paths 
$\ell \xrightarrow{\rho} z$ and $z' \xrightarrow{\rho'} \ell$ in $\mathcal{G}[\ell]$.  Since $\mathcal{G}$ is reducible, some vertex must dominate all others in the cycle $\rho\pi\rho'$.  Since $\ell \notin L_{\mathcal{G}}(v)$ by assumption, this vertex cannot be $v$ and so must be $\ell$, and thus $v$ appears in the local cycle $\rho\pi\rho'$ of $\ell$ and thus $\pi$ is a forward path in $\mathcal{G}[\ell]$.
\end{proof}

We may now prove the critical lemma in the argument for Theorem~\ref{thm:rpa}:

\begin{lemma}[Lock-step elimination]\label{lem:elimination-step}
  Let $\mathcal{G}$ and $\mathcal{H}$ be multi-flow graphs,
  and let $\tuple{h,f}$ be a loop-preserving stuttering simulation from $\mathcal{G}$ to $\mathcal{H}$.
  Then for any vertex $u \in V_H$ with $|L_{\mathcal{H}}(u)| \leq 1$ and any admissible enumeration $v_1,\dots,v_n$ of $h^{-1}(u)$, we have $\tuple{h,f} \in \mathbf{G}_{\mathbf{A}}(G/v_1,\dots,v_n,H/u)$.
\end{lemma}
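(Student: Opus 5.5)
I would prove this by induction on the admissible enumeration $v_1,\dots,v_n$ of $h^{-1}(u)$, maintaining an invariant that is weaker than the conclusion. The target statement itself is not an inductive invariant: the not-yet-eliminated vertices $v_{i+1},\dots,v_n$ may still have outgoing edges, and consistent unrolling can break at intermediate stages. So the plan is to carry the invariant against the intermediate graph $\mathcal{K} = \mathcal{H} + \mathcal{H}/u$ and only pass to $\mathcal{H}/u$ at the end.

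Let $\mathcal{G}_i \defeq \mathcal{G}/v_1,\dots,v_i$ and $\ell_0 \defeq u$. The invariant $(I_i)$ has three parts:
\begin{enumerate}
\item $\tuple{h,f}$ is a stuttering simulation from $\mathcal{G}_i$ to $\mathcal{K}$ that satisfies Loop membership and Non-nesting.
\item Every edge of $\mathcal{G}_i$ with source in $\set{v_1,\dots,v_i}$ has been removed.
\item For every $\ell$ with $h(\ell)=u$ and $L_{\mathcal{G}_i}(\ell)\neq\emptyset$, there is a distance function $d_\ell$ in the sense of Lemma~\ref{lem:distance-function}, with $\mathcal{K}$'s loop weight $w_{\mathcal{H}}(u,u)$.
\end{enumerate}

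For the base case $i=0$, the simulation into $\mathcal{H}$ is also a simulation into $\mathcal{K}$, because $\mathcal{K}$ contains all edges of $\mathcal{H}$. The distance functions come from Lemma~\ref{lem:distance-function}, which applies since $|L_{\mathcal{H}}(u)|\le 1$.

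For the inductive step, eliminate $v=v_{i+1}$. Admissibility guarantees that no later $v_j$ lies in $L(v)$. Together with Non-nesting this gives $|L_{\mathcal{G}_i}(v)|\le 1$, or else $L(v)$ consists of vertices already eliminated. So Lemmas~\ref{lem:reducible-preservation} and~\ref{lem:loop-invariance} apply: reducibility is kept, and the local-cycle sets change only by removing $v$. This preserves Loop membership and Non-nesting.

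Each new edge of $\mathcal{G}_{i+1}$ has the form $\tuple{p,\,a\cdot w(v,v)^*\cdot c,\,s}$ (or the truncated form ending at $v$). I handle it by cases on whether the pieces $a$ and $c$ stutter or map to edges of $\mathcal{K}$.
\begin{itemize}
\item \emph{Both pieces stutter.} Then $h(p)=h(s)=u$. The self-loops on $v$ are simulated by powers of $w_{\mathcal{H}}(u,u)$ with the same distance, via the distance function. Compatibility and Lemma~\ref{lem:transit}, using extensivity, transitivity and monotonicity of ${}^*$, then show the whole edge is simulated by $1$ or by $w(u,u)^*$ on the self-loop of $u$ in $\mathcal{K}$.
\item \emph{Some piece is non-stuttering.} Lemma~\ref{lem:transit}(2) provides an edge $\tuple{h(p),\,a'\,w(u,u)^*\,c',\,h(s)}$ in $\mathcal{K}$. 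The bound $w(v,v)^* \preceq_f w(u,u)^*$ follows from Compatibility, since each self-loop of $v$ is simulated by $1$ or by $w(u,u)$, and $1 + w(u,u) \le w(u,u)^*$.
\end{itemize}
Distance functions are transported to $\mathcal{G}_{i+1}$ using Lemma~\ref{lem:forward-path}: each forward path in $\mathcal{G}_{i+1}[\ell]$ is a contraction of a forward path in $\mathcal{G}_i[\ell]$, so we can keep $d_\ell$ unchanged. This is precisely why I use distance functions rather than exact $u$-lengths.

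At $i=n$, every vertex of $h^{-1}(u)$ has out-degree zero. An edge of $\mathcal{G}_n$ therefore never has source mapping to $u$, so it never uses an edge out of $u$ in $\mathcal{K}$. Edges into $u$ are simulated by edges of $\mathcal{K}$ ending at $u$, and by Lemma~\ref{lem:transit}(1) these can be replaced by their $\mathcal{H}/u$ versions $a\cdot w(u,u)^*$. The remaining edges avoid $u$ and are dominated by the corresponding $\mathcal{H}/u$ edges, because $\mathcal{H}/u$ adds summands. So the simulation lands in $\mathcal{H}/u$.

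Consistent unrolling in $\mathcal{G}_n$ versus $\mathcal{H}/u$ splits into two cases. Loops of $u$ have vanished, so loops that mapped to $u$ are trivial. For other loops, by Lemma~\ref{lem:loop-invariance}, the simulating paths' lengths change only where they passed through $u$, which cannot happen because of Non-nesting.

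I expect the main obstacle to be the both-pieces-stutter case of the inductive step. There I must show that a contracted edge whose simulating image is a power $w(u,u)^{d}$ is correctly absorbed, and that the distance function remains consistent after contraction. The first thing I would try is to build the simulating weight explicitly as $w(u,u)^{d_\ell(p,v)}\,w(u,u)^*\,w(u,u)^{d_\ell(v,s)} \le w(u,u)^*$, using the PKA laws.
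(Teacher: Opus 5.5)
Your architecture matches the paper's: lock-step elimination of $h^{-1}(u)$ against $\mathcal{K}=\mathcal{H}+\mathcal{H}/u$, distance functions in place of exact $u$-lengths, Lemma~\ref{lem:transit} for new edges, and a final passage to $\mathcal{H}/u$. However, the step that carries the induction is not justified. Write $A\preceq_f B$ for $\barr{A}{f}{B}$, as you do. You derive $w_{\mathcal{G}_i}(v,v)^*\preceq_f w_{\mathcal{H}}(u,u)^*$ from ``each self-loop of $v$ is simulated by $1$ or by $w_{\mathcal{H}}(u,u)$, and $1+w_{\mathcal{H}}(u,u)\le w_{\mathcal{H}}(u,u)^*$''. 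That yields only $w_{\mathcal{G}_i}(v,v)\preceq_f w_{\mathcal{H}}(u,u)^*$. Hence you get $w_{\mathcal{G}_i}(v,v)^*\preceq_f (w_{\mathcal{H}}(u,u)^*)^*$, or $(1+w_{\mathcal{H}}(u,u))^*$. Neither $(a^*)^*\le a^*$ nor $(1+a)^*\le a^*$ is a Pre-Kleene law, and both fail for the polyhedral iteration operator on $\TF$. For $a=(x'=x+2)$ one gets $a^*\equiv\exists k\in\mathbb{N}.\,x'=x+2k$, while $(1+a)^*$ and $(a^*)^*$ are both $x'\ge x$.

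The paper gets the bound from the distance invariant instead. Every self-loop of $v_i$ in $\mathcal{G}_{i-1}$ is a forward path from $v_i$ to $v_i$ in $\mathcal{G}_{i-1}[v_i]$. So all of them are simulated by one and the same power $w_{\mathcal{H}}(u,u)^{d}$, with $d=d_{v_i}(v_i,v_i)$. Compatibility then gives $w_{\mathcal{G}_{i-1}}(v_i,v_i)^*\preceq_f (w_{\mathcal{H}}(u,u)^{d})^*$, and the Unrolling law $(a^d)^*\le a^*$ finishes. This is exactly where Consistent unrolling is consumed. A bound that tolerates a mixture of stuttering and non-stuttering self-loops cannot be pushed through the star. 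You cite the distance function in your both-stutter case, but the bound is needed in every case.

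Two further steps need repair.

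\emph{Transporting $d_\ell$.} This is not just Lemma~\ref{lem:forward-path}. Contracting a forward path through $v$ inserts the factor $w(v,v)^*$, so a priori the result is only bounded by $w_{\mathcal{H}}(u,u)^{d_\ell(z,v)}\,w_{\mathcal{H}}(u,u)^*\,w_{\mathcal{H}}(u,u)^{d_\ell(v,z')}$. The paper uses Non-nesting precisely here. If $v\in L(\ell)$ with $\ell\neq v$ and $h(\ell)=h(v)=u$, then $L(v)=\emptyset$, so $v$ has no self-loops and contraction does not change the weight. If $v\notin L(\ell)$, the path avoids $v$ altogether. Your own two uses of Non-nesting are misplaced:
\begin{itemize}
\item $|L_{\mathcal{G}_i}(v)|\le 1$ follows from Loop membership, $|L_{\mathcal{H}}(u)|\le 1$ and admissibility.
\item In the final Consistent-unrolling check, simulating paths of loops of $z$ with $h(z)\neq u$ may well pass through $u$. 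Lengths are preserved because contraction only deletes edges whose source maps to $u\neq h(z)$.
\end{itemize}

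\emph{Stuttering pieces.} Your treatment does not go through. $\mathcal{K}$ in general has no self-loop at $u$ whose weight bounds $w_{\mathcal{H}}(u,u)^*$, and $w_{\mathcal{H}}(u,u)^*\not\preceq 1$. So a new edge $p\to s$ from $p\to v\to s$, with $h(p)=h(v)=h(s)=u$ and a self-loop on $v$, is not simulated in $\mathcal{K}$. This arises, for example, for a straight-line segment inside $h^{-1}(u)$ when $v$ is eliminated first. Likewise, Lemma~\ref{lem:transit}(2) needs \emph{both} pieces to be $\mathcal{K}$-edges, so your ``some piece is non-stuttering'' case misses the mixed situation. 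The paper's proof routes every new edge through Lemma~\ref{lem:transit} and does not treat this either.

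One repair is to relax the invariant only for edges whose source lies in $h^{-1}(u)$; all such edges are gone by stage $n$. Require $W(e)\preceq_f w_{\mathcal{H}}(u,u)^*$ if the target maps to $u$. Otherwise require $W(e)\preceq_f w_{\mathcal{H}}(u,u)^*\cdot c'$ for some $\mathcal{K}$-edge $\tuple{u,c',h(s)}$. The extra stars are absorbed by Transitivity when such an edge is later composed, via Lemma~\ref{lem:transit}, with an incoming edge from outside $h^{-1}(u)$.
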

\begin{proof}

Let $r$ be a witness function for the stuttering simulation $\tuple{h,f}$, let $\mathcal{K} = \mathcal{H}+(\mathcal{H}/u)$, and for any vertex $\ell \in V_{\mathcal{G}}$, let $d_\ell$ be as in Lemma~\ref{lem:distance-function}.

  For any $i \in \set{0,\dots,n}$, define
  $\mathcal{G}_i \defeq \mathcal{G}/v_1,\dots,v_i$ to be the graph $\mathcal{G}$ after $i$ elimination steps.
  For each $i$, and each vertex $\ell \in V_{\mathcal{G}}$, we will define a function $r_i : E_{\mathcal{G}_i} \rightarrow E_{\mathcal{K}}$ such that the following hold
  \begin{enumerate}
  \item[$A(i)$]: $\tuple{h,f}$ is a stuttering simulation from $\mathcal{G}_i$ to $\mathcal{K}$ with witness $r_i$
  \item[$B(i)$]: For any vertex $\ell \in h^{-1}(u)$ and any pair of vertices $z,z' \in L_{\mathcal{G}_i}(\ell)$, every non-trivial forward path $z \xrightarrow{\pi} z' \in \mathcal{G}[\ell]$ satisfies
  $\barr{w(\pi)}{f}{w_{\mathcal{H}}(u,u)^{d_\ell(z,z')}}$
  \end{enumerate}
  From property $A(n)$, we have that $\tuple{h,f}$ is a stuttering simulation from $\mathcal{G}_n = \mathcal{G}/v_1,\dots,v_n$ to $\mathcal{K}$.  Since every vertex in $f^{-1}(u)$ has out-degree zero in $\mathcal{G}_n$ (i.e., the outgoing edges of $u$ in $\mathcal{K}$ do not simulate any edges of $\mathcal{G}_n$), $\tuple{h,f}$ is also a stuttering simulation from $\mathcal{G}_n$ to $H/u$.  
    
  We now show that $A(i)$ and $B(i)$ hold for all $i$ by induction on $i$.
  For the base case $\mathcal{G}_0 = \mathcal{G}$, define $r_0 = r$; $A(0)$ holds by assumption, $B(0)$ holds by the definition of $d_\ell$.
  
  We now show the induction step.  Suppose that condition $A$ and $B$ hold up to $i-1$.
\begin{enumerate}
\item[$A(i)$]
  Observe that by $B(i-1)$, 
  for any self-loop $\tuple{v_i,a,v_i} \in \mathcal{G}_{i-1}$, we have
  $\barr{a}{f}{w_{\mathcal{H}}(u,u)^{d_{v_i}(v_i,v_i)}}$, and thus
\begin{align*}
\barr{w_{\mathcal{G}_{i-1}}(v_i,v_i)&}{f}{w_{\mathcal{H}}(u,u)^{d_{v_i}(v_i,v_i)}} & \text{Compatibility}\\
\barr{w_{\mathcal{G}_{i-1}}(v_i,v_i)^*&}{f}{(w_{\mathcal{H}}(u,u)^{d_{v_i}(v_i,v_i)})^*} & \text{Compatibility}\\
&\leq w_{\mathcal{H}}(u,u)^* & \text{Unrolling}\\
\barr{w_{\mathcal{G}_{i-1}}(v_i,v_i)^*&}{f}{w_{\mathcal{H}}(u,u)^*} & \text{Compatibility}
\end{align*}

  We construct $r_i$ as follows.  For the edges
  $e \in E_{\mathcal{G}_i} \cap E_{\mathcal{G}_{i-1}}$, we may define
  $r_i(e) = r_{i-1}(e)$.  The edges that are present in $\mathcal{G}_{i}$ that are not present in $\mathcal{G}_{i-1}$ take one of two forms
  \begin{itemize}
  \item $e = \tuple{p,a \cdot w_{\mathcal{G}_{i-1}}(v_i,v_i)^*, v_i}$ where $\tuple{p,a,v_i} \in E_{\mathcal{G}_{i-1}}$:
  take \[r_i(e) \defeq \tuple{h(p), W(r_{i-1}(p,a,v_i))\cdot w_{\mathcal{H}}(u,u)^*, h(v_i)} \]
  which belongs to $E_{\mathcal{K}}$ by Lemma~\ref{lem:transit}.
  Observe that 
 by $A(i-1)$, $r_{i-1}$ is a witness that $\tuple{h,f}$ is a stuttering simulation from $\mathcal{G}_{i-1}$ to $\mathcal{K}$, and so $\barr{a}{f}{W(r_{i-1}(p,a,v_i))}$.  Since
$\barr{w_{\mathcal{G}_{i-1}}(v_i,v_i)^*}{f}{w_{\mathcal{H}}(u,u)^*}$ we have
 \[ W(e) = \barr{a \cdot w_{\mathcal{G}_{i-1}}(v_i,v_i)^*}{f}{r_{i-1}(p,a,v_i)\cdot w_{\mathcal{H}}(u,u)^*} = W(r_i(e)) \]
  by compatibility.
    \item $e = \tuple{p,a \cdot w_{\mathcal{G}_{i-1}}(v_i,v_i)^* \cdot b, s}$ where $\tuple{p,a,v_i},\tuple{v_i,b,s} \in E_{\mathcal{G}_{i-1}}$: take \[ r_i(e) \defeq \tuple{h(p), w(r_{i-1}(p,a,v_i)) \cdot w_{\mathcal{H}}(u,u)^* \cdot w(r_{i-1}(v_i,b,s)), h(s)} \]
    which belongs to $E_{\mathcal{K}}$ by Lemma~\ref{lem:transit}.
    Observe that by $A(i-1)$, $r_{i-1}$ is a stuttering simulation from
$\mathcal{G}_{i-1}$ to $\mathcal{K}$, and so $\barr{a}{f}{W(r_{i-1}(p,a,v_i))}$
and $\barr{b}{f}{W(r_{i-1}(v_i,b,s))}$.
    Since
$\barr{w_{\mathcal{G}_{i-1}}(v_i,v_i)^*}{f}{w_{\mathcal{H}}(u,u)^*}$ we have
\[ W(e) = \barr{a \cdot w_{\mathcal{G}_{i-1}}(v_i,v_i)^* \cdot b}{f}{W(r_{i-1}(p,a,v_i)) \cdot w_{\mathcal{H}}(u,u)^* \cdot W(r_{i-1}(v_i,b,s))} = W(r_i(e)) \]
  by compatibility.
  \end{itemize}
  Since for any edge $e$ in $\mathcal{G}_i$ we have $\barr{W(e)}{f}{W(r_i(e))}$, $\tuple{h,f}$ is a stuttering simulation with witness $r_i$.
\item[$B(i)$]
Let $\ell \in h^{-1}(u)$, and let $z \xrightarrow{\pi} z'$ be a non-trivial forward path in $\mathcal{G}_{i}[\ell]$.
Let $\tau$ be a path in $\mathcal{G}_{i-1}$ such that $\pi$ is a $v_i$-contraction of $\tau$.  By Lemma~\ref{lem:forward-path}, $\tau$ is a forward path of $\mathcal{G}_{i-1}[\ell]$, and so by $B(i-1)$ we have $\barr{W(\tau)}{f}{w_{\mathcal{H}}(u,u)^{d_\ell(z,z')}}$.  Thus it suffices to show that $W(\pi) = W(\tau)$.
Consider two cases:
\begin{itemize}
\item $v_i \in L_{\mathcal{G}_{i-1}}(\ell)$:
Since $\pi$ is a $v_i$-contraction of $\tau$, to show $W(\pi) = W(\tau)$ it is sufficient to show that $v_i$ has no self-loops in $\mathcal{G}_{i-1}$.  For a contradiction, suppose that $v_i$ has a self-loop, and thus $v_i \in \mathcal{G}_{i-1}(v_i)$.  By 
Lemma~\ref{lem:loop-invariance}, we have
$v_i \in \mathcal{G}_{i-1}(v_i) \subseteq \mathcal{G}(v_i)$ and 
$v_i \in \mathcal{G}_{i-1}(\ell) \subseteq \mathcal{G}(\ell)$, contradicting the assumption that $\tuple{h,f}$ preserves loops from $\mathcal{G}$ to $\mathcal{H}$ (Non-nesting).
\item $v_i \notin L_{\mathcal{G}_{i-1}}(\ell)$:
By Lemma~\ref{lem:forward-path}, $\tau$ is a forward path in $\mathcal{G}_{i-1}[\ell]$.  Since $v_i \notin L_{\mathcal{G}_{i-1}}(\ell)$, $\tau$ may not pass through $v_i$, and thus must be $\pi$, whence $W(\pi) = W(\tau)$.
\end{itemize}

  \end{enumerate}

Finally, we must show that $\tuple{h,f}$ is loop preserving from $\mathcal{G}_n$ to $\mathcal{H}/u$.
\begin{enumerate}
\item (Loop membership) For any $z,z' \in V_{\mathcal{G}_n}$ such that $z \in L_{\mathcal{G}_n}(z')$, we must have
$z \in L_{\mathcal{G}}(z')$, and thus $h(z) \in L_{\mathcal{H}}(z')$.  We can have neither $h(z) = u$ nor $h(z') = u$, because all such vertices have out-degree zero in $\mathcal{G}_n$ and thus do not belong to any loop.  It follows that $h(z) \in L_{\mathcal{H}/u}(h(z'))$, since 
$L_{\mathcal{H}/u}(\ell) = L_{\mathcal{H}}(\ell) \setminus \set{u}$ for any vertex $\ell$ by Lemma~\ref{lem:loop-invariance}.

\item (Non-nesting) Suppose $z,z' \in V_{\mathcal{G}_n}$ such that $h(z) = h(z')$ and $z \in L_{\mathcal{G}_n}(z')$; we must show that $L_{\mathcal{G}_n}(z) = \emptyset$.
Since $z \in L_{\mathcal{G}_n}(z') \subseteq L_{\mathcal{G}}(z)$ and $\tuple{h,f}$ is loop-preserving from $\mathcal{G}$ to $\mathcal{H}$, 
we must have $L_{\mathcal{G}}(z) = \emptyset$.
Since $L_{\mathcal{G}_n}(z) \subseteq L_{\mathcal{G}}(z) = \emptyset$, we have
$L_{\mathcal{G}_n}(z) = \emptyset$.

\item (Consistent unrolling)
  Let $z \in V_{\mathcal{G}_n}$.  
  By the assumption that $r$ is a witness that $\tuple{h,f}$ is loop-preserving from $\mathcal{G}$ to $\mathcal{H}$, there is some $k$ such that every primitive local cycle $\pi$ of $z$ in $\mathcal{G}$ satisfies $|r(\pi)|_{h(z)} = k$.
We will show that every primitive local cycle $\pi$ of $z$ in $\mathcal{G}_n$ also satisfies $|r_n(\pi)|_{h(z)} = k$.

We make use of the following lemma:
\begin{lemma} \label{lem:contraction-preserves-length}
For any $i < n$,
any path $\tau$ of $\mathcal{G}_i$ and any $v_{i+1}$-contraction $\tau'$ of $\tau$, we have 
$|r_i(\tau)|_{h(z)} = |r_{i+1}(\tau')|_{h(z)}$.
\end{lemma}

Using this lemma, we have that
$|r_n(\pi)|_{h(z)} = k$ by the following argument.
Let $\pi_0,\pi_1,\dots,\pi_n$ be a sequence of cycles of $z$ such that for each $i<n$, $\pi_{i+1}$ is a $v_{i+1}$-contraction of $\pi_i$ and such that $\pi_n = \pi$.
By Lemma~\ref{lem:forward-path},
$\pi_0$ is a forward path in
$\mathcal{G}_0[z]$, and thus a primitive local cycle of $z$ in $\mathcal{G}_0 = \mathcal{G}$ and thus $|r_0(\pi_0)|_{h(z)} = k$ by assumption.  By Lemma~\ref{lem:contraction-preserves-length}, we have
\[ |r_n(\pi)|_{h(z)} = |r_n(\pi_n)|_{h(z)} = |r_{n-1}(\pi_{n-1})|_{h(z)} = \dots = |r_0(\pi_0)|_{h(z)} = k\ .\]
Thus it is sufficient to prove that
$|r_n(\pi)|_{h(z)} = |r_0(\pi_0)|_{h(z)}$.

It remains to prove Lemma~\ref{lem:contraction-preserves-length}. We do so by induction on the judgement that $\tau'$ is a $v_{i+1}$-contraction of $\tau$.
\begin{itemize}
\item Case $\tau = \tau'$: trivial, since $r_i$ and $r_{i+1}$ agree on edges that belong to both $\mathcal{G}_i$ and $\mathcal{G}_{i+1}$
\item Case
\begin{align*}
    \tau &= \tuple{p,a,v_{i+1}}\tuple{v_{i+1},b_1,v_{i+1}}\dots \tuple{v_{i+1},b_r,v_{i+1}}\tuple{v_{i+1},v,s} \\
    \tau' &= \tuple{p , a \cdot w_{\mathcal{G}_{i+1}}(v_{i+1}, v_{i+1})^* \cdot c,s}
\end{align*}

Since within $\mathcal{G}_n$, $z$ has a local cycle and all vertices in $f^{-1}(u)$ have out-degree zero, we must have $h(z) \neq u$.
Observe that
$|r_i(\tau)|_{h(z)} = |r_i(\tuple{p,a,v_{i+1}})|_{h(z)}$
since $h(v_{i+1}) = u \neq h(z)$, and $\tuple{p,a,v_{i+1}}$ is the only edge of $\tau$ with source not equal to $v_{i+1}$.
If $h(p) \neq h(z)$, then clearly
$|r_i(\tau)|_{h(z)} = 0 = |r_{i+1}(\tau')|_{h(z)}$.  If $h(p) = h(z)$, then $|r_{i+1}(\tau')|_{h(z)} = 1$, because by construction $r_{i+1}$ does not map any nascent edge in $\mathcal{G}_{i+1}$ to $\epsilon$.  We must also have 
$|r_{i}(\tau)|_{h(z)} = 1$, because $r_i$ cannot map $\tuple{p,a,v_{i+1}}$ to $\epsilon$ since $h(p) \neq h(v_{i+1})$.  Thus in any case
$|r_i(\tau)|_{h(z)} = |r_n(\tau')|_{h(z)}$

\item Case $\tau = \tau_1\tau_2$, $\tau' = \tau'_1\tau'_2$, with $\tau_1'$ and $\tau_2'$ $v_{i+1}$-contractions of $\tau_1$ and $\tau_2$ respectively: by the induction hypothesis and the fact that $r_i$ and $|-|_{h(z)}$ are homomorphisms, we have
\[ |r_i(\tau)|_{h(z)} = |r_i(\tau_1)|_{h(z)} + |r_i(\tau_2)|_{h(z)} = |r_{i+1}(\tau'_1)|_{h(z)} + |r_{i+1}(\tau'_2)|_{h(z)} = |r_{i+1}(\tau')|_{h(z)} \qedhere\]
\end{itemize}
\end{enumerate}
  \end{proof}

Finally, we may prove the main result.  Recall Theorem~\ref{thm:rpa}:
\rpa*
\begin{proof}
  Suppose $u_1,\dots,u_n$ is an admissible enumeration of $V_H\setminus \set{r_H}$.  For each $i$, let $v_{i,1},\dots,v_{i,m_i}$ be an admissible enumeration of $f^{-1}(u_i)$.
  Define \begin{align*}
      \vec{v} &\defeq v_{1,1},\dots,v_{1,m_1},\dots,v_{n,1},\dots,v_{n,m_n}\\
      G' &\defeq G/\vec{v}\\
      H' &\defeq H/u_1,\dots,u_n
  \end{align*}
  By Lemma~\ref{lem:elimination-step}, 
  $\tuple{h,f}$ is a stuttering simulation from $G'$ to $H'$. 
  It follows that for every non-root vertex $v$, we have
  $\barr{w_{G'}(r_G,v)}{f}{\barr{H'}{f(r_G)}{h(v)}}$.
  Assuming that $\vec{v}$ is admissible, we have
  $w_{G'}(r_H,v) = \sem{G}(v)$ and thus
  \[\sem{G}(v) = \barr{w_{G'}(r_G,v)}{f}{w_{H'}(r_H,h(v))} = \sem{H}(h(v))\ . \]  

  It remains to show that $\vec{v}$ is admissible.   Suppose that $v_{i,j} \in L_G(v_{k,\ell})$---we must show that $\tuple{i,j}$ precedes $\tuple{k,\ell}$ in lexicographic order.  Since $h$ preserves loops, we must have $u_i = f(v_{i,j}) \in L_H(f(v_{k,\ell}))  = L_H(u_k)$.  Since $u_1,\dots,u_n$ is admissible, we must have $i \leq k$.  If $i < k$, then we have the result; if $i = k$, then since $v_{i,1},\dots v_{i,m_i}$ is admissible, we must have
  $j \leq \ell$.
\end{proof}

\subsection*{The Category of Flow Graphs}

This section justifies Theorem~\ref{thm:rpa} as a robustness property in the sense of Section~\ref{sec:robustness}, by showing that $\mathbf{G}_{\mathbf{A}}$ is indeed a category.

\begin{lemma} \label{lem:local-cycle-preservation}
  Let $G$ and $H$ be flow graphs, and let $h : V_G \rightarrow V_H$ be loop-preserving.  Then for any vertex $v$ and local cycle $\pi$ of $v$, $h(\pi)$ is a local cycle.
\end{lemma}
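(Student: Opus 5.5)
The plan is to take $h(\pi)$ to mean the image path $r(\pi)$ under a witness $r$ of the stuttering simulation: each edge of $\pi$ is sent either to the edge $\tuple{h(u),h(u')}$ of $H$ or, if stuttering, to the empty path at $h(u)=h(u')$. I would check the two defining conditions of a local cycle of $h(v)$ separately: that $h(\pi)$ is a cycle based at $h(v)$, and that $h(v)$ dominates every vertex on it.

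\emph{Step 1: $h(\pi)$ is a closed path at $h(v)$.} I would argue by induction on the length of $\pi$ that $r(\pi')$ is a path in $H$ from $h(\src(\pi'))$ to $h(\dst(\pi'))$ for every prefix $\pi'$ of $\pi$. The inductive step is exactly the two cases of the stuttering simulation definition: a non-stuttering edge $\tuple{u,u'}$ appends the edge $\tuple{h(u),h(u')}\in E_H$, and a stuttering edge appends nothing while $h(u)=h(u')$. Since $\src(\pi)=\dst(\pi)=v$, it follows that $r(\pi)$ starts and ends at $h(v)$.

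\emph{Step 2: domination.} Every vertex $u$ of $r(\pi)$ has the form $h(u_0)$ for some vertex $u_0$ of $\pi$. Because $\pi$ is a local cycle of $v$, each such $u_0$ lies in $L_G(v)$; this includes $v$ itself whenever $\pi$ is non-trivial. By \textit{(Loop membership)}, $h(u_0)\in L_H(h(v))$. Every element of $L_H(h(v))$ lies on a local cycle of $h(v)$ and is therefore dominated by $h(v)$. Hence $h(v)$ dominates every vertex along $r(\pi)$, so $r(\pi)$ is a local cycle of $h(v)$.

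\emph{Main obstacle: degeneracy.} The difficulty is a non-trivial $\pi$ whose image could collapse entirely under stuttering, which is precisely the phenomenon the loop-preservation conditions are meant to exclude. To rule it out, I would decompose $\pi$ into primitive local cycles of $v$ by cutting at each intermediate visit to $v$. \textit{(Consistent unrolling)} then gives each primitive piece an image with $h(v)$-length $n\ge 1$, so $r(\pi)$ is non-empty and genuinely passes through $h(v)$. If the intended reading of ``local cycle'' allows the trivial path, this step is unnecessary, and Steps 1--2 already conclude the proof.
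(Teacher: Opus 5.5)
Your proposal is correct and follows essentially the same route as the paper. The paper also treats the fact that $h(\pi)$ is a cycle at $h(v)$ as immediate, and proves domination exactly as in your Step~2: by \textit{(Loop membership)}, together with the fact that $h(v)$ dominates every vertex of $L_H(h(v))$. Your extra non-degeneracy step is something the paper leaves implicit, and it is sound provided you fix $r$ from the outset to be the witness supplied by \textit{(Consistent unrolling)}, rather than an arbitrary witness.
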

\begin{proof}
  It is easy to see that $h(\pi)$ is a cycle of $v$; we need only to show that each vertex in $h(\pi)$ is dominated by $v$.  For each vertex $z$ along $h(\pi)$, there is some corresponding vertex $u$ along $\pi$ such that $h(u) = z$.  Since $\pi$ is a local cycle, we have $u \in L_G(v)$.  Since $h$ is loop-preserving, we have
  $z = h(u) \in L_H(h(v))$.  Since $z$ appears on a local cycle of $f(v)$ in $H$, $h(v)$ must dominate $z$ in $H$.  Since this holds for all $z$ along $h(\pi)$, $h(\pi)$ is a local cycle of $h(v)$.
  \end{proof}

\begin{proposition}
  $\mathbf{G}_\mathbf{A}$ forms a category---i.e., arrows are closed under composition.
\end{proposition}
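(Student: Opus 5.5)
The identity on a reducible flow graph $G$ over $\proj{A}^{-1}(X)$ will be $\tuple{1_{V_G},1_X}$. Composition of $\tuple{h_1,f_1} : G \rightarrow H$ and $\tuple{h_2,f_2} : H \rightarrow K$ will be $\tuple{h_2 \circ h_1, f_2 \circ f_1}$. Associativity and the unit laws are then inherited from $\mathbf{Set} \times \mathbf{B}$. What remains is to check that identities are loop-preserving stuttering simulations and that composites are too.

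\emph{Identities.} Every edge is simulated by itself, using the \textit{Identity} property of $\barr{}{1}{}$. Loop membership is trivial. For consistent unrolling I take the witness $r = $ identity and $n = 1$: a primitive local cycle of $v$ leaves $v$ exactly once, namely at its first edge. Non-nesting is a concern. With $u = v$, the literal condition would forbid any vertex that has a local cycle. I would therefore read it, as the proofs above use it, for distinct $u,v$ with $h(u)=h(v)$; for the identity $h$ this case is vacuous.

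\emph{Stuttering simulation for composites.} Let $\tuple{u,v} \in E_G$ and take witnesses $r_1, r_2$. There are three cases.
\begin{itemize}
\item If $\tuple{h_1 u, h_1 v} \in E_H$ and $\tuple{h_2h_1u, h_2h_1v} \in E_K$, compose the two $\barr{}{}{}$ facts via \textit{Composition}.
\item If the $H$-edge stutters under $h_2$, use $\barr{w_H}{f_2}{1}$ together with the composed fact.
\item If the $G$-edge stutters, so that $\barr{w_G(u,v)}{f_1}{1}$, use that $1$ is a concrete functor, which gives $\barr{1}{f_2}{1}$, and compose.
\end{itemize}
The witness is $r \defeq \hat r_2 \circ r_1$, where $\hat r_2$ extends $r_2$ homomorphically to paths. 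Loop membership composes directly.

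\emph{Non-nesting for composites.} Suppose $h_2h_1(u) = h_2h_1(v)$ and $u \in L_G(v)$. If $h_1 u = h_1 v$, we are done by non-nesting of $h_1$. Otherwise loop membership gives $h_1 u \in L_H(h_1 v)$, so non-nesting of $h_2$ gives $L_H(h_1 u)=\emptyset$. Now suppose, for contradiction, that $L_G(u) \neq \emptyset$. Take a primitive local cycle $\pi$ of $u$. By consistent unrolling, $|r_1(\pi)|_{h_1 u} = n \geq 1$, so $r_1(\pi)$ is a non-empty cycle at $h_1 u$. By Lemma~\ref{lem:local-cycle-preservation} (applied through loop membership), it is local. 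Hence $L_H(h_1u) \neq \emptyset$, a contradiction.

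\emph{Consistent unrolling for composites.} This is the main obstacle. Let $\pi$ be a primitive local cycle of $v$. Then $r_1(\pi)$ is a local cycle of $h_1 v$ with exactly $n_1$ edges leaving $h_1 v$. Splitting it at its visits to $h_1 v$ gives $n_1$ cycles $\rho_1,\dots,\rho_{n_1}$. Each of these is primitive, since it does not pass through $h_1 v$ internally. Each is local, since its vertices lie on a local cycle of $h_1v$ and are therefore dominated by $h_1v$. Consistent unrolling for $h_2$ then gives $|\hat r_2(\rho_j)|_{h_2h_1 v} = n_2$. By additivity of $|\cdot|$ and of $\hat r_2$ over concatenation, $|r(\pi)|_{h_2h_1v} = n_1 n_2 \geq 1$, and this value is independent of $\pi$.

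The care needed here is in the bookkeeping. First, the split is performed on $r_1(\pi)$ in $H$, not on $\pi$. Second, I must check that edges of $\hat r_2(\rho_j)$ with source $h_2h_1v$ are counted exactly once per piece. Third, the degenerate case where $r_1(\pi)$ collapses cannot arise, because $n_1 \geq 1$.
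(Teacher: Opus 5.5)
Your proof is correct and follows the paper's argument: componentwise composition, a case split on whether the edge stutters under $h_1$ or under $h_2$, a case split on $h_1(u)=h_1(v)$ for non-nesting, and splitting $r_1(\pi)$ at its visits to $h_1(v)$ into $n_1$ primitive local cycles to get $n_1 n_2$ for consistent unrolling. Beyond the paper, you also check identities, correctly noting that non-nesting must be read for distinct $u,v$ (which is how the paper's own composition argument uses it), and you justify $\barr{1}{f_2}{1}$ in the stuttering case, which the paper leaves implicit; your route to $L_H(h_1 u)\neq\emptyset$ through consistent unrolling works, but loop membership applied to $u \in L_G(u)$ gives it directly.
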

\begin{proof}
  Let $G_1$, $G_2$, and $G_3$ be flow graphs, and let $\tuple{h_1,f_1} \in \mathbf{G}_{\mathbf{A}}(G_1,G_2)$ and $\tuple{h_2,f_2} \in \mathbf{G}_{\mathbf{A}}(G_2,G_3)$ be arrows.  We must show that
  $\tuple{h_2 \circ h_1,f_2 \circ f_1}$ is a loop-preserving stuttering simulation from $G_1$ to $G_3$.

  \begin{itemize}
  \item Simulation: Let $\tuple{u,v}$ be an edge in $G_1$.  If $h_1(u) = h_1(v)$ and $\barr{w_{G_1}(u,v)}{f_1}{1}$, then
  $h_2(h_1(u)) = h_2(h_1(v))$ and $\barr{w_{G_1}(u,v)}{f_1}{1}$.
  Otherwise, we must have $\tuple{h_1(u),h_1(v)} \in E_{G_2}$ and
  $\barr{w_{G_1}(u,v)}{f_1}{w_{G_2}(h_1(u),h_1(v))}$.
  Since $\tuple{h_2,f_2}$ is a stuttering simulation,
  either 
  (1) $h_2(h_1(u)) = h_2(h_1(v))$ and $\barr{w_{G_2}(h_1(u),h_1(v))}{f_2}{1}$ (in which case
  $\barr{w_{G_1}(u,v)}{f_2 \circ f_1}{1}$), or
  $\tuple{h_2(h_1(u)), h_2(h_1(v))} \in E_{G_3}$ and
  $\barr{w_{G_2}(h_1(u),h_1(v))}{f_2}{w_{G_3}(h_2(h_1(u)), h_2(h_1(v)))}$
  (in which case 
  $\barr{w_{G_1}(u,v)}{f_2 \circ f_1}{w_{G_3}(h_2(h_1(u)), h_2(h_1(v)))}$).
  \item Loop-preserving: the fact that for every $u,v \in V_{G_1}$ such that $u \in L_{G_1}(v)$ implies 
  $h_2(h_1(u)) \in L_{G_3}(h_2(h_1(v)))$ is immediate. 

  Suppose that there is some $u,v \in V_G$ with $h_2(h_1(u)) = h_2(h_1(v))$ and $u \in L_{G_1}(v)$.  We must show that $L_{G_1}(u) = \emptyset$.  For a contradiction, suppose that $z \in L_{G_1}(u)$.  Since $h_1$ is loop-preserving, we have $h_1(z) \in L_{G_2}(h_1(u))$.  Since $h_2$ is loop-preserving and we have
  $h_2(h_1(u)) = h_2(h_1(v))$, $h_1(u) \in L_{G_2}(h_1(v))$, and
  $L_{G_1}(h_1(u))$ non-empty we must have $h_1(u) = h_1(v)$.  This contradicts the assumption that $h_1$ is loop-preserving, since $h_1(u) = h_1(v)$, $u \in L_{G_1}(v)$, and $L_{G_1}(u)$ is non-empty.

  Suppose that $\pi$ and $\pi'$ are primitive local cycles of some vertex $v$.  
  Since $h_1$ is loop-preserving,
  we have $|r_1(\pi)|_{h_1(v)} = |r_1(\pi')|_{h_1(v)}$, and thus we may decompose each cycle into the same number of primitive cycles:
  \begin{align*}
  r_1(\pi) &= \tau_1 \tau_2 \dots \tau_k\\
  r_1(\pi') &= \tau_1' \tau_2' \dots \tau_k'
  \end{align*}
  By Lemma~\ref{lem:local-cycle-preservation}, 
  each $\tau_i$ and $\tau_i'$ is a local cycle of $h_1(v)$.  Since $\tuple{h_2,f_2}$ is loop-preserving, there is some natural number $t$ such that each primitive local cycle $\rho$ of $h_1(v)$ has $|r_1(\pi)|_{h_2(h_1(v))} = t$.  Then we have
  that
  $|r_2(r_1(\pi))|_{h_2(h_1(v))} = tk = |r_2(r_1(\pi'))|_{h_2(h_1(v))}$. \qedhere
  \end{itemize}
\end{proof}

%%% Local Variables:
%%% TeX-master: "main"
%%% End:

\end{document}